\theoremstyle{remark}
\newtheorem{definition}{\indent Definition}
\newtheorem{lemma}{\indent \emph{\textbf{Lemma}}}
\newtheorem{theorem}{\indent \emph{\textbf{Theorem}}}
\newtheorem{corollary}{\indent \textbf{\emph{Corollary}}}
\def\proof{\indent{\em Proof}.}
\begin{document}

\makeatletter
\newcommand{\ud}{\mathrm{d}}
\newcommand{\rmnum}[1]{\romannumeral #1}
\newcommand{\Rmnum}[1]{\expandafter\@slowromancap\romannumeral #1@}
\newcommand{\udots}{\mathinner{\mskip1mu\raise1pt\vbox{\kern7pt\hbox{.}}
        \mskip2mu\raise4pt\hbox{.}\mskip2mu\raise7pt\hbox{.}\mskip1mu}}
\makeatother

\preprint{APS/123-QED}

\title{Block-encoding based quantum algorithm for linear systems with displacement structures}
\author{Lin-Chun Wan}
\affiliation{State Key Laboratory of Networking and Switching Technology, Beijing University of Posts and Telecommunications, Beijing, 100876, China}
\affiliation{State Key Laboratory of Cryptology, P.O. Box 5159, Beijing, 100878, China}
\author{Chao-Hua Yu}
\affiliation{The School of Information Management, Jiangxi University of Finance and Economics, Nanchang, 330032, China}
\author{Shi-Jie Pan}
\affiliation{State Key Laboratory of Networking and Switching Technology, Beijing University of Posts and Telecommunications, Beijing, 100876, China}
\author{Su-Juan Qin}
\email{qsujuan@bupt.edu.cn}
\affiliation{State Key Laboratory of Networking and Switching Technology, Beijing University of Posts and Telecommunications, Beijing, 100876, China}
\author{Fei Gao}
\email{gaof@bupt.edu.cn}
\affiliation{State Key Laboratory of Networking and Switching Technology, Beijing University of Posts and Telecommunications, Beijing, 100876, China}
\author{Qiao-Yan Wen}
\email{wqy@bupt.edu.cn}
\affiliation{State Key Laboratory of Networking and Switching Technology, Beijing University of Posts and Telecommunications, Beijing, 100876, China}

\date{\today}

\begin{abstract}
Matrices with the displacement structures of circulant, Toeplitz, and Hankel types as well as matrices with structures generalizing these types are omnipresent in computations of sciences and engineering.  In this paper, we present efficient and memory-reduced quantum algorithms for solving linear systems with such structures by devising a new approach to implement the block-encodings of these structured matrices. More specifically, by decomposing $n\times n$ dense matrices into linear combinations of displacement matrices, we first deduce the parameterized representations of the matrices with displacement structures so that they can be treated similarly. With such representations, we then construct $\epsilon$-approximate block-encodings of these structured matrices in two different data access models, i.e., the black-box model and the QRAM data structure model. It is shown the quantum linear system solvers based on the proposed block-encodings provide a quadratic speedup with respect to the dimension over classical algorithms in the black-box model and an exponential speedup in the QRAM data structure model. In particular, these linear system solvers subsume known results with significant improvements and also motivate new instances where there was no specialized quantum algorithm before.  As an application, one of the quantum linear system solvers is applied to the linear prediction of time series, which justifies the claimed quantum speedup is achievable for problems of practical interest.
\end{abstract}
%
%
%
%
%
\pacs{Valid PACS appear here}
\maketitle
\section{Introduction}
Quantum technologies have shown their significant influence in communication and computing. On the one hand, many quantum cryptographic protocols have been proposed for protecting the security and privacy \cite{bennett2014quantum,wei2020error,xu2021novel,gao2019quantum,fitzsimons2017private}. On the other hand, quantum computing which makes use of quantum mechanical principles, such as superposition and entanglement, shows tremendous potential that outperforms the conventional computing in time complexity in solving many problems, including Boolean function computing \cite{he2018exact,chen2020characterization}, matrix computing \cite{eq1,zhao2021compiling,gilyen2019quantum} and machine learning \cite{yu2016quantum,yu2019quantumprincipal,yu2019improved,pan2020improved,alg}.

Matrices encountered in practical computations often have some special structures, and many problems can be transformed into solving linear systems with structures. Among various matrix structures, the ones of circulant, Toeplitz, Hankel types and their generalization called circulant/Toeplitz/Hankel-like are best-known and well-studied \cite{kailath1999fast,pan2001structured,peller2012hankel}. As of now, some efficient quantum algorithms have been proposed for solving linear systems with these displacement structures. The quantum algorithm for the Poisson equation is the earliest work (as far as we know) on solving Toeplitz linear systems involving a specific kind of banded Toeplitz matrices \cite{pos}. In 2016, a quantum algorithm for solving sparse circulant systems was proposed by Mahasinghe and Wang \cite{Tb}. Whereafter, the quantum algorithm for solving circulant systems with the bounded spectral norm was presented in Ref. \cite{zhou2017efficient}, where no assumption on the sparseness is demanded. By constructing associated circulant matrices, Wan et al \cite{wan2018asymptotic} proposed a quantum algorithm for solving Toeplitz systems in Wiener class (defined in Sec.\ref{seciiic}), which is an asymptotic quantum algorithm of which the error is related to the dimension of the Toeplitz matrices.

The quantum algorithms introduced above can achieve excellent performance under certain circumstances and are powerful for solving large families of problems. However, these algorithms followed different ideas and employed different techniques that the quantum algorithm for linear systems with a specific displacement structure is not inspiring for other types. Moreover, it is intractable to generalize these algorithms to solve linear systems with the same type of generalized structures. Then, an interesting question is can we design quantum algorithms for linear systems with various types of displacement structures in a unified way. A unified treatment of such structured matrices can often provide conceptual and computational benefits and may also give insight into finding some new applicable instances. Combined with the method of block-encoding, we answer the question in the affirmative.

A block-encoding of a matrix $\boldsymbol M$ is a unitary $\boldsymbol U$ that encodes $\boldsymbol M/\alpha$ as its top left block, where $\alpha\geq \|\boldsymbol M\|$ is a scaling factor. Given a way to implement block-encodings of some matrices, many operations on the matrices can be done, including linear system solving \cite{gilyen2019quantum,low2019hamiltonian,chakraborty2019power}. Nevertheless, it is worth noting the complexity of the quantum algorithm for linear systems based on block-encodings has a linear dependence on scaling factors, and implementing block-encodings with preferred scale factors often requires ingenious design. Although there are some methods to implement the block-encodings for several specific matrices, such as sparse matrices \cite{gilyen2019quantum,low2019hamiltonian}, density operators \cite{low2019hamiltonian,van2019improvements}, POVM operators \cite{van2019improvements}, Gram matrices \cite{gilyen2019quantum}, matrices stored in a quantum-accessible data structure \cite{chakraborty2019power,kerenidis2020quantum}, directly applying the methods mentioned can not give rise to appealing quantum linear system solvers for the matrices with displacement structures. Exploiting the structures of such matrices to implement their block-encodings with favorable scaling factors and less cost of time, space or memory deserves specialized study.

In this paper, we devise an approach that decomposes $n \times n$ dense matrices into linear combinations of unitaries (LCU), and implement block-encodings of matrices with displacement structures following the idea of LCU lemma \cite{Childs2017Quantum}. The proposed block-encodings can give rise to efficient quantum algorithms for a set of linear systems with displacement structures, including the linear systems, such as Toeplitz systems in Wiener class and circulant systems with the bounded spectral norm, whose quantum algorithms can be improved by our method, and the linear systems without specialized quantum algorithm before, such as some Toeplitz/Hankel-like linear systems. More specifically, the main contributions of this paper are as follows:

\begin{itemize}
\item[1)]
We first deduce parameterized representations of $n\times n$ dense matrices by decomposing them into linear combinations of unitaries, which provide a way that the structured matrices of interest can be represented and treated similarly. The proposed LCU decompositions possess several desirable features for implementing block-encodings. (i) The elementary components unitaries are displacement matrices that can be easily implemented; (ii)  The decomposition coefficients are the elements of the displacement of the decomposed matrices that can be easily calculated; (iii) For the structured matrices of interest, the number of decomposed items is roughly $O(n)$. Especially, this decomposition method provides a representation with $2n-1$ parameters for Toeplitz or Hankel matrices, which is optimal in terms of the number of parameters.
\end{itemize}

\begin{itemize}
\item[2)]
Based on the proposed LCU decompositions, we then construct efficient quantum circuits in two different data access models commonly used in various quantum algorithms, i.e., the black-box model and the QRAM data structured model, to implement the $\epsilon$-approximate block-encodings of matrices with displacement structures. If a matrix is given in the QRAM data structure model, it will often lead to a low-complexity construction. Otherwise, the construction scheme in the black-box model may be adopted since it requires less on matrix storage. In both models, we implement block-encodings of which the scaling factors are proportional to the $l_{1}$-norm $\chi$ of the displacement of the structured matrices. For the structured matrices with small $\chi$, the linear system solvers based on the proposed block-encodings provide a quadratic speedup with respect to the dimension over classical algorithms in the black-box model and an exponential speedup in the QRAM data structure model.
\end{itemize}




\begin{itemize}
\item[3)]
We show that many matrices with displacement structures that are frequently encountered in various practical problems can harness the potential advantages of the proposed constructions. With the quantum linear system solver developed in the block-encoding framework, we obtain the following algorithms. (i) A quantum algorithm for Toeplitz linear systems in the Wiener class, which is an exact algorithm that the error is independent of the dimension of the Toeplitz matrices. It positively answers the open question raised in \cite{wan2018asymptotic} and can provide computational benefits when rigorous precision is required. (ii) A quantum algorithm for circulant linear systems with the bounded spectral norm, providing a quadratic improvement in the dependence on the condition number and an exponential improvement in the dependence on the precision over the quantum algorithm proposed in \cite{zhou2017efficient}. (iii) An efficient quantum algorithm for linear systems with Toeplitz/Hankel-like structures. In particular, we also obtain a quantum algorithm for banded Toeplitz/Hankel linear systems without the use of any black box or QRAM, which may be more convenient when constructing practical quantum circuits. At last, we show that the proposed quantum algorithm for Toeplitz linear systems can be used in linear prediction of time series, which provides a concrete example to illustrate that the quantum speedup is practically achievable.
\end{itemize}

\section{Preliminaries}

\subsection{The Matrices With Displacement Structure}
The matrices with displacement structures arise pervasively in many contexts.  Below, we display three popular classes of such structured matrices and their generalizations, which are also our focus in this paper.

A Toeplitz matrix $\boldsymbol T_n$ is a matrix of size $n\times n$ whose elements along each diagonal are constants. More clearly,
\begin{equation}\label{eq:defineT}
\boldsymbol T_n =
\left( \begin{array}{ccccc}
t_{0} & t_{-1} & t_{-2} & \cdots & t_{-(n-1)}\\
t_{1} & t_{0} & t_{-1} & \ddots & \vdots \\
t_{2} & t_{1} & t_{0} & \ddots & t_{-2}  \\
\vdots & \ddots & \ddots & \ddots & t_{-1}\\
\ t_{(n-1)} & \cdots & t_{2} & t_{1} & t_{0}\\
\end{array} \right),
\end{equation}
where $t_{i,k} = t_{i-k}$, $\boldsymbol T_n$ is determined by the sequence $\{t_j\}_{j=-(n-1)}^{n-1}.$

There is a common special case of Toeplitz matrix called circulant matrix whose every row is a right cyclic shift of the row above it:
\begin{equation}
\boldsymbol C_n =
\left( \begin{array}{ccccc}
c_{0}       & c_{n-1}     & c_{n-2}      & \cdots      & c_{1}\\
c_{1}   & c_{0}     & c_{n-1}      & \cdots    & c_{2}\\
c_{2}   &   c_{1}     & \ddots     & \ddots      &  \vdots  \\
\vdots      &           & \ddots     &             & c_{n-1}\\
\ c_{n-1}     & \cdots    &            & c_{1}   & c_{0}\\
\end{array} \right).
\end{equation}
Since the circulant matrix has some fantastic properties, it has been studied in specialty no matter in the classical or quantum setting.

Another representative class of matrices with displacement structures is the Hankel matrix. A matrix $\boldsymbol H_n$ is called Hankel matrix if it has the form
\begin{equation}
\boldsymbol H_n =
\left( \begin{array}{ccccc}
h_{0}       & h_{1}     & h_{2}      & \cdots      & h_{n-1}\\
h_{1}       & h_{2}     &            & \udots      & h_{n}\\
h_{2}       &           & \udots     & \udots      &  \vdots  \\
\vdots      & h_{n-1}   & \udots     &            & h_{2n-3}\\
\ h_{n-1}   & h_{n}     & \cdots     & h_{2n-3}    & h_{2n-2}\\
\end{array} \right).
\end{equation}
The entry $h_{i,k}(i,k=0,1,\cdots,n-1)$ of $\boldsymbol H_n$ is equal to $h_{i+k}$ for given sequence $\{h_j\}_{j=0}^{2n-2}$.  In other words, the skew-diagonals of a Hankel matrix are constants.

There are some natural generalizations of these structured matrices called Toeplitz/Hankel-like matrices (circulant-like matrices are regarded as a special case of Toeplitz-like matrices). A matrix is said to be Toeplitz/Hankel-like if there are a few elements on some diagonals/skew-diagonals of the matrix that are not equal to the others.

The computations with these structured matrices, both Toeplitz/Hankel matrices and Toeplitz/Hankel-like matrices, are widely applied in the various areas of sciences and engineering. For example, in time series analysis, the covariance matrices of weakly stationary processes are Toeplitz matrices, see \cite{Toe}. The visual tracking framework of \cite{henriques2014high} requires a base sample image to generate multiple virtual samples which correspond to circulant matrices. The solvability of certain classical interpolation problems is connected with Hankel matrices, see \cite{peller2012hankel}. The numerical solutions of some partial differential equations with mixed boundary conditions can be obtained by solving the Toeplitz-like linear systems generated by the discretization of the finite difference method, see \cite{smith1985numerical}. Other applications involve polynomial computations \cite{pan2001structured}, image restoration \cite{CGT}, machine learning \cite{ye2018deep}, compressed sensing\cite{haupt2010toeplitz}, and so on.

Compared to general matrices, the structures in these matrices can be exploited to perform algebraic operations, such as matrix-vector multiplication, inversion, and matrix exponential, with much less running time and memory space.  There are a number of classical methods that have been presented to solve the linear systems with such structures \cite{CGT}. However, the time complexity of these methods is $\Omega(n)$, it is still a hard task to tackle the problems with very large $n$ on a classical computer.
\subsection{The Framework of Block-encodings}
In this section, we review the framework of block-encodings introduced in  \cite{low2019hamiltonian,chakraborty2019power}.

\begin{definition}[Block-encoding]\label{def:Block-encoding}
Suppose that $\boldsymbol M$ is an $s$-qubit operator, $\alpha, \epsilon\in \mathbb{R}^+$ and $a\in \mathbb{N}$. Then we say that the $(s + a)$-qubit unitary $\boldsymbol U$ is an $(\alpha;a;\epsilon)$-block-encoding of $\boldsymbol M$, if
\begin{equation}
 \|\boldsymbol M-\alpha(\langle0|^{\otimes a}\otimes \boldsymbol I)\boldsymbol U(|0\rangle^{\otimes a}\otimes \boldsymbol I)\|\leq\epsilon.
\end{equation}
\end{definition}

Given a block-encoding $\boldsymbol U$ of a matrix $\boldsymbol M$, one can produce the state $\boldsymbol M|\psi\rangle/\|\boldsymbol M|\psi\rangle\|$ by applying $\boldsymbol U$ to a initial state $|0\rangle|\psi\rangle$. Low and Chuang \cite{low2019hamiltonian} presented Hamiltonian simulation algorithm under the framework of block-encodings by combining techniques qubitization and quantum signal processing, which can simulate sparse Hamiltonians with  optimal complexity. Taking this Hamiltonian simulation algorithm as a subroutine, Chakraborty et al. \cite{chakraborty2019power} developed several useful tools within the block-encoding framework such as singular value estimation and quantum linear system solver. In fact, they also point out that one can implement any smooth function of a Hamiltonian when given a block-encoding of this Hamiltonian by using the techniques developed in \cite{van2017quantum}. Furthermore, the method of block-encoding has been applied to the study of machine learning, and many quantum algorithms have been presented such as quantum clustering algorithm \cite{kerenidis2019q}, quantum classification algorithm \cite{shao2019quantum} and quantum algorithms for semidefinite programming problems \cite{van2019improvements,kerenidis2020quantumACM}.

Although the block-encoding can be applied to various algorithms for various computational problems, we will narrow our goal to the detail study of solving linear systems, and then analyze the improvements brought by the method proposed in this paper. Here we describe an informal version of the complexity result of the quantum algorithm for linear systems in the framework of block-encoding. Given a $(\alpha; a; \epsilon)$-block-encoding $\boldsymbol U$ of $\boldsymbol M$, there is a quantum algorithm that produces a state that is  $\epsilon$-close to $\boldsymbol M^{-1}|b\rangle/\|\boldsymbol M^{-1}|b\rangle\|$ in time $O(\kappa_{\boldsymbol M }(\alpha(a+T_U)\textrm{log}\frac{1}{\epsilon}+T_b))$,  where  $\kappa_{\boldsymbol M}$ is the condition number of $\boldsymbol M$, $T_U$ is the running time to implement $\boldsymbol U$, $T_b$  is the running time to prepare the state $|b\rangle$.  This result suggests that one need to efficiently construct block-encodings with small scaling factors.
\subsection{The Data Access Model}\label{seciic}


We now specify the data access model involved in the proposed quantum algorithms for solving linear system $\boldsymbol M\boldsymbol x=\boldsymbol b$.



For the right-hand-side vector $\boldsymbol b$, a unitary that produces the quantum state $|b\rangle=\sum_ib_i|i\rangle/\|\sum_ib_i|i\rangle\|$ is required in the quantum linear system solving algorithm. It is shown that some specialized algorithms can be used to generate $|b\rangle$ efficiently under certain conditions \cite{prepare,prepare1}, or our quantum algorithm may be used as a subroutine while $|b\rangle$ can be prepared by another part of a larger quantum algorithm.  Alternatively, if an efficiently-implementable state preparation procedure can not be provided, we assume $\boldsymbol b$ can be accessed in the same way as the coefficient matrix, which will be introduced below.




For the coefficient matrix $\boldsymbol M$,  we are given two different data access models which are most commonly used in various quantum algorithms. In the first data access model, the elements of matrix $\boldsymbol M\in\mathbb{C}^{n\times n}$ are accessed by a black box $\mathcal{O}_{\boldsymbol M}$ acting as
\begin{equation}
  \mathcal{O}_{\boldsymbol M}|i\rangle|k\rangle|0\rangle=|i\rangle|k\rangle|m_{i,k}\rangle\qquad i,k=0,1,\cdots,n-1.
\end{equation}
This model is often referred to as the black-box model \cite{berry2012black,sanders2019black}. The access operation can be made efficiently when $m_{i,k}$ are efficiently computable or a QRAM is provided.

Kerenidis and Prakash \cite{kerenidis2016quantum} introduced a different data access model called the QRAM data structure model. This data access model stores data in QRAM with a binary tree structure and allows access in superposition. When the data is given as an $n\times n$ matrix,  the matrix is stored in the binary trees by rows, and an additional binary tree is required to store the norms of the rows. Obviously, the memory requirement and the complexity of constructing this data structure must be $O(n^2)$.
Although many quantum algorithms in this model do not take the complexity of constructing data structure, reducing memory requirement and thereby reducing the complexity of constructing data structure has many practical implications.
\section{Methods and results}

\subsection{LCU Decomposition of Matrices}

In this section, we deduce parameterized representations of $n\times n$ matrices by decomposing them into linear combinations of unitaries, where the unitaries used as elementary components are easy to implement and the decomposition coefficients are easy to calculate. Without loss of generality, we assume that $n$ is always a power of two.

For a better understanding, we first introduce some necessary background information about matrix displacement.

\begin{definition} \label{definition:1}
For a given pair of operator matrices $(\boldsymbol{A},\boldsymbol{B})$, and a matrix $\boldsymbol M \in\mathbb{C}^{n\times n}$, the linear displacement operators $\mathcal{L}(\boldsymbol M):\mathbb{C}^{n\times n}\longmapsto \mathbb{C}^{n\times n}$ of Stein type is defined by:
\begin{equation}
\mathcal{L}(\boldsymbol M)=\Delta_{\boldsymbol A,\boldsymbol B}[\boldsymbol M]=\boldsymbol M-\boldsymbol {AMB},
\end{equation}
 and that of Sylvester type is defined by:
\begin{equation}
\mathcal{L}(\boldsymbol M)=\nabla_{\boldsymbol A,\boldsymbol B}[\boldsymbol M]=\boldsymbol {AM}-\boldsymbol{MB}.
\end{equation}
\end{definition}

The image $\mathcal{L}(\boldsymbol M)$ of the operator $\mathcal{L}$ is called the displacement of the matrix $\boldsymbol M$. According to the specific structure of the matrix $\boldsymbol M$, one can instantiate the operator matrices $\boldsymbol A$ and $\boldsymbol B$ with desirable properties. Here, for our purposes, we introduce one of the customary choices of $\boldsymbol A$ and $\boldsymbol B$, the unit $f$-circulant matrix $\boldsymbol Z_f$, which we will define next.
\begin{definition}[unit $f$-circulant Matrix]
For a real-valued scalar $f$, an $n\times n$ unit $f$-circulant matrix is defined as follows,
\begin{equation}\label{f-unit-Circulant matrix}
\begin{aligned}
\boldsymbol Z_f=\left(
  \begin{array}{cccc}
    0 &0 & \cdots & f \\
    1 & 0 & \cdots & 0 \\
    \vdots & \vdots & \vdots & \vdots \\
    0 & \cdots & 1 & 0 \\
  \end{array}
\right).
\end{aligned}
\end{equation}
\end{definition}

It is easy to verify that $\boldsymbol Z_1,\boldsymbol Z_{-1}$ are unitary matrices, as well as $\boldsymbol Z_1^i,\boldsymbol Z_{-1}^i,i=0,1,\ldots n-1$. By inverting the displacement operators with operator matrices $(\boldsymbol Z_{1},\boldsymbol Z_{-1})$, we then demonstrate how to decompose an $n\times n$ matrix as linear combinations of unitaries.

\begin{theorem}\label{decomposition theorem}
Let $\boldsymbol M\in \mathbb{C}^{n\times n}$, $m_{i,k}$ be the $k$-th element of the $i$-th row of $\boldsymbol M$, $\begin{small}\boldsymbol J=\left(
         \begin{array}{ccc}
         &  & 1 \\
         & \iddots &  \\
         1 &  &  \\
         \end{array}
         \right)
         \end{small}$
is the reversal matrix, and
\begin{equation}
g(k):=\left\{\begin{array}{cc}
0 &\quad k=0,1,2\cdots n-2, \\
1 & \qquad\qquad k=n-1,\\
\end{array}\right.
\end{equation}
then $\boldsymbol M$ can be decomposed as:\\
i)
\begin{equation}
  \boldsymbol M=\frac{1}{2}\sum_{i,k=0}^{n-1}\hat{m}_{i,k}\boldsymbol Z_{1}^{i}\boldsymbol J\boldsymbol Z_{-1}^{n-1-k},
\end{equation}
where $\hat{m}_{i,k}=m_{i,k}-(-1)^{g(k)}m_{(i-1)\mathrm{mod}\,n,(k+1)\mathrm{mod}\,n}$ is  the $k$-th element of the $i$-th row of matrix $\Delta_{\boldsymbol Z_1,\boldsymbol Z_{-1}}[\boldsymbol M]$.\\
ii)
\begin{equation}\label{slym}
   \boldsymbol M=\frac{1}{2}\sum_{i,k=0}^{n-1}\tilde{m}_{i,k}\boldsymbol Z_{1}^{i}\boldsymbol Z_{-1}^{n-1-k},
\end{equation}
where $\tilde{m}_{i,k}=m_{(i-1)\mathrm{mod}\,n,k}-(-1)^{g(k)}m_{i,(k+1)\mathrm{mod}\,n}$ is the $k$-th element of the $i$-th row of matrix $\nabla_{\boldsymbol Z_1, \boldsymbol Z_{-1}}[\boldsymbol M]$.
\end{theorem}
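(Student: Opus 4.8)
The plan is to read both decompositions as the \emph{inversion} of the corresponding displacement operators, and then to rewrite the resulting inverse in the stated basis of products of $f$-circulant matrices and the reversal matrix $\boldsymbol J$. I will carry out part i) in detail; part ii) is entirely analogous. As a preliminary bookkeeping step I would compute the entries of $\boldsymbol Z_1\boldsymbol M\boldsymbol Z_{-1}$ directly: writing the standard basis as $\{e_j\}_{j=0}^{n-1}$, the matrix $\boldsymbol Z_f$ acts by $\boldsymbol Z_f e_j=e_{j+1}$ for $j<n-1$ and $\boldsymbol Z_f e_{n-1}=f\,e_0$, so that $(\boldsymbol Z_1)_{i,c}=[c=(i-1)\bmod n]$ and $(\boldsymbol Z_{-1})_{d,k}=(-1)^{g(k)}[d=(k+1)\bmod n]$, where $[P]$ is $1$ when $P$ holds and $0$ otherwise. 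This immediately gives $(\boldsymbol Z_1\boldsymbol M\boldsymbol Z_{-1})_{i,k}=(-1)^{g(k)}m_{(i-1)\bmod n,(k+1)\bmod n}$, which confirms the stated formula $\hat m_{i,k}=m_{i,k}-(-1)^{g(k)}m_{(i-1)\bmod n,(k+1)\bmod n}$ for the entries of $\Delta_{\boldsymbol Z_1,\boldsymbol Z_{-1}}[\boldsymbol M]=\boldsymbol M-\boldsymbol Z_1\boldsymbol M\boldsymbol Z_{-1}$, and explains why the factor $(-1)^{g(k)}$ and the cyclic index shifts appear at all: they are exactly the wrap-around data of $\boldsymbol Z_{\pm1}$.

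Next I would establish invertibility of the displacement operator. Since $\boldsymbol Z_1^n=\boldsymbol I$ and $\boldsymbol Z_{-1}^n=-\boldsymbol I$ (each full cycle of $\boldsymbol Z_{-1}$ contributes one sign flip), the conjugation map $\mathcal{T}(\boldsymbol X):=\boldsymbol Z_1\boldsymbol X\boldsymbol Z_{-1}$ satisfies $\mathcal{T}^n=-\,\mathrm{id}$. Writing $\Delta_{\boldsymbol Z_1,\boldsymbol Z_{-1}}=\mathrm{id}-\mathcal{T}$ and using the telescoping identity $(\mathrm{id}-\mathcal{T})\sum_{p=0}^{n-1}\mathcal{T}^p=\mathrm{id}-\mathcal{T}^n=2\,\mathrm{id}$, the operator $\Delta_{\boldsymbol Z_1,\boldsymbol Z_{-1}}$ is invertible with inverse $\tfrac12\sum_{p=0}^{n-1}\mathcal{T}^p$. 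Applying this to $\boldsymbol D:=\Delta_{\boldsymbol Z_1,\boldsymbol Z_{-1}}[\boldsymbol M]$ yields the reconstruction $\boldsymbol M=\tfrac12\sum_{p=0}^{n-1}\boldsymbol Z_1^p\,\boldsymbol D\,\boldsymbol Z_{-1}^p$, and also shows where the prefactor $\tfrac12$ comes from.

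It then remains to convert this reconstruction into the stated basis. Expanding $\boldsymbol D=\sum_{i,k}\hat m_{i,k}\boldsymbol E_{i,k}$ in matrix units $\boldsymbol E_{i,k}=e_ie_k^{T}$ and exchanging the sums, the theorem reduces to the single building-block identity $\sum_{p=0}^{n-1}\boldsymbol Z_1^p\boldsymbol E_{i,k}\boldsymbol Z_{-1}^p=\boldsymbol Z_1^i\boldsymbol J\boldsymbol Z_{-1}^{n-1-k}$. I would prove this by evaluating both sides on basis vectors. A direct shift computation gives $\boldsymbol Z_1^p\boldsymbol E_{i,k}\boldsymbol Z_{-1}^p=(-1)^{[b>k]}\boldsymbol E_{(i+k-b)\bmod n,\,b}$ where $b=(k-p)\bmod n$, so the left-hand side has exactly one nonzero entry in each column $b$, equal to $(-1)^{[b>k]}$ and located on the anti-diagonal $\{(a,b):a+b\equiv i+k\ (\mathrm{mod}\ n)\}$; tracking the three factors $\boldsymbol Z_{-1}^{n-1-k}$, $\boldsymbol J$, $\boldsymbol Z_1^i$ in turn shows the right-hand side acts by $e_j\mapsto(-1)^{[j>k]}e_{(i+k-j)\bmod n}$, which is precisely the same matrix. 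Substituting the identity back produces $\boldsymbol M=\tfrac12\sum_{i,k}\hat m_{i,k}\boldsymbol Z_1^i\boldsymbol J\boldsymbol Z_{-1}^{n-1-k}$, establishing part i).

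For part ii) the same scheme applies to the Sylvester operator: factoring $\nabla_{\boldsymbol Z_1,\boldsymbol Z_{-1}}[\boldsymbol M]=\boldsymbol Z_1\bigl(\boldsymbol M-\boldsymbol Z_1^{-1}\boldsymbol M\boldsymbol Z_{-1}\bigr)$ reduces it to a Stein-type operator $\mathrm{id}-\mathcal{S}$ with $\mathcal{S}(\boldsymbol X)=\boldsymbol Z_1^{-1}\boldsymbol X\boldsymbol Z_{-1}$, which again satisfies $\mathcal{S}^n=-\,\mathrm{id}$, and the analogous building-block identity $\sum_{p}\boldsymbol Z_1^{-(p+1)}\boldsymbol E_{i,k}\boldsymbol Z_{-1}^p=\boldsymbol Z_1^i\boldsymbol Z_{-1}^{n-1-k}$ (equivalently, a direct entrywise check using $\boldsymbol Z_1^i\boldsymbol Z_{-1}^{n-1-k}e_j=(-1)^{[j>k]}e_{(i+j-k-1)\bmod n}$) yields the claimed formula with coefficients $\tilde m_{i,k}$. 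I expect the main obstacle to be purely the modular and sign bookkeeping in the building-block identity, and in particular matching the boundary case $k=n-1$, where the single full wrap of $\boldsymbol Z_{-1}$ supplies the extra minus sign encoded by $g(k)$; once the Iverson-bracket form of each factor is pinned down, the anti-diagonal supports and signs coincide term by term and the claim follows.
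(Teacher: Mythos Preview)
Your proposal is correct, and both parts go through as you outline; the building-block identities $\sum_{p}\boldsymbol Z_1^{p}\boldsymbol E_{i,k}\boldsymbol Z_{-1}^{p}=\boldsymbol Z_1^{i}\boldsymbol J\boldsymbol Z_{-1}^{n-1-k}$ and $\sum_{p}\boldsymbol Z_1^{-(p+1)}\boldsymbol E_{i,k}\boldsymbol Z_{-1}^{p}=\boldsymbol Z_1^{i}\boldsymbol Z_{-1}^{n-1-k}$ check out entrywise exactly as you describe.

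The paper's proof and yours share the first half: both use $\boldsymbol Z_1^{n}=\boldsymbol I$, $\boldsymbol Z_{-1}^{n}=-\boldsymbol I$ together with the telescoping identity to invert the Stein operator and obtain $\boldsymbol M=\tfrac12\sum_{p=0}^{n-1}\boldsymbol Z_1^{p}\,\Delta_{\boldsymbol Z_1,\boldsymbol Z_{-1}}[\boldsymbol M]\,\boldsymbol Z_{-1}^{p}$, and both reduce the Sylvester case to the Stein case via $\nabla_{\boldsymbol Z_1,\boldsymbol Z_{-1}}=\boldsymbol Z_1\Delta_{\boldsymbol Z_1^{-1},\boldsymbol Z_{-1}}$. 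Where they diverge is in how the reconstruction is rewritten in the target basis. The paper passes through the classical $f$-circulant formalism: it writes the displacement as a rank factorization $\boldsymbol G\boldsymbol H^{T}$, invokes the known representation $\boldsymbol M=\tfrac12\sum_j\boldsymbol Z_1(\boldsymbol g_j)\boldsymbol Z_{-1}(\boldsymbol J\boldsymbol h_j)$ in terms of $f$-circulant matrices, expands each $\boldsymbol Z_{\pm1}(\cdot)$ as a polynomial in $\boldsymbol Z_{\pm1}$, multiplies out, and then recognizes $\sum_j g_j^{i}h_j^{k}$ as the $(i,k)$ entry of the displacement. You instead expand the displacement directly in matrix units $\boldsymbol E_{i,k}$ and prove a single building-block identity by evaluating on basis vectors. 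Your route is more elementary and self-contained (no $f$-circulant machinery, no auxiliary rank decomposition), while the paper's route situates the result inside the standard displacement-structure toolkit, which is what it later exploits for Toeplitz, Hankel and their ``like'' generalizations.
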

\proof \: See Appendix A.

We call these two decompositions Stein type and Sylvester type, respectively.  From this theorem, using the displacement matrices $\{\boldsymbol J,\boldsymbol Z_{1}^{i},\boldsymbol Z_{-1}^{i},\,i=0,1,\cdots, n-1\}$ as the elementary components, one can decompose an $n\times n$ matrix into linear combinations of these simple unitaries, and the decomposition coefficients are the elements of the displacement of the matrix which can be easily calculated. The proposed LCU decompositions actually provide a way to parameterize the decomposed matrices, that is, we can use the elements of $\mathcal{L}(\boldsymbol M)$ as a parameterized representation of  $\boldsymbol M$.
\subsection{Implement Block-encodings of the matrices with displacement structures}\label{seciiib}
In this section, we will show that the Toeplitz/Hankel matrices and  their generalizations can generate elegant parameterized representations with nearly $O(n)$ parameters when associated with operator matrices $(\boldsymbol Z_{1},\boldsymbol Z_{-1})$. Furthermore, we will illustrate how to implement the block-encodings of such matrices in detail.

Taking the Toeplitz matrices as an example, we compute their Sylvester displacement first:
\begin{equation}\label{sylofT}
\begin{aligned}
  &\nabla_{\boldsymbol Z_1, \boldsymbol Z_{-1}}[\boldsymbol T_n]\\
                           &=\left(
                               \begin{array}{cccc}
                                 t_{n-1}-t_{-1} & t_{n-2}-t_{-2} & \cdots & 2t_{0}         \\
                                 0              & 0              & \cdots & t_{-(n-1)}+t_{1}\\
                                 \vdots         & \vdots         & \vdots & \vdots         \\
                                 0              & 0              & \cdots & t_{-1}+t_{n-1} \\
                               \end{array}
                             \right).\\
\end{aligned}
\end{equation}
Then $\boldsymbol T_n$ can be decomposed into a linear combination of unitaries as follow,
\begin{small}
\begin{equation}\label{decT}
\begin{aligned}
   \boldsymbol T_n
           &=\frac{1}{2}\big[2t_0 \boldsymbol I+(t_{1}+t_{-(n-1)}) \boldsymbol Z_{1}^{1}+\ldots+(t_{n-1}+t_{-1}) \boldsymbol Z_{1}^{n-1}\\
           &+(t_{1}-t_{-(n-1)}) \boldsymbol Z_{-1}^{1}+\ldots+(t_{n-1}-t_{-1}) \boldsymbol Z_{-1}^{n-1}\big].
\end{aligned}
\end{equation}
\end{small}
We would like to emphasize that since the Toeplitz matrices are represented with $2n-1$ parameters, this decomposition should be optimal on the number of items.

To implement the block-encodings of Toeplitz matrices from Eq.(\ref{decT}), we define two state preparation operators as follows,
\begin{equation}\label{stapreT1}
\begin{aligned}
&\boldsymbol V_{(\nabla[\boldsymbol T_n])}|0\rangle=|V_{(\nabla[ \boldsymbol T_n])}\rangle\\
&=\frac{1}{\sqrt{\chi_{\boldsymbol T_n}}}\sum_{j=0}^{n-1}\sqrt{t_j+t_{-[(n-j)\,\textrm{mod}\,n]}}|j\rangle\\
&+\frac{1}{\sqrt{\chi_{\boldsymbol T_n}}}\sum_{j=n}^{2n-1}\sqrt{t_{(j-n)}-t_{-[(2n-j)\,\textrm{mod}\,n]}}|j\rangle\\
&\equiv\frac{1}{\sqrt{\chi_{\boldsymbol T_n}}}\sum_{j=0}^{2n-1}\sqrt{\tilde{t}_j}|j\rangle,
\end{aligned}
\end{equation}
\begin{equation}\label{stapreT2}
\boldsymbol V_{(\nabla[\boldsymbol T_n]^{\ast})}|0\rangle=|V_{(\nabla[ \boldsymbol T_n]^{\ast})}\rangle=\frac{1}{\sqrt{\chi_{\boldsymbol T_n}}}\sum_{j=0}^{2n-1}\sqrt{\tilde{t}_{j}^{\ast}}|j\rangle,
\end{equation}
where $\chi_{\boldsymbol T_n}=\sum_{j=0}^{2n-1}|\tilde{t}_{j}|$, and the square root operation takes the main square root of $\tilde{t}_{j}$ and $\tilde{t}^{\ast}_{j}$. Then, we define a controlled unitary
\begin{equation}\label{controluT}
\textrm{select} \boldsymbol U_{T_n}=\sum_{j=0}^{n-1}|j\rangle\langle j|\otimes \boldsymbol Z_1^{j}+\sum_{j=n}^{2n-1}|j\rangle\langle j|\otimes \boldsymbol Z_{-1}^{j-n}
\end{equation}
Since $\sqrt{\tilde{t}_{j}}(\sqrt{\tilde{t}^{\ast}_{j}})^\ast=\tilde{t}_{j}$, it is easy to verify that
\begin{equation}
\boldsymbol T_n=\frac{\chi_{\boldsymbol T_n}}{2}(\langle0|( \boldsymbol V^\dagger_{(\nabla [\boldsymbol T_n]^{\ast})}\otimes \boldsymbol I) \textrm{select}\boldsymbol U_{T_n}\boldsymbol (V_{(\nabla [\boldsymbol T_n])}\otimes \boldsymbol I)|0\rangle ),
\end{equation}
which means that $\boldsymbol V^\dagger_{(\nabla [\boldsymbol T_n]^{\ast})} \textrm{select}\boldsymbol U_{T_n}\boldsymbol V_{(\nabla [\boldsymbol T_n])}$ is a block-encoding of $\boldsymbol T_n$.

Implementing the block-encoding of $\boldsymbol T_n$ is to implement $\boldsymbol V_{(\nabla[\boldsymbol T_n])}$, $\boldsymbol V_{(\nabla[\boldsymbol T_n]^{\ast})}$ and $\textrm{select} \boldsymbol U_{T_n}$.  For quantum state preparation operators, we need to construct a quantum circuit that prepare $|V_{(\nabla[ \boldsymbol T_n])}\rangle$ reversibly. In the black-box model, this can not be done by the traditional black-box quantum state preparation \cite{grover2000synthesis} because its success probability can not be increased arbitrarily close to certainty, which will make the final error uncontrollable. We provide a reversible algorithm called steerable black-box quantum state preparation, based on fixed-point amplitude amplification \cite{yoder2014fixed}, which can prepare a quantum state with the success probability increasing arbitrarily close to certainty. Since it will be called multiple times as a key subroutine of our algorithm and may be of independent interest to other quantum algorithms, we make a formal statement here.



\begin{lemma}
For a vector $\boldsymbol x\in \mathbb{C}^{n\times1}$, $|\boldsymbol x|_{\textrm{max}}=\max_{i}|\sqrt{x_i}|$ is a small constant, and the elements are given by a black box $\mathcal{O}_{x}$  acting as
\begin{equation}
  \mathcal{O}_x|i\rangle|0\rangle\rightarrow|i\rangle|x_i\rangle.
\end{equation}
Then, the steerable black-box quantum state preparation algorithm generate a state, up to a global phase, that is $\epsilon_p$-approximation of
\begin{equation}
  |x\rangle=\frac{1}{\sqrt{\|\boldsymbol x\|_1}}\sum_{i=0}^{n-1}\sqrt{x_i}|i\rangle
\end{equation}
with success probability at least $1-\delta^2$, using $O\Big(\frac{\sqrt{n}\mathrm{log}(1/\delta)}{\sqrt{\|\boldsymbol x\|_1}}\Big)$ queries of $\mathcal{O}_x$ and additional $O\Big(\frac{\sqrt{n}\mathrm{log}(1/\delta)}{\sqrt{\|\boldsymbol x\|_1}}\mathrm{polylog}(\frac{n}{\epsilon_p\sqrt{\|\boldsymbol x\|_1}})\Big)$ elementary gates.
\end{lemma}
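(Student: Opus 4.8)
The plan is to first design a single-query \emph{preparation operator} $\boldsymbol A$ whose output contains the target state in a flagged subspace with a known (small) amplitude, and then boost that amplitude to near-unity by fixed-point amplitude amplification rather than the ordinary variant. Concretely, I would start from $\log n$ Hadamards to create the uniform superposition $\frac{1}{\sqrt n}\sum_i|i\rangle|0\rangle|0\rangle$, apply $\mathcal{O}_x$ to obtain $\frac{1}{\sqrt n}\sum_i|i\rangle|x_i\rangle|0\rangle$, and then perform a rotation on a one-qubit flag register, controlled on $|x_i\rangle$, that maps $|0\rangle\mapsto \frac{\sqrt{x_i}}{c}|0\rangle+\sqrt{1-\tfrac{|x_i|}{c^2}}|1\rangle$, where $c$ is a constant with $c\geq|\boldsymbol x|_{\textrm{max}}$ so that this column is a legal unitary. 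The complex square root would be handled by splitting the rotation into a real $\arcsin$ of $|\sqrt{x_i}|/c$ followed by a phase $\tfrac12\arg x_i$. Finally I would uncompute the data register with $\mathcal{O}_x^\dagger$. The flag-$|0\rangle$ component of the resulting state is exactly $\frac{1}{c\sqrt n}\sum_i\sqrt{x_i}|i\rangle|0\rangle|0\rangle$, i.e. $\sqrt p\,|x\rangle$ with $p=\|\boldsymbol x\|_1/(c^2 n)$, and this uses $O(1)$ queries of $\mathcal{O}_x$.

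Next I would amplify the flagged subspace. The subtlety --- and the reason the construction has to be \emph{steerable} --- is that plain amplitude amplification rotates the overlap sinusoidally and overshoots unless $p$ is known exactly; as a reversible subroutine with an uncontrolled residual it would be unusable inside a block-encoding. Instead I would invoke fixed-point amplitude amplification \cite{yoder2014fixed}, which converges monotonically toward the good subspace. Its guarantee is that $L=O\big(\frac{1}{\sqrt p}\log(1/\delta)\big)$ alternating applications of $\boldsymbol A,\boldsymbol A^\dagger$ (interleaved with the appropriate phased reflections) leave at most amplitude $\delta$ outside the good subspace, hence success probability at least $1-\delta^2$. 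Substituting $p=\|\boldsymbol x\|_1/(c^2 n)$ with $c=|\boldsymbol x|_{\textrm{max}}=O(1)$ gives $L=O\big(\frac{\sqrt n\log(1/\delta)}{\sqrt{\|\boldsymbol x\|_1}}\big)$, and since each round makes $O(1)$ oracle calls, this reproduces the stated query count (the output agreeing with $|x\rangle$ up to a global phase introduced by the iteration).

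Finally I would account for gates and finite precision. The only non-elementary ingredient is the controlled rotation: from the bits of $x_i$ one must compute the angle $\arcsin(|\sqrt{x_i}|/c)$ and the phase $\tfrac12\arg x_i$ by reversible arithmetic, to enough bits that the per-query state deviation is $O(\epsilon_p)$; this costs $\mathrm{polylog}(n/(\epsilon_p\sqrt{\|\boldsymbol x\|_1}))$ gates per round, and multiplying by $L$ yields the claimed elementary-gate bound. I expect the main obstacle to be the error bookkeeping rather than the circuit design: one must show the fixed-point iteration is stable under the per-round rotation error, i.e. that $L$ imperfect rounds accumulate to a total deviation $O(\epsilon_p)$ from the ideal $|x\rangle$, and pin down how the target accuracy $\epsilon_p$ (together with $n$ and $\|\boldsymbol x\|_1$, which set $L$ and hence the amplification gain) propagates back into the required arithmetic precision and thus into the $\mathrm{polylog}$ factor. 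Relatedly, care is needed that the $\arcsin$/phase evaluation stays well-conditioned for the smallest entries $x_i$, so that controlling only $|\boldsymbol x|_{\textrm{max}}$ (rather than a minimum entry) genuinely suffices.
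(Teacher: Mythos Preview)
Your proposal is essentially the paper's own proof: the same single-query preparation operator (Hadamards $\to$ $\mathcal{O}_x$ $\to$ controlled rotation with $c=|\boldsymbol x|_{\max}$ $\to$ uncompute), followed by Yoder--Low--Chuang fixed-point amplitude amplification, with the error analysis tracking how the rotation precision is amplified by the gain to fix the arithmetic depth. The paper's error bookkeeping is slightly coarser than you anticipate (it bounds the initial-state deviation as $O(\sqrt n\,\epsilon_r)$ and argues this is amplified by the factor $1/\sqrt{P_0}$, rather than summing round by round), but the outcome matches.

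There is one ingredient you have not supplied, and it is exactly the point that makes the construction ``steerable''. Fixed-point amplitude amplification needs a \emph{known} lower bound $P_{\min}\le p$ in order to choose the iteration count $L$ and the phase sequence $\{\phi_j,\varphi_j\}$; you simply substitute $p=\|\boldsymbol x\|_1/(c^2 n)$ as if $\|\boldsymbol x\|_1$ were given, but in the black-box model it is not. The paper closes this gap by first running amplitude estimation on the same preparation operator $\boldsymbol A$ to constant relative accuracy (taking $\epsilon_0=1/2$), at cost $O(1/\sqrt p)=O(\sqrt n/\sqrt{\|\boldsymbol x\|_1})$ queries, and using $P'_0/(1+\epsilon_0)$ as $P_{\min}$. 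This preprocessing is absorbed into the stated query bound, but without it the phase schedule cannot be set and the algorithm cannot be run; you should add it to your argument.
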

\proof \: See Appendix B.

The operators defined by Eqs. (\ref{stapreT1}), (\ref{stapreT2}) and (\ref{controluT}) are actually the operators of LCU circuit \cite{Childs2017Quantum} which has been used in many quantum algorithms \cite{chakraborty2019power,van2017quantum,berry2015simulating, gui2009allowable}. Here, we implement this circuit with two different data access models introduced in Sec.\ref{seciic}.  The method in the QRAM data structure model is especially useful for the structured matrices whose displacements have been stored in the data structure. And the method in the black-box model will have a wider range of applications because of the flexibility of its implementation. We summarize the results as follows.



\begin{theorem}\label{the2}
Let $\boldsymbol T_n\in \mathbb{C}^{n\times n}$ be a Toeplitz matrix. (i) If the elements of $\boldsymbol T_n$ are provided by a black box  $\mathcal{O}_{\boldsymbol T_n}$, i.e.,
\begin{equation*}
  \mathcal{O}_{\boldsymbol T_n}|i\rangle|k\rangle|0\rangle=|i\rangle|k\rangle|t_{i,k}\rangle,
\end{equation*}
one can implement a $(\chi_{\boldsymbol T_n}/2;\mathrm{log}(n)+2;\epsilon)$-block-encoding of $\boldsymbol T_n$ with $O\Big(\frac{\sqrt{n}\mathrm{log}(\chi_{\boldsymbol T_n}/\epsilon)}{\sqrt{\chi_{\boldsymbol T_n}}}\Big)$ uses of $\mathcal{O}_{\boldsymbol T_n}$ and additionally using $O\Big(\frac{\sqrt{n}}{\sqrt{\chi_{\boldsymbol T_n}}}\mathrm{polylog}(\frac{n\chi_{\boldsymbol T_n}}{\epsilon})\Big)$ elementary gates. (ii) If the nonzero elements of Sylvester displacement of $\boldsymbol T_n$, i.e., $\{\tilde{t}_j\}_{j=0}^{2n-1}$ are stored in the QRAM data structure as shown in lemma \ref{datastructureT}, one can  implement a $(\chi_{\boldsymbol T_n}/2;\mathrm{log}(n)+1;\epsilon)$-block-encoding of $\boldsymbol T_n$ with gate complexity $O(\mathrm{polylog}(n\chi_{\boldsymbol T_n}/\epsilon))$ and memory cost $O(n)$.
\end{theorem}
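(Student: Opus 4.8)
The plan is to build on the factorization already exhibited just before the theorem, namely that $\boldsymbol V^\dagger_{(\nabla[\boldsymbol T_n]^{\ast})}\,\textrm{select}\boldsymbol U_{T_n}\,\boldsymbol V_{(\nabla[\boldsymbol T_n])}$ is, when the state-preparation operators are implemented exactly, an $(\chi_{\boldsymbol T_n}/2;\,\log(n)+1;\,0)$-block-encoding of $\boldsymbol T_n$. The proof therefore reduces to three tasks: (a) implementing the select operator of Eq.(\ref{controluT}); (b) implementing the two state-preparation operators of Eqs.(\ref{stapreT1}) and (\ref{stapreT2}) in each of the two data-access models; and (c) propagating the state-preparation errors into the block-encoding precision $\epsilon$ while collecting the gate, query, ancilla, and memory counts.

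For (a), I would note that $\boldsymbol Z_1^{j}$ is the $j$-fold cyclic shift and $\boldsymbol Z_{-1}^{j}$ is the same shift dressed with a sign flip on the wrapped-around components, so $\textrm{select}\boldsymbol U_{T_n}$ is a controlled cyclic shift by the integer held in the index register, together with a conditional phase correction. Such a controlled modular shift is realized by $O(\mathrm{polylog}(n))$ elementary gates in both models, and it acts on the $\log(n)+1$ index qubits common to the two statements, since the decomposition coefficients in Eq.(\ref{decT}) are indexed by $j\in\{0,\ldots,2n-1\}$.

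For (b) in the black-box case (i), the state $|V_{(\nabla[\boldsymbol T_n])}\rangle=\frac{1}{\sqrt{\chi_{\boldsymbol T_n}}}\sum_j\sqrt{\tilde{t}_j}|j\rangle$ of Eq.(\ref{stapreT1}) is exactly of the form handled by the steerable black-box quantum state preparation lemma, with the input vector being the Sylvester-displacement entries $\{\tilde{t}_j\}$ and $\|\boldsymbol x\|_1=\chi_{\boldsymbol T_n}$. Since each $\tilde{t}_j$ is a fixed sum or difference of two entries of $\boldsymbol T_n$, a black box $\mathcal{O}_x$ returning $\tilde{t}_j$ is built from $O(1)$ queries to $\mathcal{O}_{\boldsymbol T_n}$ plus reversible arithmetic, and $\boldsymbol V_{(\nabla[\boldsymbol T_n]^{\ast})}$ is obtained identically from $\tilde{t}_j^{\ast}$. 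Invoking the lemma yields the $O\big(\sqrt{n}\log(1/\delta)/\sqrt{\chi_{\boldsymbol T_n}}\big)$ query count and the accompanying gate count, and the single flag qubit demanded by the underlying fixed-point amplitude amplification is exactly the extra ancilla that raises $\log(n)+1$ to the $\log(n)+2$ of statement (i). For case (ii) I would instead invoke the QRAM data-structure preparation lemma (Lemma~\ref{datastructureT}): storing only the $O(n)$ displacement parameters $\{\tilde{t}_j\}$ in the binary-tree structure lets one prepare $|V_{(\nabla[\boldsymbol T_n])}\rangle$ and its conjugate partner in $O(\mathrm{polylog}(n))$ gates with no amplification flag, which accounts for both the reduced $\log(n)+1$ ancilla count and the $O(n)$ memory, the decisive saving over the $O(n^2)$ needed for a generic matrix.

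The remaining and most delicate step is (c). Because the scaling factor is $\chi_{\boldsymbol T_n}/2$, an operator-norm deviation of size $\eta$ in either $\boldsymbol V$ inflates to roughly $\chi_{\boldsymbol T_n}\eta$ in the quantity $\|\boldsymbol T_n-\tfrac{\chi_{\boldsymbol T_n}}{2}(\langle0|\otimes\boldsymbol I)\,\cdots\,(|0\rangle\otimes\boldsymbol I)\|$ of Definition~\ref{def:Block-encoding}; the identity $\sqrt{\tilde{t}_j}(\sqrt{\tilde{t}_j^{\ast}})^{\ast}=\tilde{t}_j$ guarantees that the leading term reproduces $\boldsymbol T_n$ exactly, so only the preparation errors survive. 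Setting the approximation parameter $\epsilon_p$ and the failure parameter $\delta$ of the steerable lemma to $\mathrm{poly}(\epsilon/\chi_{\boldsymbol T_n})$ then forces this deviation below $\epsilon$ while turning $\log(1/\delta)$ into $O(\log(\chi_{\boldsymbol T_n}/\epsilon))$, reproducing the stated query and gate complexities; in the QRAM case the same bound is met by carrying $O(\mathrm{polylog}(n\chi_{\boldsymbol T_n}/\epsilon))$ bits of precision in the stored data. I expect this error bookkeeping---in particular, showing that the flag qubit of the fixed-point amplification can be uncomputed so that the composite circuit is a genuine unitary whose top-left block deviates from $\boldsymbol T_n/(\chi_{\boldsymbol T_n}/2)$ by at most $\epsilon$---to be the main obstacle, since it is where the non-deterministic preparation of the black-box model must be reconciled with the exactness required by Definition~\ref{def:Block-encoding}.
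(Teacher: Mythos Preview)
Your proposal is correct and follows essentially the same route as the paper's proof in Appendix~C: build a black box for $\tilde t_j$ from $O(1)$ calls to $\mathcal{O}_{\boldsymbol T_n}$ and reversible arithmetic, invoke the steerable black-box state preparation lemma (respectively the QRAM lemma) for $\boldsymbol V_{(\nabla[\boldsymbol T_n])}$ and $\boldsymbol V_{(\nabla[\boldsymbol T_n]^{\ast})}$, implement $\textrm{select}\,\boldsymbol U_{T_n}$ via a quantum modular adder plus a comparator-based phase, and then bound the block-encoding error by $\chi_{\boldsymbol T_n}\delta^2+\chi_{\boldsymbol T_n}\epsilon_p$ before choosing $\delta,\epsilon_p=O(\epsilon/\chi_{\boldsymbol T_n})$. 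The only minor discrepancy is your expectation that the amplification flag must be \emph{uncomputed}: the paper instead keeps it as the extra ancilla $a_2$ (which you already accounted for in the $\log(n)+2$ count) and absorbs the residual $|1\rangle_{a_2}$ amplitude directly into the $\chi_{\boldsymbol T_n}\delta^2$ term of the error bound, so no uncomputation is needed.
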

\proof \: See Appendix C.

Moreover, we show that how to implement the block-encoding of the Toeplitz-like matrices. Let $\boldsymbol T_L\in \mathbb{C}^{n\times n}$ be a Toeplitz-like matrix, $\boldsymbol (T_L)_{i,k}=\tau_{i,k}$, $(\nabla_{\boldsymbol Z_1, \boldsymbol Z_{-1}}[\boldsymbol T_L])_{i,k}=\tilde{\tau}_{i,k}$. We first define two state preparation operators as follows,

\begin{equation}\label{sltl}
\boldsymbol V_{(\nabla[\boldsymbol T_L])}|0\rangle|0\rangle=|V_{(\nabla[ \boldsymbol T_L])}\rangle=\frac{1}{\sqrt{\chi_{\boldsymbol T_L}}}\sum_{i=0}^{n-1}\sum_{k=0}^{n-1}\sqrt{\tilde{\tau}_{i,k}}|i\rangle|k\rangle,
\end{equation}
\begin{equation}\label{sltl2}
\boldsymbol V_{(\nabla[\boldsymbol T_L]^{\ast})}|0\rangle|0\rangle=|V_{(\nabla[ \boldsymbol T_L]^{\ast})}\rangle=\frac{1}{\sqrt{\chi_{\boldsymbol T_L}}}\sum_{i=0}^{n-1}\sum_{k=0}^{n-1}\sqrt{\tilde{\tau}_{i,k}^{\ast}}|i\rangle|k\rangle,
\end{equation}
where $\chi_{\boldsymbol T_L}=\sum_{i=0}^{n-1}\sum_{k=0}^{n-1}|\tilde{\tau}_{i,k}|$, and the square root operation takes the main square root of $\tilde{\tau}_{i,k}$ and $\tilde{\tau}^{\ast}_{i,k}$. Then, we define
\begin{equation}\label{cutl}
 \textrm{select} \boldsymbol U_{T_L}=(\sum_{i=0}^{n-1}|i\rangle\langle i|\otimes \boldsymbol I\otimes \boldsymbol Z_{1}^{i})(\sum_{k=0}^{n-1}\boldsymbol I\otimes |k\rangle\langle k|\otimes \boldsymbol Z_{-1}^{n-1-k}).
\end{equation}
Since $\sqrt{\tilde{\tau}_{i,k}}(\sqrt{\tilde{\tau}^{\ast}_{i,k}})^\ast=\tilde{\tau}_{i,k}$, it is easy to verify that
\begin{equation}
\boldsymbol T_L=\frac{\chi_{\boldsymbol T_L}}{2}(\langle0| (\boldsymbol V^\dagger_{(\nabla [\boldsymbol T_L]^{\ast})}\otimes \boldsymbol I) \textrm{select}\boldsymbol U_{T_L}(\boldsymbol V_{(\nabla [\boldsymbol T_L])}\otimes \boldsymbol I)|0\rangle ),
\end{equation}
which means that $(\boldsymbol V^\dagger_{(\nabla [\boldsymbol T_L]^{\ast})}\otimes \boldsymbol I) \textrm{select}\boldsymbol U_{T_L}(\boldsymbol V_{(\nabla [\boldsymbol T_L])}\otimes \boldsymbol I)$ is a block-encoding of $\boldsymbol T_L$.

It seems difficult to implement the state preparation operators $\boldsymbol V_{(\nabla[\boldsymbol T_L])}$ and $\boldsymbol V_{(\nabla[\boldsymbol T_L]^{\ast})}$ with complexity less than $O(n)$ in the black-box model. However, as shown in Eq.(\ref{sylofT}), the Sylvester displacement of a Toeplitz matrix has non-zero elements only along its first row and last column. For a Toeplitz-like matrix $\boldsymbol T_L$, there are a few elements on some diagonals of the matrix that are not equal to the others. Then, it can be directly verified that the sub-matrix left by deleting the first row and last column of the Sylvester displacement of the Toeplitz-like matrix is a sparse matrix.  Based on this observation, we construct the block-encodings of the Toeplitz-like matrices, and the results are summarized as follows.

\begin{corollary}\label{the3}
Let $\boldsymbol T_L\in \mathbb{C}^{n\times n}$ be a Toeplitz-like matrix. Suppose that the sub-matrix left by deleting the first row and last column of $\nabla_{\boldsymbol Z_1, \boldsymbol Z_{-1}}[\boldsymbol T_L]$ is $(d-1)$-row-sparse, i.e., there are at most $(d-1)$ nonzero elements in each row. (i) If the elements of $\boldsymbol T_L$ are provided by a black box $\mathcal{O}_{\boldsymbol T_L}$, i.e.,
\begin{equation*}
  \mathcal{O}_{\boldsymbol T_L}|i\rangle|k\rangle|0\rangle=|i\rangle|k\rangle|\tau_{i,k}\rangle,
\end{equation*}
and a black box that computes the positions of the distinct elements on diagonals of the Toeplitz-like matrices is provided, one can implement a $(\chi_{\boldsymbol T_L}/2;2\mathrm{log}(n)+2;\epsilon)$-block-encoding of $\boldsymbol T_L$ with $O\Big(\frac{\sqrt{nd}\mathrm{log}(\chi_{\boldsymbol T_L}/\epsilon)}{\sqrt{\chi_{\boldsymbol T_L}}}\Big)$ uses of $\mathcal{O}_{\boldsymbol T_L}$ and additionally using $O\Big(\frac{\sqrt{nd}}{\sqrt{\chi_{\boldsymbol T_L}}}\mathrm{polylog}(\frac{nd\chi_{\boldsymbol T_L}}{\epsilon})\Big)$ elementary gates. (ii) If the nonzero elements of Sylvester displacement of $\boldsymbol T_L$, i.e., $\{\tilde{\tau}_{i,k}\}_{i,k=0}^{n-1}$ are stored in the QRAM data structure as shown in lemma \ref{datastructureTL}, one can implement a $(\chi_{\boldsymbol T_L}/2;2\mathrm{log}(n);\epsilon)$-block-encoding of $\boldsymbol T_L$ with gate complexity $O(\mathrm{polylog}(n\chi_{\boldsymbol T_L}/\epsilon))$ and memory cost $O(dn\mathrm{log}\,n)$.
\end{corollary}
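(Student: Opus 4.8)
The plan is to build the $\epsilon$-approximate block-encoding directly from the identity already established in Eqs.~(\ref{sltl})--(\ref{cutl}), namely that $(\boldsymbol{V}^\dagger_{(\nabla[\boldsymbol{T}_L]^{\ast})}\otimes\boldsymbol{I})\,\textrm{select}\boldsymbol{U}_{T_L}\,(\boldsymbol{V}_{(\nabla[\boldsymbol{T}_L])}\otimes\boldsymbol{I})$ is an exact block-encoding of $\boldsymbol{T}_L$ with scaling factor $\chi_{\boldsymbol{T}_L}/2$. Implementing the block-encoding therefore reduces to (a) realizing the two state-preparation operators $\boldsymbol{V}_{(\nabla[\boldsymbol{T}_L])}$ and $\boldsymbol{V}_{(\nabla[\boldsymbol{T}_L]^{\ast})}$ up to a controllable error, and (b) realizing $\textrm{select}\boldsymbol{U}_{T_L}$ exactly. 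The whole argument mirrors the proof of Theorem~\ref{the2} for the pure Toeplitz case; the only new ingredient is that the index now ranges over a two-dimensional register $|i\rangle|k\rangle$ whose support is sparse, rather than over a single register of dimension $2n$.

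First I would pin down the support of the Sylvester displacement. By the Sylvester formula of Theorem~\ref{decomposition theorem}, $\tilde{\tau}_{i,k}=\tau_{(i-1)\bmod n,k}-(-1)^{g(k)}\tau_{i,(k+1)\bmod n}$, so each $\tilde{\tau}_{i,k}$ is recovered from two entries of $\boldsymbol{T}_L$ using two queries to $\mathcal{O}_{\boldsymbol{T}_L}$. For a genuine Toeplitz matrix this vanishes off the first row and last column (Eq.~(\ref{sylofT})); for a Toeplitz-like matrix the extra nonzeros appear exactly where a diagonal of $\boldsymbol{T}_L$ breaks, and the hypothesis guarantees that the submatrix left after deleting the first row and last column is $(d-1)$-row-sparse. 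Hence $\nabla_{\boldsymbol{Z}_1,\boldsymbol{Z}_{-1}}[\boldsymbol{T}_L]$ has at most $O(dn)$ nonzero entries, distributed over the dense first row, the dense last column, and the $(d-1)$-row-sparse bulk whose nonzero positions are located by the black box for the positions of the distinct diagonal elements. I would then prepare $|V_{(\nabla[\boldsymbol{T}_L])}\rangle$ and $|V_{(\nabla[\boldsymbol{T}_L]^{\ast})}\rangle$ by reducing each to a vector state over its $O(dn)$-element support and invoking the steerable black-box quantum state preparation lemma: composing the sparse-access oracle (built from $\mathcal{O}_{\boldsymbol{T}_L}$ and the diagonal-position black box) with the amplitude formula above yields a black box $\mathcal{O}_x$ returning $\tilde{\tau}_{i,k}$ (and similarly $\tilde{\tau}^{\ast}_{i,k}$). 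Since the effective dimension is $O(dn)$ and $\|\boldsymbol{x}\|_1=\chi_{\boldsymbol{T}_L}$, the lemma prepares each state $\epsilon_p$-approximately with success probability $1-\delta^2$ using $O(\sqrt{dn}\log(1/\delta)/\sqrt{\chi_{\boldsymbol{T}_L}})$ queries and $O\big((\sqrt{dn}/\sqrt{\chi_{\boldsymbol{T}_L}})\,\textrm{polylog}(nd/(\epsilon_p\sqrt{\chi_{\boldsymbol{T}_L}}))\big)$ extra gates. The operator $\textrm{select}\boldsymbol{U}_{T_L}$ factorizes as in Eq.~(\ref{cutl}) into the controlled shifts $\sum_i|i\rangle\langle i|\otimes\boldsymbol{Z}_1^i$ and $\sum_k|k\rangle\langle k|\otimes\boldsymbol{Z}_{-1}^{n-1-k}$, each realizable with $O(\textrm{polylog}\,n)$ gates because $\boldsymbol{Z}_1,\boldsymbol{Z}_{-1}$ are cyclic shifts implementable by controlled modular adders. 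Tracking the two index registers together with the LCU/amplification flags accounts for the $2\log(n)+2$ ancillae, and converting $\epsilon_p,\delta=\textrm{poly}(\epsilon/\chi_{\boldsymbol{T}_L})$ into the final block-encoding error $\epsilon$ fixes the $\log(\chi_{\boldsymbol{T}_L}/\epsilon)$ and $\textrm{polylog}(nd\chi_{\boldsymbol{T}_L}/\epsilon)$ factors, giving the query and gate counts of part~(i).

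For part~(ii) the same block-encoding identity is used, but the $O(dn)$ nonzero entries $\{\tilde{\tau}_{i,k}\}$ of the Sylvester displacement are now stored in the Kerenidis--Prakash binary-tree structure of Lemma~\ref{datastructureTL}. State preparation from this structure is exact up to $O(\textrm{polylog}(n/\epsilon))$ gates and needs no amplitude amplification, while $\textrm{select}\boldsymbol{U}_{T_L}$ is implemented exactly as above; the gate complexity therefore collapses to $O(\textrm{polylog}(n\chi_{\boldsymbol{T}_L}/\epsilon))$, the two index registers alone giving the $2\log(n)$ ancillae. Storing $O(dn)$ entries in the tree with the $O(\log n)$ overhead per leaf yields the $O(dn\log n)$ memory cost.

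I expect the \emph{main obstacle} to be the state-preparation step. To obtain the $\sqrt{dn}$ scaling rather than the trivial $\sqrt{n^2}=n$, I must feed the steerable black-box preparation an effective dimension equal to the sparsity $O(dn)$ instead of the ambient dimension $n^2$, which forces a coherent enumeration of the nonzero positions of $\nabla_{\boldsymbol{Z}_1,\boldsymbol{Z}_{-1}}[\boldsymbol{T}_L]$ built from the diagonal-position black box, while keeping the dense first-row/last-column part and the sparse bulk consistently indexed in the two-register layout. Controlling how the approximate amplitudes produced by fixed-point amplitude amplification propagate through both $\boldsymbol{V}_{(\nabla[\boldsymbol{T}_L])}$ and $\boldsymbol{V}^\dagger_{(\nabla[\boldsymbol{T}_L]^{\ast})}$, so that the two contributions add to at most $\epsilon$ in the block-encoding norm, is the delicate part of the estimate.
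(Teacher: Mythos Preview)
Your proposal is correct and follows essentially the same approach as the paper: build the block-encoding from the LCU identity of Eqs.~(\ref{sltl})--(\ref{cutl}), reduce the state-preparation step to steerable black-box preparation over the $O(dn)$-element support of $\nabla_{\boldsymbol Z_1,\boldsymbol Z_{-1}}[\boldsymbol T_L]$, implement $\textrm{select}\,\boldsymbol U_{T_L}$ via modular arithmetic, and in the QRAM model invoke Lemma~\ref{datastructureTL} directly. The paper's proof resolves the obstacle you single out by an explicit construction of the uniform superposition over the nonzero positions: a flag qubit separates the dense first row from the $(d-1)$-row-sparse bulk, controlled Hadamards of different widths populate each part, the position oracle expands the bulk indices, and two rounds of Long's exact amplitude amplification equalize the weights to produce $|\Phi_{\textrm{init}}\rangle$ before the steerable preparation is applied; this is precisely the ``coherent enumeration'' you anticipated.
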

\proof \: See Appendix D.

\textbf{Remark 1.} One might be confused about the QRAM data structure used in this paper, which stores $\tilde{m}_{i,k}$ instead of $m_{i,k}$ for a matrix $\boldsymbol M$. In fact, in most of quantum algorithms using this data structure, such as \cite{chakraborty2019power,kerenidis2020quantum}, the stored entries are $m_{i,k}^{p},p\in[0,2]$. Since $\tilde{m}_{i,k}$, as defined below Eq. (\ref{slym}), can be calculated as efficiently as $m_{i,k}^{p},p\in[0,2]$, our assumption about such data structure is not stronger than the assumption in the previous algorithms.

\textbf{Remark 2.} The result (i) and result (ii) in theorem 2, as well as corollary 1, use different data access models, where the black-box model queries the elements of structured matrices, while the QRAM data structure model requires the elements of their displacements have been stored. Due to the different data access models, a direct comparison of the complexity of these results is inappropriate. And it is unwise to claim which result is more advantageous based on the complexity.  In practical applications, one should choose the appropriate method according to the pattern the data can be obtained.

For a circulant matrix $\boldsymbol C_n$, computing its Sylvester displacement, the LCU decomposition of $\boldsymbol C_n$ is $\boldsymbol C_n =\sum_{j=0}^{n-1}c_j\boldsymbol Z_1^j$. Similar to the implementation of the block-encodings of the Toeplitz matrices, we can implement block-encodings of $\boldsymbol C_n$. Since the number of decomposed items of the circulant matrices is less than that of the Toeplitz matrices, the resources required to implement these block-encodings are less than those stated in theorem 2. The same conclusion holds for circulant-like matrices.


For a Hankel matrix $\boldsymbol H_n$, it can be decomposed as follows by computing their Stein displacements,
\begin{equation}
\begin{aligned}
 &\boldsymbol H_{n}=\frac{1}{2}[2h_{n-1}\boldsymbol J+(h_{n}+h_{0})\boldsymbol Z_{1}^{1}\boldsymbol J+\cdots\\ &+(h_{2n-2}+h_{n-2})\boldsymbol Z_{1}^{n-1}\boldsymbol J+(h_{n-2}-h_{2n-2})\boldsymbol J\boldsymbol Z_{-1}^{1}\\&+\cdots+(h_{0}-h_{n})\boldsymbol{JZ}_{-1}^{n-1}].
 \end{aligned}
\end{equation}
Note that $\boldsymbol J \boldsymbol Z_{-1}^{i}=-\boldsymbol Z_{-1}^{n-i}\boldsymbol J$, and this decomposition is equivalent to
\begin{equation}
\begin{aligned}
 &\boldsymbol H_{n}=\frac{1}{2}[2h_{n-1}\boldsymbol J+(h_{n}+h_{0})\boldsymbol Z_{1}^{1}\boldsymbol J+\cdots\\ &+(h_{2n-2}+h_{n-2})\boldsymbol Z_{1}^{n-1}\boldsymbol J+(h_{2n-2}-h_{n-2})\boldsymbol Z_{-1}^{n-1}\boldsymbol J\\&+\cdots+(h_{n}-h_{0})\boldsymbol{Z}_{-1}^{1}\boldsymbol J].
 \end{aligned}
\end{equation}
Since $\boldsymbol J=\boldsymbol\sigma_x^{\otimes \textrm{log}n}$ ($\boldsymbol\sigma_x$ is Pauli-X operator), we can implement an $\epsilon$-approximate block-encoding of $\boldsymbol  H_n$ by constructing a quantum circuit similar to the block-encoding implementation of $\boldsymbol  T_n$, where the scaling factor is $\chi'_{\boldsymbol  H_n}/2=\sum_{i=0}^{n-1}\sum_{k=0}^{n-1}|\hat{h}_{i,k}|/2$. Also, the block-encodings of the Hankel-like matrices can be implemented similarly to that of the Toeplitz-like matrices.

In many cases, such as visual tracking \cite{yu2019quantum}, we need to extend the non-Hermitian matrices with displacement structures to Hermitian. For the extended matrices
\begin{equation}\label{extend}
  \overline{\boldsymbol M}=\left(
                        \begin{array}{cc}
                          \boldsymbol 0& \boldsymbol M \\
                          \boldsymbol M^\dagger &\boldsymbol 0 \\
                        \end{array}
                      \right),
\end{equation}
let $\boldsymbol U$ be a $(\chi;a;\epsilon)$-block-encoding of $\boldsymbol M$, then we can implement a $(\chi;a;\epsilon)$-block-encoding of $\overline{\boldsymbol M}$  by using the method of complementing block-encoded matrices \cite{chakraborty2019power}. The cost of implementing this block-encoding is nearly twice the cost of implementing $\boldsymbol U$. Therefore, without losing generality, we can assume that the matrices studied in the following sections are Hermitian.
\subsection{ Quantum Algorithm for Linear Systems with Displacement Structures}\label{seciiic}
As mentioned, given a block-encoding $\boldsymbol U$ of a matrix $\boldsymbol M$, one can perform a number of useful operations on $\boldsymbol M$. In particular, combining the variable-time amplitude amplification technique \cite{ambainis2012variable} and the idea of implementing smooth functions of block-Hamiltonians \cite{van2017quantum}, Chakraborty et al \cite{chakraborty2019power} presented a quantum algorithm for linear systems  within the block-encoding framework. We invoke the complexity of this algorithm as follows.

\begin{lemma} \label{inverse}(Variable-time quantum linear systems algorithm \cite{chakraborty2019power})
Let $\boldsymbol H$ be an $n\times n$ Hermitian matrix, $\lambda_i$ are the nonzero eigenvalues of $\boldsymbol H$ such that $\lambda_i\in[-1,-1/\kappa_{\boldsymbol H}]\cup[1/\kappa_{\boldsymbol H},1]$, where $\kappa_{\boldsymbol H}>2$ is the condition number of $\boldsymbol H$. Suppose that there is a $(\alpha; a; \delta)$-block-encoding $\boldsymbol U$ of $\boldsymbol H$, where $\delta=o(\epsilon/(\kappa_{\boldsymbol H}^2 \mathrm{log}^3(\frac{\kappa_{\boldsymbol H}}{\epsilon})))$, and $\boldsymbol U$ can be implemented in time $T_U$. Also suppose the state $|b\rangle$ can be prepared in time $T_b$. Then there exists a quantum algorithm that produces a state that is  $\epsilon$-close to $\boldsymbol H^{-1}|b\rangle/\|\boldsymbol H^{-1}|b\rangle\|$ in time
\begin{equation*}
  O(\kappa_{\boldsymbol H}(\alpha(a+T_U)\mathrm{log}^2(\frac{\kappa_{\boldsymbol H}}{\epsilon})+ T_b)\mathrm{log}(\kappa_{\boldsymbol H})).
\end{equation*}
\end{lemma}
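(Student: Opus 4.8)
The plan is to build the solver in two layers: first construct a block-encoding whose top-left block approximates a rescaled inverse $\boldsymbol H^{-1}/(2\kappa_{\boldsymbol H})$ from the given $(\alpha;a;\delta)$-block-encoding $\boldsymbol U$ of $\boldsymbol H$, and then extract the normalized state $\boldsymbol H^{-1}|b\rangle/\|\boldsymbol H^{-1}|b\rangle\|$ with an amplification step whose cost scales linearly rather than quadratically in $\kappa_{\boldsymbol H}$. For the first layer I would approximate $f(x)=1/x$ on the spectral domain $[-1,-1/\kappa_{\boldsymbol H}]\cup[1/\kappa_{\boldsymbol H},1]$ by a low-degree odd polynomial (equivalently, a truncated Chebyshev series), exactly as in the smooth-function-of-a-block-Hamiltonian machinery of \cite{van2017quantum}. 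Standard approximation theory furnishes a polynomial $p$ with $\|p-1/x\|\le\epsilon'$ on that domain and degree $d=O(\kappa_{\boldsymbol H}\log(\kappa_{\boldsymbol H}/\epsilon'))$; implementing the suitably rescaled transformation $\tfrac{1}{2\kappa_{\boldsymbol H}}p$ on $\boldsymbol U$ then yields a block-encoding whose top-left block approximates $\boldsymbol H^{-1}/(2\kappa_{\boldsymbol H})$, with the subnormalization $\Theta(\alpha)$ inherited from $\boldsymbol U$ and a further $\Theta(\kappa_{\boldsymbol H})$ from the rescaling of $1/x$.

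Next I would track the error. Each invocation of $\boldsymbol U$ injects an operator error of size $\delta$, and a degree-$d$ polynomial transformation can amplify this by roughly a factor $d$; combined with the approximation error $\epsilon'$ and the $\Theta(\kappa_{\boldsymbol H})$ rescaling of the inverse, the net state error is $O(\kappa_{\boldsymbol H}\,d\,\delta+\kappa_{\boldsymbol H}\epsilon')$. Demanding this be $O(\epsilon)$ is precisely what forces the stated hypothesis $\delta=o\!\big(\epsilon/(\kappa_{\boldsymbol H}^2\log^3(\kappa_{\boldsymbol H}/\epsilon))\big)$ together with $\epsilon'=O(\epsilon/\kappa_{\boldsymbol H})$. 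This bookkeeping is routine but must be carried out carefully so that the powers of the logarithm match.

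For the second layer the real obstacle appears: after applying the approximate inverse block-encoding to $|b\rangle$, the amplitude on the flagged success subspace is only $\Theta(1/(\alpha\kappa_{\boldsymbol H}))$, so ordinary amplitude amplification would need $\Theta(\alpha\kappa_{\boldsymbol H})$ repetitions and, multiplied by the $O(\alpha d)$ cost of one application, would produce a quadratic $\kappa_{\boldsymbol H}^2$-type dependence. The key and hardest step is to replace this with variable-time amplitude amplification \cite{ambainis2012variable}: I would split the spectrum into $O(\log\kappa_{\boldsymbol H})$ dyadic bands $|\lambda|\in[2^{-j-1},2^{-j}]$, invert each band with a gapped polynomial of degree $O(2^{j}\log(\kappa_{\boldsymbol H}/\epsilon))$ controlled by a clock register that halts as soon as the relevant band is resolved, and bound the $l_2$-averaged stopping time. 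The VTAA analysis then shows the total cost is governed by this averaged time rather than the worst case, trading one factor of $\kappa_{\boldsymbol H}$ for $O(\log\kappa_{\boldsymbol H})$ stages plus a logarithmic overhead.

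Finally I would assemble the complexities. A single run of the gapped-inversion circuit makes, in the worst case, $O(\alpha\,d)=O(\alpha\kappa_{\boldsymbol H}\log(\kappa_{\boldsymbol H}/\epsilon))$ uses of $\boldsymbol U$, each costing $a+T_U$, together with $T_b$ to prepare $|b\rangle$; a naive amplification over the $\Theta(1/(\alpha\kappa_{\boldsymbol H}))$ amplitude would then cost $\Theta(\alpha\kappa_{\boldsymbol H}^2)$ applications. Substituting the variable-time bound removes one factor of $\kappa_{\boldsymbol H}$ at the price of the $O(\log\kappa_{\boldsymbol H})$ stages, reproducing the claimed runtime $O\!\big(\kappa_{\boldsymbol H}(\alpha(a+T_U)\log^2(\kappa_{\boldsymbol H}/\epsilon)+T_b)\log\kappa_{\boldsymbol H}\big)$. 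I expect essentially all of the genuine difficulty to sit in the variable-time step, namely correctly defining the clock-controlled inversion stages and proving the averaged-time bound, since the polynomial approximation and the error propagation are standard, whereas extracting the linear-in-$\kappa_{\boldsymbol H}$ factor from VTAA is where the care is required.
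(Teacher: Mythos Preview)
The paper does not prove this lemma at all: it is stated as a direct citation of the variable-time quantum linear systems algorithm from \cite{chakraborty2019power} and is simply invoked as a black-box tool (``We invoke the complexity of this algorithm as follows''). There is therefore no proof in the paper to compare your proposal against.

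That said, your outline is a faithful high-level sketch of how the result is actually established in \cite{chakraborty2019power}: polynomial approximation of $1/x$ on $[-1,-1/\kappa_{\boldsymbol H}]\cup[1/\kappa_{\boldsymbol H},1]$ via the smooth-functions-of-block-Hamiltonians machinery of \cite{van2017quantum}, propagation of the $\delta$ block-encoding error through a degree-$d$ transformation (which is what drives the $\delta=o(\epsilon/(\kappa_{\boldsymbol H}^2\log^3(\kappa_{\boldsymbol H}/\epsilon)))$ hypothesis), and then variable-time amplitude amplification over dyadic spectral bands to shave a $\kappa_{\boldsymbol H}$ down to $\log\kappa_{\boldsymbol H}$. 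If you were asked to reproduce the proof from the original source, your plan is on the right track and correctly identifies the VTAA step as the place where the real work lies; but for the purposes of this paper no proof is expected, only the citation.
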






As mentioned in Sec.\ref{seciic}, there are some methods that can prepare right-hand-side state $|b\rangle$ in time $O(\textrm{polylog}\,n)$ under certain conditions. Even if such an efficient state preparation procedure can not be provided, the complexity of preparing $|b\rangle$ will not exceed the complexity of implementing block-encodings of structured matrices in the same data access model. More specifically, in the black-box model, the query complexity of preparing an $n$-dimensional quantum state is $O(\sqrt{n})$ \cite{grover2000synthesis}. In the QRAM data structured model, one can prepare the quantum state $|b\rangle$ with complexity $O(\textrm{polylog}\,n)$ \cite{kerenidis2016quantum}. Here, following the assumption of previous quantum algorithms \cite{eq1,Childs2017Quantum}, we neglect the error in producing $|b\rangle$ since this error is independent of the design of the quantum algorithm.


Therefore, according to lemma 2, the method proposed in theorem \ref{the2} can induce a quantum algorithm to solve the structured linear systems with complexity (i) $\widetilde{O}(\kappa_{\boldsymbol H}\sqrt{\chi}\sqrt{n}\textrm{polylog}(1/\epsilon))$ in the black-box model (We use the symbol $\widetilde{O}$ to hide redundant poly-logarithmic factors. And since the elementary gates requirement in the black-box model is larger than the query complexity by logarithmic factors, we will not describe them individually from now on.); (ii) $\widetilde{O}(\kappa_{\boldsymbol H}\chi\textrm{poly}\textrm{log}(n/\epsilon))$ in the QRAM data structure model. Obviously, this algorithm is expected to be efficient for matrices of which the $\chi$ is small.

Many matrices with displacement structures encountered in a diverse range of applications satisfy this criteria. One of the typical examples should be Toeplitz matrices in the Wiener class \cite{CGT,Toe}. This kind of matrices are usually obtained by the discretization of some continuous problems. More specifically, let $C_{2\pi}$ be the set of all $2\pi$-periodic continuous real-valued functions defined on $[0,2\pi]$. Let $\boldsymbol T_n$ be the $n\times n$ Toeplitz matrices of which the elements of every diagonal are given by the Fourier coefficients of a function $f\in C_{2\pi}$, i.e,
\begin{equation}
t_j=\frac{1}{2\pi}\int_{0}^{2\pi}\!\!\!f(\lambda)e^{-ij\lambda}\ud\lambda, \qquad j=0,\pm1,\pm2,\cdots.
\end{equation}
The function $f$ is called the generating function of the sequence of Toeplitz matrices $\boldsymbol T_n(1\leq n<\infty)$. The sequence of Toeplitz matrices $\boldsymbol T_n(1\leq n <\infty)$ of which the sequence $\{t_j\}$ is absolutely summable is said to be in the Wiener class.  That is to say, for Toeplitz matrices in Wiener class, there must be a constant $\rho$, such that
\begin{equation}
  \sum_{j=-\infty}^{\infty}|t_j|<\rho.
\end{equation}

Thus, for Toeplitz matrices in the Wiener class, we have
\begin{equation}
\begin{aligned}
  \chi_{\boldsymbol T_n} &=2|t_0|+|t_{1}+t_{-(n-1)}|+\ldots+|t_{n-1}+t_{-1}|\\
      &+|t_{1}-t_{-(n-1)}|+\ldots +|t_{n-1}-t_{-1}|\\
      &\leq 2\sum_{j=-(n-1)}^{n-1}|t_j|<2\rho.
 \end{aligned}
\end{equation}
The complexity of the quantum algorithm for solving the Toeplitz systems in Wiener class is
(i) $\widetilde{O}(\kappa_{\boldsymbol T_n}\sqrt{n}\textrm{polylog}(1/\epsilon))$ in the black-box model; (ii) $\widetilde{O}(\kappa_{\boldsymbol T_n}\textrm{poly}\textrm{log}(n/\epsilon))$ in the QRAM data structure model. When the Toeplitz matrices are well-conditioned (We call a matrix $\boldsymbol M$ well-conditioned of which $\kappa_{\boldsymbol M}\in O(\textrm{polylog}\,n)$.) and $1/\epsilon\in O(\textrm{poly}\,n)$, the quantum algorithm is (i) quadratically faster than the classical methods in the black-box model; (ii) exponentially faster than the classical methods in the QRAM data structure model.

As of now, some work regarding Toeplitz matrices have been studied in the quantum setting. In 2018, Wan et al \cite{wan2018asymptotic} adopted associated circulant matrices to approximate the Toeplitz matrices in Wiener class and solved the circulant linear systems by accessing the values of the generating function at specific points in parallel. It is an asymptotic quantum algorithm of which the error is related to the dimension of the Toeplitz matrices. Whether there is an exact quantum algorithm that the error is independent of the dimension is raised as an open question in \cite{wan2018asymptotic}. The algorithm suggested in this section gives the answer, and it is more advantageous when rigorous precision is required or the Toeplitz matrices and their associated circulant matrices do not approach quickly as the dimension increases. Additionally, for the cases where no generating function is provided, our algorithm can improve the dependence on the condition number and precision since the complexity of the quantum algorithm proposed in \cite{wan2018asymptotic} has a quadratic dependence on the condition number and a linear dependence on the precision.


Besides the Toeplitz matrices in the Wiener class, for the circulant matrices $\boldsymbol C_n$, it is often the case in practical applications that $c_j$ are nonnegative for all $j$, and the spectral norm $\|\boldsymbol C_n\|=\sum_{j=0}^{n-1}c_j$ of $\boldsymbol C_n$ are constants. Thus, $\chi_{\boldsymbol C_n}$  will be bounded by some constants, and the quantum algorithm based on proposed block-encodings can solve these circulant linear systems with complexity (i) $\widetilde{O}(\kappa_{\boldsymbol C_n}\sqrt{n}\textrm{polylog}(1/\epsilon))$ in the black-box model; (ii) $\widetilde{O}(\kappa_{\boldsymbol C_n}\textrm{poly}\textrm{log}(n/\epsilon))$ in the QRAM data structure model.



For the circulant matrices described above, based on the observation of LCU decomposition of $\boldsymbol C_n$, Zhou et al. \cite{zhou2017efficient} used the method of simulating Hamiltonian with a truncated Taylor series \cite{berry2015simulating} and HHL algorithm \cite{eq1} to solve the associated linear systems. Under the assumption that there is an oracle that can prepare the state $\frac{1}{\sqrt{\chi_{\boldsymbol C_n}}}\sum_{j=0}^{n-1}\sqrt{c_j}|j\rangle$ in time $O(\textrm{polylog}\,n)$, the complexity of the quantum algorithm proposed in \cite{zhou2017efficient} is $\widetilde{O}(\kappa_{\boldsymbol C_n}^2\textrm{polylog}(n)/\epsilon)$. However, it is not always feasible to prepare this initial state with such a running time, especially in the black-box model. In this paper, we show in detail how to implement the preparation of this state in two different data access models. In particular, when using the same data access model, the algorithm proposed in this paper brings us complexity improvement on $\kappa_{\boldsymbol C_n}$ and $1/\epsilon$, which comes from the use of updated technique for Hamiltonian simulation and linear system solver.

There are some Hankel matrices of which $\sum_{j=0}^{\infty}|h_j|$ are convergent, such as
\begin{equation}
 \boldsymbol H_{i,k}=\frac{1}{(i+k+1)!}
\end{equation}
which arise in determining the covariance structure of an iterated Kolmogorov diffusion \cite{habermann2018explicit}. In addition, there are also some Hankel matrices  generated by the discretization of some functions \cite{widom1966hankel}, just as the Toeplitz matirces in the Wiener class. Then, we can implement block-encodings of these Hankel matrices with bounded scaling factors, which will derive a quantum algorithm that solves the Hankel linear systems with significant speedup (same as that for Toeplitz systems in Wiener class) in both data access models.


Immediately, for the Toeplitz/Hankel-like matrices, if they satisfy a constraint similar to one of the above forms, we can then solve the linear systems with such structures efficiently by using the block-encodings constructed in corollary \ref{the3}. For example, finding a greatest common divisor of univariate polynomials involves solutions of Toeplitz-like systems and in some cases, as shown in \cite{schonhage1985quasi}, the elements of the displacement of the coefficient matrices are absolutely summable. We would like to emphasize the method proposed in this paper will result in efficient quantum linear systems solvers for the structured matrices of which the $\chi$ is small, not only for the matrices introduced in this section.

\section{ Application to Time Series Analysis }
Note that the quantum algorithm introduced in last subsection always outputs a state encoding the solution of the linear system in its amplitudes. Reading out all the classical information of the solution is time-consuming. To illustrate that the quantum speedup is practically achievable, we provide a concrete example where the coefficient matrix satisfies the specification and some useful information can be extracted from the output.

More specifically, we apply the quantum algorithm for the Toeplitz systems in Wiener class to solve the linear prediction problem of time series. Predicting the future value of a discrete time stochastic process with a set of past samples of the process is one of the most important problems in time series analysis. For linear prediction, we need to estimate the predicted value by a linear combination of the past samples.

To present the problem clearly, we first introduce some terminology used in signal processing, for details, see \cite{haykin2005adaptive}. Let $u(k)$ be a discrete-time stationary zero-mean complex-valued process. A finite impulse response (FIR) linear filter of order $n$ is of the form
\begin{equation}
  \hat{u}(i)=\sum_{k=1}^{n}w_k^{\ast}u(i-k)
\end{equation}
where $\hat{u}(i)$ is the filter output based on the data $\{u(k)\}^{i-1}_{k=i-n}$ and $\{w_k\}^{n}_{k=1}$ are the impulse responses of the filter. For the situation of linear prediction, the desired response is $u(i)$, representing the actual sample of the input process at time $i$. The difference between the desired response $u(i)$ and the filter output $\hat{u}(i)$ is called the estimation error. To estimate the desired response, we should choose the impulse responses $\{w_k\}^{n}_{k=1}$  by making the estimation error as small as possible in some statistical sense.

According to Wiener filter theory, when the estimation error are optimized in the mean-square-error sense, the impulse responses $\{w_k\}^{n}_{k=1}$ are given by the solution of the linear system
 \begin{equation}\label{WHeq}
   \boldsymbol R\boldsymbol w=\boldsymbol r.
 \end{equation}
Here,
\begin{equation}
\boldsymbol R=\left(
  \begin{array}{cccc}
    r(0) & r(1) & \cdots & r(n-1) \\
    r^{\ast}(1)& r(0) & \cdots & r(n-2) \\
    \vdots & \vdots& \ddots & \vdots \\
    r^{\ast}(n-1) & r^{\ast}(n-2) & \cdots & r(0) \\
  \end{array}
\right),
\end{equation}
\begin{equation}
 \boldsymbol r= \left(
   \begin{array}{c}
      r^{\ast}(1) \\
      r^{\ast}(2) \\
      \vdots\\
      r^{\ast}(n) \\
    \end{array}
  \right),
\end{equation}
where $r(k)=\mathbb{E}[u(j)u^{\ast}(j-k)]$ ($\mathbb{E}$ is the expectation operator) is the autocovariances of the input process for lag $k$. This linear system is commonly called the Wiener-Hopf equations.

Note that the covariance matrix $\boldsymbol R$ is an $n\times n$ Hermitian Toeplitz matrix and is almost always positive definite. For a discrete-time stationary process, if the autocovariances of the process are absolutely summable, i.e., $\sum_{k=-\infty}^{\infty}|r(k)|<\infty$, then the function $\tilde{f}(\lambda)$ that takes $r(k)$ as its Fourier coefficients is called the power spectral density function of the process. The power spectral density functions ordinarily exist for the stochastic processes encountered in the physical sciences and engineering. Thus,  $\boldsymbol R $ is a Toeplitz matrix generated by $\tilde{f}(\lambda)$ and in the Wiener class. Moreover,  the eigenvalues $\lambda_k$ of a Hermitian Toeplitz matrix satisfy
\begin{equation}
f_{min}\leq\lambda_k\leq f_{max},
\end{equation}
where $f_{min},f_{max}$ represent the smallest value and the largest value of  the generating function respectively. When the spectral density function is bounded (that can be guaranteed by the continuity of $\tilde{f}(\lambda)$ on $[0,2\pi]$), the condition number of $\boldsymbol R$ will also be bounded.

For the case of known statistics, i.e., the autocovariances of the stationary process are known, one can the query the elements of the covariance matrix $\boldsymbol R$ by the ``black box". Alternatively, the covariance matrix $\boldsymbol R$ can be stored in the QRAM data structure as shown in lemma \ref{datastructureT} in advance. Similarly, the vector $\boldsymbol r$ can also be provided with two different data access models. Then, we can  prepare quantum state $|r\rangle$ with complexity (i) $O(\sqrt{n}/\|\boldsymbol r\|_2)$ in the black-box model \cite{sanders2019black}; (ii) $O(\textrm{polylog}(n/\epsilon))$ in the QRAM data structure model \cite{kerenidis2016quantum}. Note that $\|\boldsymbol r\|_2=O(\chi_{\boldsymbol R})$ will be a constant. By calling the quantum algorithm for solving the Toeplitz systems, we can get a quantum state $|w\rangle$  proportional to the solution of Eq. (\ref{WHeq}) with complexity (i) $\tilde{O}(\sqrt{n}\textrm{polylog}(1/\epsilon))$ in the black-box model; (ii) $\tilde{O}(\textrm{polylog}(n/\epsilon))$ in the QRAM data structure model.

Given the vector $\boldsymbol u=[u(i-1),\cdots,u(i-n)]^{T}$ with appropriate data access model, the quantum state $|u\rangle$ can also be prepared by the methods described in Sec.\ref{seciiic}. Then, the filter output $\hat{u}(i)=\sum_{k=1}^{n}w_k^{\ast}u(i-k)$ can be approximately computed up to some factor by evaluating the inner product of $|u\rangle$ and $|w\rangle$ using Hadamard test \cite{aharonov2009polynomial,wu2021quantum}. Since there is no need to read out all the value of the obtained quantum state, quantum speedup of solving the linear system is preserved. This process in fact provides an example that quantum algorithms can yield significant speedup for the problems of practical interest.
\section{Discussion}
There are some special cases of matrices with displacement structures that can be decomposed into linear combinations of displacement matrices with a few items. The simplest case is banded Toeplitz matrices, $\bar{\boldsymbol T}_n$, of which $t_k =0, |k|> \varrho$ for a constant $\varrho$.  The linear systems of banded Toeplitz matrices occur in many applications, involving the numerical solution of certain differential equations, the modeling of queueing problems, digital filtering, and so on.

Computing the Sylvester displacements of banded Toeplitz matrices, they can be decomposed as follows,
\begin{equation}\label{bandedtd}
\begin{aligned}
   \bar{\boldsymbol T}_n
           &=\frac{1}{2}\big[2t_0 \boldsymbol I+t_{1} \boldsymbol Z_{1}^{1}+\ldots+t_{\varrho} \boldsymbol Z_{1}^{\varrho}\\
           &+t_{-\varrho} \boldsymbol Z_{1}^{n-\varrho}+\ldots+t_{-1}\boldsymbol Z_{1}^{n-1}\\
           &+t_{1} \boldsymbol Z_{-1}^{1}+\ldots+t_{\varrho} \boldsymbol Z_{-1}^{\varrho}\\
           &+(-t_{-\varrho}) \boldsymbol Z_{-1}^{n-\varrho}+\ldots+(-t_{-1})\boldsymbol Z_{-1}^{n-1}].
\end{aligned}
\end{equation}
Similarly to Eqs. (\ref{stapreT1}), (\ref{stapreT2}), and (\ref{controluT}), we can define two state preparation operators $\boldsymbol V_{(\nabla[ \bar{\boldsymbol T}])},  \boldsymbol V_{(\nabla[ \bar{\boldsymbol T}]^\ast)}$ and a controlled unitary operator $\textrm{select}(\boldsymbol U_{ \bar{\boldsymbol T}})$. Then, the unitaries $\boldsymbol V_{(\nabla[ \bar{\boldsymbol T}])}$ and $\boldsymbol V_{(\nabla[\bar{\boldsymbol T}]^\ast)}$ can be implemented by using the generic state preparation algorithm described in \cite{shende2006synthesis}, which requires a gate cost of $O(\varrho)$. Also, the controlled unitary operator $\textrm{select}(\boldsymbol U_{ \bar{\boldsymbol T}})$ can be implemented with $O(\varrho \textrm{log}\,n)$ primitive gates, as the quantum circuit of each $\boldsymbol Z_{1}^{j} (\boldsymbol Z_{-1}^{j})$ only requires $O(\textrm{log}\,n)$ primitive gates.  Thus, we can efficiently implement a $(\chi_{\bar{\boldsymbol T}_n}/2;\lceil\textrm{log} (4\varrho+1)\rceil;\epsilon)$-block-encoding of $\bar{\boldsymbol T}_n$, where $\chi_{\bar{\boldsymbol T}_n}=\sum_{j=-\varrho}^{\varrho}|t_j|.$ It should be noted that the implementation scheme proposed here may provide more facilitation when constructing practical circuits of the block-encodings of such matrices since it does not require any oracle or QRAM. Combined with quantum algorithm for linear systems within the block-encoding framework, it can offer an exponential improvement in the dimension of the linear systems over classical methods.

In the black-box model, one can use the method of \cite{low2019hamiltonian} to efficiently implement a block-encoding for a sparse matrix, where the scaling factor linearly depends on the sparsity (the maximum number of nonzero entries in any row or column) of the matrix. Since the structured matrices studied in this paper are not sparse, the scaling factor of this block-encoding will be $O(n)$. Then, the quantum algorithm for structured linear systems based on such block-encoding can not provide speedup compared with the classical algorithm. We note that \cite{gilyen2019quantum} provided a way to improve the scaling factor. However it require that the upper bound on the the $p$-norm of the rows of the matrix is known, which is not the circumstances considered in this paper.

In the QRAM data structure model, the method stated in \cite{chakraborty2019power,kerenidis2020quantum} also implements block-encodings based on the assumption that the powers of the elements of a matrix are stored in the quantum-accessible data structure beforehand. For a matrix with a displacement structure, the scaling factor produced by the method of \cite{chakraborty2019power,kerenidis2020quantum} can be in the same order of magnitude as that in this paper. However, it should be noted that constructing the data structure that stores some entries about the matrices may constitute the main restriction of this data access model. With respect to this, our method is more advantageous. More specifically, the method of \cite{chakraborty2019power,kerenidis2020quantum} would require a QRAM with data structure storing $O(n^2)$ entries for the matrices with displacement structures. In our method, since we represent these structured matrices with $O(n)$ entries of their displacements, a QRAM with data structure storing $O(n)$ entries is required. Obviously, the data structure in our method can be constructed more rapidly and uses less memory space, so that our method will be more favorable when solving problems involving matrices with displacement structures.


As analyzed above, the origin of the advantages of our algorithm is the succinct parameterized representations of the matrices with displacement structures which is attributed to the Stein and Sylvester type LCU decompositions. There are also some intuitive methods to perform an LCU decomposition. Typically, we specify a set of unitaries that are easy to implement as the basis, and then calculate the decomposition coefficients by solving a linear system with $n^2$ unknown parameters. Alternatively, one can decompose the matrix into a sum of tensor products of Pauli operators, while the number of decomposition items will be considerably larger than ours. These decompositions are high-complexity to construct, and may not even fit to implement block-encodings. In general, it is not easy to find a desirable decomposition.

In particular, in this paper, by the proposed LCU decompositions, we make the implementation of block-encodings of matrices with displacement structure closely related to the task of preparing $O(n)$-dimensional quantum states. Since there are $\Omega(n)$ parameters when defining an $n\times n$ matrix with displacement structure, the query complexity in the black-box model and the memory cost in the QRAM data structure model are nearly optimal. It is an open question whether one can bypass the preparation of $n$-dimensional quantum states and implement block-encodings of the matrices with displacement structures with fewer resources.

It should be noted that although the proposed decompositions in this paper are also available for general dense matrices, the quantum algorithms based on these decompositions can not bring significant improvements over the known results due to the number of decomposition items of $n^2$.  For the same reason, even if other unitaries are used as the basis, a universal LCU decomposition generally cannot give rise to quantum algorithms that surpass existing methods for all dense matrices. Nevertheless, it is still worth exploring specialized decomposition for some specific structured matrices to implement favorable block-encodings, which can result in a significant reduction in the complexity of the quantum linear system solver.

Following Tang's breakthrough work \cite{tang2019quantum}, there is a large class of classical algorithms of which the running time is poly-logarithmic in the dimension. However, this type of speedup is only achievable for the matrices involved to be of low rank. It is not applicable to use the dequantization method to solve linear systems with displacement structures since these matrices are not low-rank in general. Besides, these dequantized algorithms have higher overhead than the quantum algorithms in practice due to their large polynomial dependence on the rank and the other parameters. For the computation of displacement structured matrices, it is an interesting open problem to explore classical algorithms with similar overheads to quantum algorithms.
\section{Conclusion}
In this paper, we demonstrated that several important classes of matrices with displacement structures can be represented and treated similarly by decomposing them into linear combinations of displacement matrices. Based on the devised decompositions, we implemented block-encodings of these structured matrices in two different data access models, and introduced efficient quantum algorithms for solving the linear system with such structures.  The obtained quantum linear systems solvers improved the known results and also motivated some new instances, see Table 1 for a brief summary. In particular, we provided a concrete example to illustrate these quantum algorithms can be used to solve the problems of practical interest with significant speedup.

\begin{widetext}
\begin{center}
\textrm{\footnotesize Table 1: Summary of quantum algorithms for solving linear systems with displacement structures\\[3pt]}
\resizebox{150mm}{!}{
\begin{adjustbox}{center}
\begin{tabular}{|c|c|c|c|}
  \hline
  Coefficient Matrix  & Algorithm  &Remark & Comparison of complexity\\
  \hline\rule{0pt}{17pt}
  \multirow{2}*[-2pt]{\makecell[c]{Toeplitz  matrices \\in the Wiener class }} &Wan et al. \cite{wan2018asymptotic}  & Asymptotic algorithm&\multirow{2}*[4pt]{\makecell[c]{Improve the dependence on the \\condition number and precision,\\ when generating function is unknown}}\\
  \cline{2-3}\rule{0pt}{17pt}
  & Theorem 2-based&Non-asymptotic algorithm &\\
  \hline\rule{0pt}{17pt}
  \multirow{2}*[-2pt]{\makecell[c]{Circulant matrices with \\bounded spectral norm}}& Zhou and Wang \cite{zhou2017efficient}      &\makecell[c]{An oracle of preparing\\ $n$-dimensional state is required}&\multirow{2}*[4pt]{\makecell[c]{Improve the dependence on the \\condition number and precision, \\when using the same data access model}} \\
  \cline{2-3}\rule{0pt}{13pt}
  & Theorem 2-based & Two different data access model&\\
   \hline\rule{0pt}{17pt}
  \multirow{2}*[0pt]{\makecell[c]{Discretized Laplacian\\ (specific banded \\Toeplitz matrices)}}   & Cao et al. \cite{pos}& \makecell[c]{Finite difference discretization\\ of the Poisson equation}& \multirow{2}*[-5pt]{The same order of magnitude}\\
  \cline{2-3}\rule{0pt}{13pt}
  & Discussion-based & \makecell[c]{Applicable to general \\banded Toeplitz matrices}&\\
  \hline\rule{0pt}{17pt}
  \multirow{2}*[-2pt]{\makecell[c]{Toeplitz/Hankel-like \\matrices with small $\chi$}} & \makecell[c]{ No specialized \\quantum algorithm} &$\backslash$&\multirow{2}*[0pt]{$\backslash$}\\
  \cline{2-3}\rule{0pt}{17pt}
  & Corollary 1-based &$\backslash$&\\
  \hline
\end{tabular}
\end{adjustbox}}
\end{center}
\end{widetext}

The presented methods can actually be extended to solve many important computational problems having ties to the structured matrices studied in this paper, such as structured least squares problems and computation of the structured matrices exponential. Also, we hope our work can inspire the study of matrices with displacement structures on near-term quantum devices, since we have shown that these matrices can be decomposed into some easy-to-implement unitaries. At last, there are many other structured matrices such as Cauchy, Bezout, Vandermonde, Loewner, and Pick matrices which are widely employed in various areas.  Designing quantum algorithms for these structured matrices with a dramatic computational acceleration and a major memory-space decrease is worthy of further study.

\section*{Acknowledgements}
The authors would like to thank Yi-Jie Shi for her constructive comments on an early version of this manuscript. This work is supported by the Fundamental Research Funds for the Central Universities (Grant No. 2019XD-A01), the National Natural Science Foundation of China (Grant Nos.  61972048, 61976024, 62006105), the Jiangxi Provincial Natural Science Foundation (Grant No. 20202BABL212004), and the China Scholarship Council (Grant No. 201806470057).


\appendix
\section{Proof of the theorem \ref{decomposition theorem}}

In this appendix, we prove the conclusion in theorem \ref{decomposition theorem}. There are some well-known fundamental results, and the proof of these results can be found in \cite{pan2001structured}. For completeness, we restate here.

\begin{lemma}(\cite{pan2001structured})\label{induction}
For matrices $\boldsymbol A,\boldsymbol B,\boldsymbol M\in \mathbb{C}^{n\times n}$, and $k\geq1$, we have
\begin{equation}
    \boldsymbol M=\boldsymbol A^k\boldsymbol M\boldsymbol B^k+\sum_{i=0}^{k-1}\boldsymbol A^i\Delta_{\boldsymbol A,\boldsymbol B}(\boldsymbol M)\boldsymbol B^i.
\end{equation}
\end{lemma}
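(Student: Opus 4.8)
The plan is to recognize the right-hand side as a telescoping sum, which renders the identity essentially immediate. First I would substitute the definition $\Delta_{\boldsymbol A,\boldsymbol B}(\boldsymbol M)=\boldsymbol M-\boldsymbol{AMB}$ into each summand, so that the generic term factors as
\begin{equation}
\boldsymbol A^i\Delta_{\boldsymbol A,\boldsymbol B}(\boldsymbol M)\boldsymbol B^i=\boldsymbol A^i\boldsymbol M\boldsymbol B^i-\boldsymbol A^{i+1}\boldsymbol M\boldsymbol B^{i+1}.
\end{equation}
Summing over $i=0,\dots,k-1$, consecutive terms cancel and only the first and last blocks survive, leaving $\boldsymbol A^0\boldsymbol M\boldsymbol B^0-\boldsymbol A^k\boldsymbol M\boldsymbol B^k=\boldsymbol M-\boldsymbol A^k\boldsymbol M\boldsymbol B^k$. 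Adding back the leading term $\boldsymbol A^k\boldsymbol M\boldsymbol B^k$ then recovers $\boldsymbol M$ exactly, which is the claim.

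As an alternative that does not appeal to the telescoping pattern explicitly, I would argue by induction on $k$. The base case $k=1$ is just the rearranged definition $\boldsymbol M=\boldsymbol{AMB}+\Delta_{\boldsymbol A,\boldsymbol B}(\boldsymbol M)$. For the inductive step, assuming the identity for $k$, I would apply this same one-step relation to the residual block, writing $\boldsymbol A^k\boldsymbol M\boldsymbol B^k=\boldsymbol A^{k+1}\boldsymbol M\boldsymbol B^{k+1}+\boldsymbol A^k\Delta_{\boldsymbol A,\boldsymbol B}(\boldsymbol M)\boldsymbol B^k$ and substitute; the freshly produced term $\boldsymbol A^k\Delta_{\boldsymbol A,\boldsymbol B}(\boldsymbol M)\boldsymbol B^k$ merges into the existing sum, extending its upper limit from $k-1$ to $k$ and yielding the $k+1$ case.

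There is no genuine obstacle here: the statement is a purely algebraic identity relying only on the definition of the Stein displacement and the associativity of matrix multiplication, with no hypotheses on $\boldsymbol A$, $\boldsymbol B$, or any invertibility. The one point meriting a moment's care is the index bookkeeping when pulling a factor $\boldsymbol A$ to the left and $\boldsymbol B$ to the right in the inductive step, so that the exponents match on both sides. I would present the telescoping version as the main proof, since it is the most transparent, while noting that the inductive form is the one most naturally iterated later when inverting the displacement operator to obtain the decompositions of Theorem \ref{decomposition theorem}.
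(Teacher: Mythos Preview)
Your proof is correct. The paper proves this lemma by induction: the case $k=1$ is declared trivial, and for the inductive step the paper multiplies the assumed identity on the left by $\boldsymbol A$ and on the right by $\boldsymbol B$, reindexes the sum, and then uses $\boldsymbol M=\boldsymbol{AMB}+\Delta_{\boldsymbol A,\boldsymbol B}(\boldsymbol M)$ to recover the $k+1$ statement. This is essentially the same as your alternative inductive argument, just with the algebra organized around the whole identity rather than around the residual block $\boldsymbol A^k\boldsymbol M\boldsymbol B^k$.

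Your primary argument via telescoping is a genuinely cleaner route than the paper's: once you expand $\boldsymbol A^i\Delta_{\boldsymbol A,\boldsymbol B}(\boldsymbol M)\boldsymbol B^i=\boldsymbol A^i\boldsymbol M\boldsymbol B^i-\boldsymbol A^{i+1}\boldsymbol M\boldsymbol B^{i+1}$, the identity is immediate with no inductive machinery at all. The advantage of the paper's inductive presentation is only stylistic consistency with how Lemma~\ref{N decomposition} is then derived by specializing to $k=n$; your telescoping proof serves that purpose equally well.
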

\begin{proof}
It is trivial when $k=1$. We assume the identity is true for $k$. Then, multiplying the identity on the left by $\boldsymbol A$ and right by $\boldsymbol B$,
\begin{small}
\begin{equation}
\begin{aligned}
  \boldsymbol{AMB}&=\boldsymbol A^{k+1}\boldsymbol{MB}^{k+1}+\sum_{i=0}^{k-1}\boldsymbol A^{i+1}\Delta_{\boldsymbol A,\boldsymbol B}(\boldsymbol M)\boldsymbol B^{i+1}\\
     &=\boldsymbol A^{k+1}\boldsymbol {MB}^{k+1}+\sum_{i=0}^{k}\boldsymbol A^{i}\Delta_{\boldsymbol A,\boldsymbol B}(\boldsymbol M)\boldsymbol B^{i}-\Delta_{\boldsymbol A,\boldsymbol B}(\boldsymbol M),\\
     \boldsymbol M&=\boldsymbol A^{k+1}\boldsymbol{MB}^{k+1}+\sum_{i=0}^{k}\boldsymbol A^{i}\Delta_{\boldsymbol A,\boldsymbol B}(\boldsymbol M)\boldsymbol B^{i}.
\end{aligned}
\end{equation}
\end{small}
Thus, the identity is true for $k+1$. According to mathematical induction, it is true for all natural numbers.
\end{proof}
\begin{lemma}(\cite{pan2001structured})\label{N decomposition}
If $\boldsymbol A$ is an $a$-potent matrix of order $n$ and $\boldsymbol B$ is a $b$-potent matrix of order $n$, i.e., $\boldsymbol A^n = a\boldsymbol I$ and $\boldsymbol B^n = b\boldsymbol I$, then
\begin{equation}
  \boldsymbol M=\frac{1}{1-ab}\sum_{i=0}^{n-1}\boldsymbol A^i\Delta_{\boldsymbol A,\boldsymbol B}(\boldsymbol M)\boldsymbol B^i.
\end{equation}
\end{lemma}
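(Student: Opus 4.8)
The plan is to derive this as an immediate specialization of the already-proved Lemma \ref{induction}. That lemma telescopes the Stein displacement to give, for every $k\geq 1$, the identity $\boldsymbol M=\boldsymbol A^k\boldsymbol M\boldsymbol B^k+\sum_{i=0}^{k-1}\boldsymbol A^i\Delta_{\boldsymbol A,\boldsymbol B}(\boldsymbol M)\boldsymbol B^i$. The key observation is that the residual term $\boldsymbol A^k\boldsymbol M\boldsymbol B^k$ becomes trivial precisely when $k=n$, because the potency hypotheses collapse the outer powers into scalars.

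Concretely, I would first set $k=n$ in Lemma \ref{induction}, yielding
\begin{equation}
\boldsymbol M=\boldsymbol A^n\boldsymbol M\boldsymbol B^n+\sum_{i=0}^{n-1}\boldsymbol A^i\Delta_{\boldsymbol A,\boldsymbol B}(\boldsymbol M)\boldsymbol B^i.
\end{equation}
Next I would substitute the defining relations $\boldsymbol A^n=a\boldsymbol I$ and $\boldsymbol B^n=b\boldsymbol I$ into the leading term. Since $a$ and $b$ are scalars and therefore commute through $\boldsymbol M$, this term reduces to $\boldsymbol A^n\boldsymbol M\boldsymbol B^n=(a\boldsymbol I)\boldsymbol M(b\boldsymbol I)=ab\,\boldsymbol M$, turning the identity into $\boldsymbol M=ab\,\boldsymbol M+\sum_{i=0}^{n-1}\boldsymbol A^i\Delta_{\boldsymbol A,\boldsymbol B}(\boldsymbol M)\boldsymbol B^i$.

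Finally I would collect the $\boldsymbol M$ terms on the left to obtain $(1-ab)\boldsymbol M=\sum_{i=0}^{n-1}\boldsymbol A^i\Delta_{\boldsymbol A,\boldsymbol B}(\boldsymbol M)\boldsymbol B^i$ and divide by the scalar $1-ab$ to reach the claimed formula. The only point requiring care — and the natural place to flag the hypothesis — is that this last step presumes $1-ab\neq 0$, i.e.\ $ab\neq 1$; the statement is implicitly restricted to that regime. There is no genuine obstacle here: the entire argument is a substitution into Lemma \ref{induction} followed by a scalar rearrangement. I would note that for the instantiation used in Theorem \ref{decomposition theorem}, where $(\boldsymbol A,\boldsymbol B)=(\boldsymbol Z_1,\boldsymbol Z_{-1})$ so that $a=1$ and $b=-1$, one has $1-ab=2$, which is exactly the source of the prefactor $\tfrac12$ appearing in that decomposition.
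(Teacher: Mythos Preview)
Your proposal is correct and matches the paper's own proof, which simply states that the lemma is a direct inference of Lemma~\ref{induction} upon setting $k=n$ and invoking $\boldsymbol A^n=a\boldsymbol I$, $\boldsymbol B^n=b\boldsymbol I$. Your additional remarks on the necessity of $ab\neq 1$ and the origin of the factor $\tfrac12$ in Theorem~\ref{decomposition theorem} are accurate and go slightly beyond what the paper writes out.
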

\begin{proof}
This conclusion is a direct inference of lemma \ref{induction}, when $k=n$ and $\boldsymbol A^n = a\boldsymbol I$, $\boldsymbol B^n = b\boldsymbol I$.
\end{proof}
\begin{lemma}(\cite{pan2001structured})\label{switch}
$\nabla_{\boldsymbol A,\boldsymbol B} = \boldsymbol A\Delta_{\boldsymbol A^{-1},\boldsymbol B}$, if the operator matrix $\boldsymbol A$ is nonsingular, and
$\nabla_{\boldsymbol A,\boldsymbol B} = -\Delta_{\boldsymbol A,\boldsymbol B^{-1}}\boldsymbol B$, if the operator matrix $\boldsymbol B$ is nonsingular.
\end{lemma}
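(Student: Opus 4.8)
The plan is to establish both operator identities by direct verification. Since $\nabla_{\boldsymbol A,\boldsymbol B}$, $\Delta_{\boldsymbol A^{-1},\boldsymbol B}$ and $\Delta_{\boldsymbol A,\boldsymbol B^{-1}}$ are all linear operators on $\mathbb{C}^{n\times n}$, it suffices to check that the two sides of each claimed equation agree when applied to an arbitrary $\boldsymbol M\in\mathbb{C}^{n\times n}$, expanding each displacement through its defining formula in Definition \ref{definition:1}, namely $\Delta_{\boldsymbol A,\boldsymbol B}[\boldsymbol M]=\boldsymbol M-\boldsymbol{AMB}$ and $\nabla_{\boldsymbol A,\boldsymbol B}[\boldsymbol M]=\boldsymbol{AM}-\boldsymbol{MB}$.

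For the first identity I would expand $\Delta_{\boldsymbol A^{-1},\boldsymbol B}[\boldsymbol M]=\boldsymbol M-\boldsymbol A^{-1}\boldsymbol M\boldsymbol B$ and then left-multiply by $\boldsymbol A$. The factor $\boldsymbol A$ distributes over the difference, and in the second term the product $\boldsymbol A\boldsymbol A^{-1}$ collapses to the identity; this is exactly where nonsingularity of $\boldsymbol A$ is used, since it guarantees $\boldsymbol A^{-1}$ exists and hence that $\Delta_{\boldsymbol A^{-1},\boldsymbol B}$ is even well-defined. What remains is $\boldsymbol{AM}-\boldsymbol{MB}$, which is precisely $\nabla_{\boldsymbol A,\boldsymbol B}[\boldsymbol M]$.

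For the second identity the computation is the mirror image: expand $\Delta_{\boldsymbol A,\boldsymbol B^{-1}}[\boldsymbol M]=\boldsymbol M-\boldsymbol A\boldsymbol M\boldsymbol B^{-1}$, right-multiply by $\boldsymbol B$, and carry the overall minus sign through. The first term contributes $-\boldsymbol{MB}$, while in the second term the product $\boldsymbol B^{-1}\boldsymbol B$ collapses to the identity (using nonsingularity of $\boldsymbol B$), contributing $+\boldsymbol{AM}$; summing yields $\boldsymbol{AM}-\boldsymbol{MB}=\nabla_{\boldsymbol A,\boldsymbol B}[\boldsymbol M]$ once more.

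There is essentially no obstacle here: each statement reduces to a one-line cancellation as soon as the definitions are unfolded, and the only points requiring care are the bookkeeping of the side (left versus right) on which $\boldsymbol A$ or $\boldsymbol B$ multiplies, and the observation that the nonsingularity hypotheses are precisely what render the inverse-bearing Stein operators meaningful. Since the equalities hold for every $\boldsymbol M$, they are identities of operators, which is the claimed conclusion.
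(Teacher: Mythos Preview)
Your proposal is correct and follows essentially the same approach as the paper: the paper's proof simply notes that $\boldsymbol{AM}-\boldsymbol{MB}=\boldsymbol A(\boldsymbol M-\boldsymbol A^{-1}\boldsymbol{MB})$ when $\boldsymbol A$ is nonsingular and $\boldsymbol{AM}-\boldsymbol{MB}=-(\boldsymbol M-\boldsymbol A\boldsymbol{MB}^{-1})\boldsymbol B$ when $\boldsymbol B$ is nonsingular, which is exactly the one-line unfolding-and-cancellation you describe.
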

\begin{proof}
Note that if $\boldsymbol A$ is nonsingular, then $\boldsymbol{AM}-\boldsymbol{MB}=\boldsymbol A(\boldsymbol M-\boldsymbol A^{-1}\boldsymbol{MB})$; if $\boldsymbol B$ is nonsingular, then $\boldsymbol{AM}-\boldsymbol{MB}=-\boldsymbol (\boldsymbol M-\boldsymbol A\boldsymbol{MB}^{-1})\boldsymbol B$.
\end{proof}

\begin{definition}[$f$-circulant Matrix]
The $f$-circulant matrix, $\boldsymbol Z_f(\boldsymbol v)$, generated by a unit $f$-circulant matrix and a given vector $\boldsymbol v=[v_0,\cdots,v_{n-1}]^T$ is defined as follows:
\begin{equation}\label{f-Circulant Matrix}
\begin{aligned}
\boldsymbol Z_f(\boldsymbol v)&=(\boldsymbol v \;\;\boldsymbol Z_f\boldsymbol v \;\;\boldsymbol Z_f^2\boldsymbol v \cdots \boldsymbol Z_f^{n-1}\boldsymbol v)\\&=
\left( \begin{array}{cccc}
v_0     & fv_{n-1} & \cdots &fv_{1} \\
v_1     & v_0      & \cdots &  fv_{2}\\
\vdots  & \vdots   & \vdots & fv_{n-1} \\
v_{n-1} & \ldots   & v_1    & v_0 \\
\end{array}\right).
\end{aligned}
\end{equation}
\end{definition}

It turns out that a matrix $\boldsymbol M$ can be expressed as the sums of the products of $f$-circulant matrices and reversal matrix, by inverting the displacement operators.

\begin{theorem}(\cite{pan2001structured})\label{functionexpression}
If a matrix $\boldsymbol M\in\mathbb{C}^{n\times n}$  satisfies $\mathcal{L}(\boldsymbol M)=\boldsymbol{GH}^T$
where $\boldsymbol G=[\boldsymbol g_1\ldots \boldsymbol g_r],\boldsymbol H=[\boldsymbol h_1\ldots \boldsymbol h_r]\in \mathbb{C}^{n\times r}$, $e,f$ are constants, then $\boldsymbol M$ can be expressed as:\\
i)
\begin{equation}
  \boldsymbol M=\frac{1}{1-ef}\sum_{j=1}^{r}\boldsymbol Z_e(\boldsymbol g_j)\boldsymbol Z_{f}(\boldsymbol J\boldsymbol h_j)^T\boldsymbol J,
\end{equation}
where $\mathcal{L}(\boldsymbol M)=\Delta_{\boldsymbol Z_e,\boldsymbol Z_{f}}[\boldsymbol M]$, and $ef\neq1$.\\
ii)
\begin{equation}
   \boldsymbol M=\frac{1}{e-f}\sum_{j=1}^{r}\boldsymbol Z_e(\boldsymbol  g_j)\boldsymbol Z_{f}(\boldsymbol J\boldsymbol h_j),
\end{equation}
where $\mathcal{L}(\boldsymbol M)=\nabla_{\boldsymbol Z_e,\boldsymbol Z_{f}}[\boldsymbol M]$, and $e\neq f$.
\end{theorem}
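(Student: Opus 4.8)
The plan is to prove each part in two stages: first establish a closed ``inversion'' formula that recovers $\boldsymbol M$ from its displacement as a sum of the shape $\sum_i\boldsymbol Z_e^{?}\,\mathcal{L}(\boldsymbol M)\,\boldsymbol Z_f^{?}$, and then repackage that sum, rank-one piece by rank-one piece over the $\boldsymbol g_j\boldsymbol h_j^T$, into the single products of $f$-circulant matrices appearing in the statement. Throughout I will use that the unit $f$-circulant matrix is $f$-potent, i.e.\ $\boldsymbol Z_f^n=f\boldsymbol I$ (so in particular $\boldsymbol Z_e^n=e\boldsymbol I$), which is what makes Lemma \ref{N decomposition} and Lemma \ref{switch} available. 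I write $\boldsymbol e_l$ for the standard basis vectors.

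For part i), since $\mathcal{L}(\boldsymbol M)=\Delta_{\boldsymbol Z_e,\boldsymbol Z_f}[\boldsymbol M]$ and $ef\neq1$, Lemma \ref{N decomposition} with $\boldsymbol A=\boldsymbol Z_e$, $\boldsymbol B=\boldsymbol Z_f$ gives at once
\begin{equation}
\boldsymbol M=\frac{1}{1-ef}\sum_{i=0}^{n-1}\boldsymbol Z_e^{i}\,\mathcal{L}(\boldsymbol M)\,\boldsymbol Z_f^{i}=\frac{1}{1-ef}\sum_{j=1}^{r}\sum_{i=0}^{n-1}\boldsymbol Z_e^{i}\boldsymbol g_j\boldsymbol h_j^T\boldsymbol Z_f^{i}.
\end{equation}
For part ii) there is no ready-made lemma, so I would first derive the Sylvester analogue by telescoping: expanding $\sum_{i=0}^{n-1}\boldsymbol Z_e^{n-1-i}\nabla_{\boldsymbol Z_e,\boldsymbol Z_f}[\boldsymbol M]\boldsymbol Z_f^{i}=\sum_i\boldsymbol Z_e^{n-i}\boldsymbol M\boldsymbol Z_f^{i}-\sum_i\boldsymbol Z_e^{n-1-i}\boldsymbol M\boldsymbol Z_f^{i+1}$ and invoking $\boldsymbol Z_e^n=e\boldsymbol I$, $\boldsymbol Z_f^n=f\boldsymbol I$, all intermediate terms cancel and the sum collapses to $(e-f)\boldsymbol M$, giving $\boldsymbol M=\frac{1}{e-f}\sum_{i=0}^{n-1}\boldsymbol Z_e^{n-1-i}\,\mathcal{L}(\boldsymbol M)\,\boldsymbol Z_f^{i}$. (One could instead route through Lemma \ref{switch}, writing $\nabla_{\boldsymbol Z_e,\boldsymbol Z_f}=\boldsymbol Z_e\Delta_{\boldsymbol Z_e^{-1},\boldsymbol Z_f}$ and reapplying part i), but the telescoping route is cleaner.)

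The decisive step, and the one I expect to be the main obstacle, is the identification of each inner sum with a single $f$-circulant product. The bridge is the reflection identity $\boldsymbol J\boldsymbol Z_f\boldsymbol J=\boldsymbol Z_f^T$ (hence $(\boldsymbol Z_f^T)^l=\boldsymbol J\boldsymbol Z_f^l\boldsymbol J$), together with the column description $\boldsymbol Z_f(\boldsymbol v)=\sum_{l=0}^{n-1}(\boldsymbol Z_f^l\boldsymbol v)\boldsymbol e_l^T$ read directly off the definition of the $f$-circulant matrix. For part i) these yield $\boldsymbol Z_f(\boldsymbol J\boldsymbol h_j)^T\boldsymbol J=\sum_{l}\boldsymbol e_l\boldsymbol h_j^T\boldsymbol Z_f^{l}$, so that left-multiplying by $\boldsymbol Z_e(\boldsymbol g_j)=\sum_i(\boldsymbol Z_e^i\boldsymbol g_j)\boldsymbol e_i^T$ and collapsing $\boldsymbol e_i^T\boldsymbol e_l=\delta_{il}$ produces exactly $\sum_i\boldsymbol Z_e^i\boldsymbol g_j\boldsymbol h_j^T\boldsymbol Z_f^i$. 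For part ii) the analogous computation reduces to the row identity $\boldsymbol e_p^T\boldsymbol Z_f(\boldsymbol J\boldsymbol h_j)=\boldsymbol h_j^T\boldsymbol Z_f^{n-1-p}$; I would verify this by comparing both sides as linear functionals of $\boldsymbol h_j$, which after using $\boldsymbol J\boldsymbol Z_f^l\boldsymbol J=(\boldsymbol Z_f^T)^l$ and $\boldsymbol e_m^T\boldsymbol J=\boldsymbol e_{n-1-m}^T$ boils down to the single-row fact $\boldsymbol e_p^T\boldsymbol Z_f^{m}=f^{[m>p]}\,\boldsymbol e_{(p-m)\bmod n}^T$, immediate from $\boldsymbol Z_f^m\boldsymbol e_c=f^{\lfloor(c+m)/n\rfloor}\boldsymbol e_{(c+m)\bmod n}$.

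Substituting the collapsed inner sums back into the inversion formulas of the first two paragraphs yields the two claimed expressions. The only genuine care needed is the index bookkeeping in the $f$-circulant, namely the placement of the wrap-around factor $f$ and the reindexing $i\mapsto n-1-i$ in part ii); everything else is routine linear algebra once the reflection identity $\boldsymbol J\boldsymbol Z_f\boldsymbol J=\boldsymbol Z_f^T$ is established.
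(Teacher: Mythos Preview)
Your argument is correct, and for part i) it follows essentially the same line as the paper: invoke Lemma \ref{N decomposition} with $\boldsymbol A=\boldsymbol Z_e$, $\boldsymbol B=\boldsymbol Z_f$, expand $\mathcal{L}(\boldsymbol M)=\sum_j\boldsymbol g_j\boldsymbol h_j^T$, and collapse the inner sum via the reflection identity $\boldsymbol Z_f=\boldsymbol J\boldsymbol Z_f^T\boldsymbol J$ (the paper writes the collapse as a block-column/block-row product rather than your $\sum_l(\boldsymbol Z_f^l\boldsymbol v)\boldsymbol e_l^T$ expansion, but these are the same computation).

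For part ii) the routes diverge. The paper does not redo the inversion or the repackaging; it simply invokes Lemma \ref{switch} in the form $\Delta_{\boldsymbol Z_{1/e}^{T},\boldsymbol Z_{f}}[\boldsymbol M]=\boldsymbol Z_{1/e}^{T}\nabla_{\boldsymbol Z_e,\boldsymbol Z_{f}}[\boldsymbol M]$ (together with $\boldsymbol Z_{f}^{-1}=\boldsymbol Z_{1/f}^{T}$) to reduce the Sylvester case to the Stein case already handled in part i). Your telescoping computation $\sum_{i=0}^{n-1}\boldsymbol Z_e^{n-1-i}\nabla[\boldsymbol M]\boldsymbol Z_f^{i}=(e-f)\boldsymbol M$ is a genuinely different and more self-contained derivation of the Sylvester inversion formula: it avoids both Lemma \ref{switch} and the need to reconcile the transposed operator $\boldsymbol Z_{1/e}^T$ with the $f$-circulant notation $\boldsymbol Z_e(\cdot)$. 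The trade-off is that you then have to carry out the $f$-circulant repackaging a second time with the shifted exponent $n-1-i$ (your row identity $\boldsymbol e_p^T\boldsymbol Z_f(\boldsymbol J\boldsymbol h_j)=\boldsymbol h_j^T\boldsymbol Z_f^{n-1-p}$), whereas the paper's reduction reuses the part i) repackaging wholesale. Either way the bookkeeping is routine; your version is arguably more transparent since it never leaves the $\boldsymbol Z_e,\boldsymbol Z_f$ operator pair.
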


\begin{proof}
From lemma \ref{N decomposition}, let $\boldsymbol A= \boldsymbol Z_e,\boldsymbol B=\boldsymbol Z_{f}, ef\neq1,$ then we have
\begin{small}
\begin{equation}
\begin{aligned}
   \boldsymbol M&=\frac{1}{1-ef}\sum_{i=0}^{n-1}\boldsymbol Z_e^i\Delta_{\boldsymbol Z_e,\boldsymbol Z_f}(\boldsymbol M)\boldsymbol Z_f^i\\
   &=\frac{1}{1-ef}\sum_{j=1}^{r}\sum_{i=0}^{n-1}\boldsymbol Z_e^i\boldsymbol g_j \boldsymbol h_j^T\boldsymbol Z_f^i\\
   &=\frac{1}{1-ef}\sum_{j=1}^{r}(\boldsymbol g_j \boldsymbol h_j^T+\boldsymbol Z_e\boldsymbol g_j \boldsymbol h_j^T\boldsymbol Z_f+\boldsymbol Z_e^2\boldsymbol g_j \boldsymbol h_j^T\boldsymbol Z_f^2+\cdots\\
   &+\boldsymbol Z_e^{n-1}\boldsymbol g_j \boldsymbol h_j^T\boldsymbol Z_f^{n-1})\\
   &=\frac{1}{1-ef}\sum_{j=1}^{r}[\boldsymbol g_j \;\;\boldsymbol Z_e\boldsymbol g_j\;\;\boldsymbol Z_e^2\boldsymbol g_j\cdots \boldsymbol Z_e^{n-1}\boldsymbol g_j]\\
   &\cdot[\boldsymbol h_j\;\; \boldsymbol Z_{f}^T\boldsymbol h_j\;\; {(\boldsymbol Z_{f}^T)}^2\boldsymbol h_j\cdots {(\boldsymbol Z_{f}^T)}^{n-1}\boldsymbol h_j]^T\\
   &=\frac{1}{1-ef}\sum_{j=1}^{r}\boldsymbol Z_e(\boldsymbol g_j)\\
   &\cdot[\boldsymbol{JJh}_j\;\; \boldsymbol{JZ}_{f}\boldsymbol{Jh}_j\;\; \boldsymbol J{(\boldsymbol Z_{f})}^2 \boldsymbol J \boldsymbol h_j\cdots \boldsymbol J{(\boldsymbol Z_{f})}^{n-1}\boldsymbol{Jh}_j]^T\\
   &=\frac{1}{1-ef}\sum_{j=1}^{r}\boldsymbol Z_e(\boldsymbol g_j) [\boldsymbol J\cdot \boldsymbol Z_{f}(\boldsymbol{Jh}_j)]^T\\
    &=\frac{1}{1-ef}\sum_{j=1}^{r}\boldsymbol Z_e(\boldsymbol g_j)\boldsymbol Z_{f}(\boldsymbol{Jh}_j)^T \boldsymbol J\\
\end{aligned}
\end{equation}
\end{small}
by using the facts $\boldsymbol J^2=\boldsymbol I$ and $\boldsymbol Z_{f}=\boldsymbol J\boldsymbol Z_{f}^{T}\boldsymbol J$.

Furthermore, according to the lemma \ref{switch}, $\Delta_{\boldsymbol Z_{1/e}^{T},\boldsymbol Z_{f}}[\boldsymbol M]=\boldsymbol Z_{1/e}^{T}\nabla_{\boldsymbol Z_e,\boldsymbol Z_{f}}[\boldsymbol M]$, where $\boldsymbol Z_{f}^{-1}=\boldsymbol Z_{1/f}^{T}$, then we can deduce the conclusion for Sylvester type.
\end{proof}

Now, we demonstrate how to decompose an $n\times n$ matrix into linear combinations of unitaries. For our purposes, we choose $(\boldsymbol Z_{1},\boldsymbol Z_{-1})$ as the operator matrices. Note that
~\\
\begin{equation}\label{1}
\begin{aligned}
\boldsymbol Z_{1}(\boldsymbol g_{j})&=
\left(\begin{array}{cccc}
g_{j}^{0}     & g_{j}^{n-1}    & \cdots       & g_{j}^{1}     \\
g_{j}^{1}     & g_{j}^{0}      & \cdots       & g_{j}^{2}     \\
\vdots        & \ddots         & \ddots       & \vdots        \\
g_{j}^{n-1}   & \ldots         & g_{j}^{1}    & g_{j}^{0}     \\
\end{array}\right)\\
&= g_{j}^{0}\boldsymbol Z_{1}^{0}+ g_{j}^{n-1}\boldsymbol Z_{1}^{n-1}+\cdots+g_{j}^{1}\boldsymbol Z_{1}^{1},
\end{aligned}
\end{equation}
~\\
~\\
\begin{equation}\label{2}
\begin{aligned}
\boldsymbol Z_{-1}(\boldsymbol h_{j})&=
\left(\begin{array}{cccc}
h_{j}^{0}     & -h_{j}^{n-1}   & \cdots       & -h_{j}^{1}     \\
h_{j}^{1}     & h_{j}^{0}      & \cdots       & -h_{j}^{2}     \\
\vdots        & \ddots         & \ddots       & \vdots   \\
h_{j}^{n-1}   & \ldots         & h_{j}^{1}    & h_{j}^{0}     \\
\end{array}\right)\\
&= h_{j}^{0}\boldsymbol Z_{-1}^{0}+h_{j}^{n-1}\boldsymbol Z_{-1}^{n-1}+\cdots+h_{j}^{1}\boldsymbol Z_{-1}^{1},
\end{aligned}
\end{equation}
where
\begin{equation}
  \boldsymbol g_j=(g_{j}^{0},g_{j}^{1},\ldots,g_{j}^{n-1})^T,\quad \boldsymbol h_j=(h_{j}^{0},h_{j}^{1},\ldots,h_{j}^{n-1})^T.
\end{equation}

Then, on the one hand
\begin{widetext}
\begin{equation}\label{m1}
\begin{aligned}
  \boldsymbol M&=\frac{1}{2}\sum_{j=1}^{r}\boldsymbol Z_1(\boldsymbol g_j)\boldsymbol Z_{-1}(\mathbf J \boldsymbol h_j)\\
           &=\frac{1}{2}\sum_{j=1}^{r}\big(g_{j}^{0}\boldsymbol Z_{1}^{0}+g_{j}^{1}\boldsymbol Z_{1}^{1}+\cdots+g_{j}^{n-1}\boldsymbol Z_{1}^{n-1}\big)
           \big(h_{j}^{n-1}\boldsymbol Z_{-1}^{0}+h_{j}^{n-2}\boldsymbol Z_{-1}^{1}+\cdots+h_{j}^{0}\boldsymbol Z_{-1}^{n-1}\big)\\
  &=\frac{1}{2}\sum_{j=1}^{r}\big(g_{j}^{0}h_{j}^{n-1}\boldsymbol Z_{1}^{0}\boldsymbol Z_{-1}^{0}+g_{j}^{0}h_{j}^{n-2}\boldsymbol Z_{1}^{0}\boldsymbol Z_{-1}^{1}+\cdots+g_{j}^{0}h_{j}^{0}\boldsymbol Z_{1}^{0}\boldsymbol Z_{-1}^{n-1}\\
            &\quad+g_{j}^{1}h_{j}^{n-1}\boldsymbol Z_{1}^{1}\boldsymbol Z_{-1}^{0}+g_{j}^{1}h_{j}^{n-2}\boldsymbol Z_{1}^{1}\boldsymbol Z_{-1}^{1}+\cdots+g_{j}^{1}h_{j}^{0}\boldsymbol Z_{1}^{1}\boldsymbol Z_{-1}^{n-1}\\
           &\quad+\cdots+\cdots\\
            &\quad+g_{j}^{n-1}h_{j}^{n-1}\boldsymbol Z_{1}^{n-1}\boldsymbol Z_{-1}^{0}+g_{j}^{n-1}h_{j}^{n-2}\boldsymbol Z_{1}^{n-1}\boldsymbol Z_{-1}^{1}+\cdots+g_{j}^{n-1}h_{j}^{0}\boldsymbol Z_{1}^{n-1}\boldsymbol Z_{-1}^{n-1}\big)\\
           &=\frac{1}{2}\sum_{j=1}^{r} \sum_{i,k=0}^{n-1}g_{j}^{i}h_{j}^{k}\boldsymbol Z_{1}^{i}\boldsymbol Z_{-1}^{n-1-k}\\
           &=\frac{1}{2}\sum_{i,k=0}^{n-1}\sum_{j=1}^{r} g_{j}^{i}h_{j}^{k}\boldsymbol Z_{1}^{i}\boldsymbol Z_{-1}^{n-1-k}.
\end{aligned}
\end{equation}
\end{widetext}
On the other hand, since
\begin{equation}
  \nabla_{\boldsymbol Z_1,\boldsymbol Z_{-1}}[\boldsymbol M]=\boldsymbol G\boldsymbol H^T=\sum_{j=1}^{r}\boldsymbol g_j\boldsymbol h_j^T,
\end{equation}
it is immediately verified that
\begin{equation}
  \tilde{m}_{i,k}=\sum_{j=1}^{r} g_{j}^{i}h_{j}^{k},  \qquad i,k=0,1,\cdots,n-1,
\end{equation}
where $\tilde{m}_{i,k}$ is  the $k$-th element of the $i$-th row of matrix $\nabla_{\boldsymbol Z_1,\boldsymbol Z_{-1}}[\boldsymbol M]$.

Therefore,
\begin{equation}
  \boldsymbol M=\frac{1}{2}\sum_{i,k=0}^{n-1}\tilde{m}_{i,k}\boldsymbol Z_{1}^{i}\boldsymbol Z_{-1}^{n-1-k}.
\end{equation}

The decomposition for Stein type can be proved in the same way.

\section{Steerable black-box quantum state preparation}

The scenario for steerable black-box state preparation is as follows. For a vector $\boldsymbol x=(x_0,x_1,\cdots,x_{n-1})^T$, we are provided a black box that returns target elements, i.e.,
\begin{equation*}
  \mathcal{O}_x|i\rangle|0\rangle\rightarrow|i\rangle|x_i\rangle.
\end{equation*}
The task is to prepare a quantum state that is $\epsilon$-close to
\begin{equation*}
  |x\rangle=\frac{1}{\sqrt{\|\boldsymbol x\|_1}}\sum_{i=0}^{n-1}\sqrt{x_i}|i\rangle
\end{equation*}
with a adjustable bound on the success probability $1-\delta^2$. The quantum algorithm for preparing this state contains two steps:

1. Prepare the initial state

(i) Start with a uniform superposition state and perform the black box to have
\begin{equation}
  \sum_{i=0}^{n-1}\frac{1}{\sqrt{n}}|i\rangle_1|x_i\rangle_a.
\end{equation}

(ii) Add a qubit and perform controlled rotation to yield

\begin{equation}
  \sum_{i=0}^{n-1}\frac{1}{\sqrt{n}}|i\rangle_1|x_i\rangle_a(\frac{\sqrt{x_i}}{|\boldsymbol x|_{\textrm{max}}}|0\rangle_2+\sqrt{1-\frac{|x_i|}{|\boldsymbol x|^2_{\textrm{max}}}}|1\rangle_2),
\end{equation}
where $|\boldsymbol x|_{\textrm{max}}=\max_{i}|\sqrt{x_i}|$ is a small constant.

(iii) Uncompute the black box to obtain
\begin{equation}
  \sum_{i=0}^{n-1}\frac{1}{\sqrt{n}}|i\rangle_1(\frac{\sqrt{x_i}}{|\boldsymbol x|_{\textrm{max}}}|0\rangle_2+\sqrt{1-\frac{|x_i|}{|\boldsymbol x|^2_{\textrm{max}}}}|1\rangle_2).
\end{equation}
Denote this state with $|\psi\rangle_{1,2}$, and this step can be regarded as a unitary operator $\boldsymbol U_a$ that $\boldsymbol U_a|0\rangle_{1,2}=|\psi\rangle_{1,2}$. Note that $|\psi\rangle_{1,2}$ can be rewritten  as follow,

\begin{equation}
|\psi\rangle_{1,2}=\sqrt{P_0}|\alpha\rangle_{1,2}+\sqrt{1-P_0}|\beta\rangle_{1,2}
\end{equation}
where
\begin{equation}
P_0=\frac{\|\boldsymbol x\|_1}{n|\boldsymbol x|_{\max}^{2}},
\end{equation}
\begin{equation}
  |\alpha\rangle_{1,2}=\sum_{i=0}^{n-1}\frac{\sqrt{x_i}}{\sqrt{\|\boldsymbol x\|_1}}|i\rangle_1|0\rangle_2,
\end{equation}
\begin{equation}\label{lesser interest}
  |\beta\rangle_{1,2}=\sum_{i=0}^{n-1}\frac{\sqrt{1-\frac{|x_i|}{|\boldsymbol x|^2_{\textrm{max}}}}}{\Upsilon}|i\rangle_1|1\rangle_2,
\end{equation}
\begin{equation}
\Upsilon=\sqrt{\sum_{i=0}^{n-1}\big|1-\frac{|x_i|}{|\boldsymbol x|^2_{\textrm{max}}}}\big|.
\end{equation}

2. Amplify the amplitude of getting $|\alpha\rangle$

In this step, we apply the fixed-point quantum search algorithm proposed in \cite{yoder2014fixed} to amplify the success probability with an adjustable bound. More specifically, define conditional phase shift operator

\begin{equation}
  \boldsymbol S_t^\varphi=\boldsymbol I_{1,2}+(e^{i\varphi}-1)\boldsymbol I_1\otimes|0\rangle\langle0|_2,
\end{equation}
and
\begin{equation}
  \boldsymbol S_a^\phi=\boldsymbol I_{1,2}+(e^{i\phi}-1)|0\rangle\langle0|_{1,2}.
\end{equation}
The algorithm performs the following sequence of generalized Grover operator
\begin{equation}
  \boldsymbol G(\phi_l,\varphi_l)\boldsymbol G(\phi_{l-1},\varphi_{l-1})\cdots \boldsymbol G(\phi_1,\varphi_1),
\end{equation}
where $\boldsymbol G(\phi_j,\varphi_j)=-\boldsymbol U_a\boldsymbol S_a^{\phi_j} \boldsymbol U_a^\dagger\boldsymbol S_t^{\varphi_j}$.
The condition on the phases $\{\varphi_j,\phi_j,1\leq j\leq l\}$ was indicated in \cite{yoder2014fixed}:
\begin{equation}
 \phi_j=\varphi_{l-j+1}=-2\textrm{arccot}(\sqrt{1-\gamma^2}\tan(2\pi j/L))
\end{equation}
where $L=2l+1,\gamma=T^{-1}_{1/L}(1/\delta),\delta\in(0,1)$ and $T_L(x)$ is the $L$-th Chebyshev polynomial of the first kind. After $l$ iterations, the final state, up to a
global phase, will be
\begin{equation}
  |\psi_l\rangle=\sqrt{P_L}|\alpha\rangle+\sqrt{1-P_L}|\beta\rangle
\end{equation}
where $P_L=1-\delta^2T_L^2[T_{1/L}(1/\delta)\sqrt{1-P_0}]$ is the success probability.  It was shown that for a given $\delta$ and a known lower bound $P_{\textrm{min}}$ of $P_0$, the following condition of $L$
\begin{equation}
  L\geq\frac{\textrm{log}(2/\delta)}{\sqrt{P_{\textrm{min}}}}
\end{equation}
can ensure $P_L\geq1-\delta^2$.

We now show that the $P_{\textrm{min}}$ can be provided by using amplitude estimation. More specifically, for any $\epsilon_0>0$, amplitude estimation can approximate the probability $P_0$ up to additive error $P_0\epsilon_0$ with $O(1/(\epsilon_0\sqrt{P_0}))$ uses of the standard Grover operator. Let the output of amplitude estimation to be $P'_0$, then  $P_0\geq\frac{P'_0}{1+\epsilon_0}$. Thus, we can take $\frac{P'_0}{1+\epsilon_0}$ as a low bound of $P_0$. Note that $\epsilon_0$ is the relative error of estimated $P_0$ and we only need the low bound of $P_0$, so we can set $\epsilon_0$ to be a small constant like $1/2$. Then the query complexity of this step is $O(\sqrt{n}/\sqrt{\|\boldsymbol x\|_1})$.

Reviewing the cost of amplitude estimation and fixed-point amplitude amplification, we now analyze the complexity of this approach. Clearly, the query complexity is $O(\frac{\textrm{log}(1/\delta)\sqrt{n}}{\sqrt{\|\boldsymbol x\|_1}})$.  The gate complexity of this approach is dominated by the gate complexity of fixed-point amplitude amplification which is given by the gate complexity of the generalized  Grover operator multiplied by the number of iterations.  The conditional phase shift operator $\boldsymbol S_a^{\phi_j}$ and $\boldsymbol S_t^{\varphi_j}$ can be implemented by using $O(\textrm{log}\,n)$ elementary gates. The gate complexity of $\boldsymbol U_a$ depend on the precision of controlled rotation. Note that performing the controlled rotation with error $\epsilon_r$ will generate a state that is $O(\sqrt{n}\epsilon_r)$-close to $|\psi\rangle$. Then, after performing the fixed-point amplitude amplification, we can get a state that is $O(\sqrt{n}\epsilon_r/\sqrt{P_0})$-close to $|\psi_l\rangle$, since the error is also amplified by the fixed-point amplitude amplification. To obtain overall error $O(\epsilon_p)$, $\epsilon_r$ should be $O(\epsilon_p\sqrt{P_0}/\sqrt{n})$. Also it is shown that the controlled rotation can be performed with error $\epsilon_r$ using $O(\textrm{polylog}(\frac{1}{\epsilon_r}))$ elementary gates. Thus, the gate complexity of $\boldsymbol U_a$ is $O\Big(\textrm{polylog}(\frac{\sqrt{n}}{\epsilon_p\sqrt{P_0}})\Big)$. Putting these all together, we can obtain a state, up to a global phase, that is $\epsilon_p$-close to $|x\rangle$ with success probability at least $1-\delta^2$, using $O\Big(\frac{\sqrt{n}\textrm{log}(1/\delta)}{\sqrt{\|\boldsymbol x\|_1}}\Big)$ queries of $\mathcal{O}_x$ and $O\Big(\frac{\sqrt{n}\textrm{log}(1/\delta)}{\sqrt{\|\boldsymbol x\|_1}}\textrm{polylog}(\frac{n}{\epsilon_p\sqrt{\|\boldsymbol x\|_1}})\Big)$ elementary gates.

\section{Proof of theorem \ref{the2}}

Now, we show how to implement the block-encoding of $\boldsymbol T_n$, i.e., the quantum state preparation operators and the controlled unitary, in the two date access models introduced in Sec.\ref{seciic}.

\subsection{Black-box model}\label{blackboxT}
In the black-box model, we are given the black box
\begin{equation*}
  \mathcal{O}_{\boldsymbol T_n}|i\rangle|k\rangle|0\rangle=|i\rangle|k\rangle|t_{i,k}\rangle,
  \qquad i,k=0,1,\cdots,n-1.
\end{equation*}
Then, we can construct black box $\mathcal{O}_1$ satisfying
\begin{equation*}
\begin{aligned}
  \mathcal{O}_1|j\rangle|0\rangle=|j\rangle|t_j+t_{-[(n-j)\,\textrm{mod}\,n]}\rangle, \qquad j=0,1,\cdots,n-1,
\end{aligned}
\end{equation*}
by calling the black box $\mathcal{O}_{\boldsymbol T_n}$ twice to query the elements in the site $|j,0\rangle$ and $|0,(n-j)\,\textrm{mod}\,n\rangle$ and following an addition operation \cite{cuccaro2004new,li2020efficient}. More specifically, $\mathcal{O}_1$ can be constructed as follows.
~\\
(1) Compute the index of site by using X-gates and quantum modular subtractor \cite{cuccaro2004new,li2020efficient}:
\begin{equation}
\begin{aligned}
   &|j\rangle_{a_1}|0\rangle_{a_2}|0\rangle_{b_1}|0\rangle_{b_2}|0\rangle_{a_3}|0\rangle_{b_3}\\
\rightarrow&|j\rangle_{a_1}|0\rangle_{a_2}|0\rangle_{b_1}|(n-j)\,\textrm{mod}\,n\rangle_{b_2}|0\rangle_{a_3}|0\rangle_{b_3}.
\end{aligned}
\end{equation}
(2) Perform $\mathcal{O}_{\boldsymbol T_n}$ on registers $\{a_1,a_2,a_3\}$ and $\{b_1,b_2,b_3\}$ respectively:
\begin{equation}
\begin{aligned}
  &|j\rangle_{a_1}|0\rangle_{a_2}|0\rangle_{b_1}|(n-j)\,\textrm{mod}\,n\rangle_{b_2}|0\rangle_{a_3}|0\rangle_{b_3}\\
\rightarrow&|j\rangle_{a_1}|0\rangle_{a_2}|0\rangle_{b_1}|(n-j)\,\textrm{mod}\,n\rangle_{b_2}|t_j\rangle_{a_3}|t_{-[(n-j)\,\textrm{mod}\,n]}\rangle_{b_3}.
\end{aligned}
\end{equation}
(3) Perform quantum addition operation on registers $\{a_3,b_3\}$ to yield
\begin{equation}
|j\rangle_{a_1}|0\rangle_{a_2}|0\rangle_{b_1}|(n-j)\,\textrm{mod}\,n\rangle_{b_2}|t_j\rangle_{a_3}|t_{j}+t_{-[(n-j)\,\textrm{mod}\,n]}\rangle_{b_3}.
\end{equation}
(4) Reverse the computation on registers $\{a_3,b_2\}$:
\begin{equation}
|j\rangle_{a_1}|0\rangle_{a_2}|0\rangle_{b_1}|0\rangle_{b_2}|0\rangle_{a_3}|t_{j}+t_{-[(n-j)\,\textrm{mod}\,n]}\rangle_{b_3}.
\end{equation}
The mapping on registers $a_1$ and $b_3$ is actually the $\mathcal{O}_1$.

Similarly, we can construct black box $\mathcal{O}_2$ satisfying
\begin{equation}
\mathcal{O}_2|j\rangle|0\rangle=|j\rangle|t_{j}-t_{-[(n-j)\,\textrm{mod}\,n]}\rangle, \qquad j=0,1,\cdots,n-1.
\end{equation}
Since the gates required for constructing $\mathcal{O}_1$ and $\mathcal{O}_2$ are negligible compared to other subroutines of the quantum algorithm, we did not consider their complexity in this paper. With these two black boxes $\mathcal{O}_1$ and $\mathcal{O}_2$, it is feasible to generate a controlled black box $\mathcal{O}_{1\wedge2}$ of the form
\begin{equation}
  |0\rangle\langle 0|\otimes\mathcal{O}_1+|1\rangle\langle 1|\otimes\mathcal{O}_2.
\end{equation}
Note that the black box $\mathcal{O}_{1\wedge2}$ query in superposition  acting as
\begin{equation}
\mathcal{O}_{1\wedge2} \frac{1}{\sqrt{2n}}(\sum_{j=0}^{n-1}|0\rangle|j\rangle|0\rangle+\sum_{j=0}^{n-1}|1\rangle|j\rangle|0\rangle)=\frac{1}{\sqrt{2n}}\sum_{j=0}^{2n-1}|j\rangle|\tilde{t}_j\rangle.
\end{equation}
Thus, using this black box, we can approximatively implement $\boldsymbol V_{(\nabla[\boldsymbol T_n])}$ by the steerable black-box quantum state preparation algorithm.  Similarly, we can approximatively implement $\boldsymbol V_{(\nabla[\boldsymbol T_n]^{\ast})}$ by constructing a black box $\mathcal{O}_{1\wedge2}^{*}$ that returns $\tilde{t}_{j}^{\ast}$.

When implementing $\textrm{select} \boldsymbol U_{T_n}$, directly using the controlled circuit may take $O(n\textrm{log}\,n)$ elementary gates. To make the implementation more efficient, we take an idea similar to that in Ref. \cite{zhou2017efficient}.

More specifically, note that
\begin{equation}
  \boldsymbol Z_{1}^{j}=\sum_{a=0}^{n-1}|(a+j)\,\textrm{mod}\,n\rangle\langle a|,
\end{equation}
\begin{equation}
  \boldsymbol Z_{-1}^{j}=(\sum_{b=0}^{n-1}|(b+j)\,\textrm{mod}\,n\rangle\langle b|)
      (\sum_{b=0}^{n-1-j}|b\rangle\langle b|-\sum_{b=n-j}^{n-1}|b\rangle\langle b|).
\end{equation}
Thus, the action of $\textrm{select} \boldsymbol U_{T_n}$ on the basis states is
\begin{widetext}
\begin{equation}
\textrm{select} \boldsymbol U_{T_n}|j\rangle|e\rangle=\left\{\begin{array}{ccc}
|j\rangle|(e+j)\,\textrm{mod}\,n\rangle &\quad 0\leq j\leq n-1,  &0\leq e\leq n-1 \\
|j\rangle|(e+j)\,\textrm{mod}\,n\rangle &\quad n\leq j\leq 2n-1, &0\leq e\leq 2n-1-j \\
-|j\rangle|(e+j)\,\textrm{mod}\,n\rangle&\quad n\leq j\leq 2n-1, &2n-j\leq e\leq n-1\\
\end{array}\right.
\end{equation}
\end{widetext}

Then, on the one hand, let
\begin{equation}
f_1(j,e)=\left\{\begin{array}{ccc}
0 &\quad 0\leq j\leq n-1,  &0\leq e\leq n-1 \\
0 &\quad n\leq j\leq 2n-1, &0\leq e\leq 2n-1-j \\
1 &\quad n\leq j\leq 2n-1, &2n-j\leq e\leq n-1.\\
\end{array}\right.
\end{equation}
To compute this classical function with a quantum circuit, we take the quantum comparator \cite{cuccaro2004new,li2020efficient} which comprises $O(\textrm{log}\,n)$ elementary gates. Suppose that $a$ and $b$ are two natural numbers, the quantum comparator output the result $c$ of the comparison of the two numbers, i.e., if $b\geq a, c=0$; otherwise, $c=1$. The specific quantum circuit of $\boldsymbol U_{f_1}$ is as follows.\\
(1) Prepare an initial quantum state
\begin{equation*}
\begin{aligned}
  &|j\rangle_{a_1}|e\rangle_{a_2}|0\rangle_{b_1}|0\rangle_{b_2}|0\rangle_{c_1}|0\rangle_{c_2}|-\rangle_{c_3}\\
  \rightarrow&
|j\rangle_{a_1}|e\rangle_{a_2}|n-1\rangle_{b_1}|2n-1-j\rangle_{b_2}|0\rangle_{c_1}|0\rangle_{c_2}|-\rangle_{c_3}.
\end{aligned}
\end{equation*}
(2) Perform quantum comparator on registers $\{a_1,b_1,c_1\}$ and $\{a_2,b_2,c_2\}$ respectively.\\
(3) Perform Toffoli gate on the registers $\{c_1,c_2,c_3\}$.\\
(4) Reverse the computation on registers $\{c_2,c_1,b_2,b_1\}$ in order.\\
The mapping on registers $a_1, a_2$ and $c_3$ is
\begin{equation}
  \boldsymbol U_{f_1}|j\rangle|e\rangle\frac{|0\rangle-|1\rangle}{\sqrt{2}}=(-1)^{f_1(j,e)}|j\rangle|e\rangle\frac{|0\rangle-|1\rangle}{\sqrt{2}}.
\end{equation}




On the other hand, using quantum modular adder \cite{cuccaro2004new,li2020efficient}, which requires $O(\textrm{log}\,n)$  elementary gates, we can implement
\begin{equation}
  \boldsymbol U_{add1}|j\rangle|e\rangle=|j\rangle|(e+j)\,\textrm{mod}\,n\rangle.
\end{equation}
Therefore, by using $\boldsymbol U_{add1}$ and $\boldsymbol U_{f_1}$, we can implement $\textrm{select} \boldsymbol U_{T_n}$ equivalently (Due to linearity, the implementation is available for any state). In summary, $\textrm{select} \boldsymbol U_{T_n}$ can be implemented in time $O(\textrm{poly}\textrm{log}\,n)$.


Now, we analyse the error incurred due to imperfect state preparation. Let
$\boldsymbol U_{\bar{V}}$ and $\boldsymbol U_{\bar{V}^\ast}$ be unitaries that perform the steerable black-box quantum state preparation with the black boxes $\mathcal{O}_{1\wedge2}$ and $\mathcal{O}_{1\wedge2}^{*}$. Define $\boldsymbol U_V$ and $\boldsymbol U_{V^\ast}$ as shown below,
\begin{equation}
  \boldsymbol U_V|0\rangle_{a_1}|0\rangle_{a_2}=\sqrt{1-\delta^2}|V_{(\nabla[ \boldsymbol T_n])}\rangle_{a_1}|0\rangle_{a_2}+\delta|\zeta\rangle_{a_1}|1\rangle_{a_2},
\end{equation}
\begin{equation}
  \boldsymbol U_{V^\ast}|0\rangle_{a_1}|0\rangle_{a_2}=\sqrt{1-\delta^2}|V_{(\nabla[ \boldsymbol T_n]^\ast)}\rangle_{a_1}|0\rangle_{a_2}+\delta|\zeta\rangle_{a_1}|1\rangle_{a_2},
\end{equation}
where $|\zeta\rangle=\sum_{j=0}^{2n-1}\zeta_j|j\rangle$ is a quantum state similar in form to Eq. (\ref{lesser interest}).
Obviously, $\boldsymbol U_{\bar{V}}$ and $\boldsymbol U_{\bar{V}^\ast}$ are $\epsilon_p$-approximation of $\boldsymbol U_V$ and $\boldsymbol U_{V^\ast}$ respectively. To simplify the notation, we rewrite Toeplitz matrices as $\boldsymbol T_n= \frac{1}{2}\sum_{j=0}^{2n-1}\tilde{t}_j \boldsymbol U_j$ by using the symbol $\boldsymbol U_j$ to represent $\boldsymbol Z_{1}^i$ and $\boldsymbol Z_{-1}^i$ in the order of Eq. (\ref{decT}). Similarly, the operator defined by Eq. (\ref{controluT}) can be rewritten as $\textrm{select} \boldsymbol U=\sum_{j=0}^{2n-1}|j\rangle\langle j|_{a_1}\otimes \boldsymbol U_j$, where $\boldsymbol U_j$ perform on the register $s$. Let $|0\rangle_{a}\equiv|0\rangle_{a_1}|0\rangle_{a_2}$, we have

\begin{equation}\label{noteq}
\begin{aligned}
&\|\boldsymbol M- \frac{\chi}{2} (\langle0|_a\boldsymbol U_{\bar{V}^\ast}^\dagger\otimes\boldsymbol I_{s})(\textrm{select} \boldsymbol U\otimes \boldsymbol I_{a_2})(\boldsymbol U_{\bar{V}}|0\rangle_a\otimes \boldsymbol I_{s})\|\\
&\leq\|\boldsymbol M- \frac{\chi}{2} (\langle0|_a\boldsymbol U_{V^\ast}^\dagger\otimes\boldsymbol I_{s})(\textrm{select} \boldsymbol U\otimes \boldsymbol I_{a_2})(\boldsymbol U_{V}|0\rangle_a\otimes \boldsymbol I_{s})\|\\
&+\|\frac{\chi}{2} (\langle0|_a\boldsymbol U_{V^\ast}^\dagger\otimes\boldsymbol I_{s})(\textrm{select} \boldsymbol U\otimes \boldsymbol I_{a_2})(\boldsymbol U_{V}|0\rangle_a\otimes \boldsymbol I_{s})\\
&-\frac{\chi}{2}(\langle0|_a\boldsymbol U_{\bar{V}^\ast}^\dagger\otimes\boldsymbol I_{s})(\textrm{select} \boldsymbol U\otimes \boldsymbol I_{a_2})(\boldsymbol U_{\bar{V}}|0\rangle_a\otimes \boldsymbol I_{s})\|
\end{aligned}
\end{equation}

Note that,
\begin{widetext}
\begin{equation}
\begin{aligned}
&(\langle0|_{a}\boldsymbol U_{V^\ast}^\dagger\otimes \boldsymbol I_s)(\textrm{select} \boldsymbol U\otimes\boldsymbol I_{a_2})(\boldsymbol U_{V}|0\rangle_{a}\otimes \boldsymbol I_s)\\
&=(\langle0|_{a_{1}}\langle0|_{a_{2}}\boldsymbol U_{V^\ast}^\dagger\otimes \boldsymbol I_s)(\sqrt{1-\delta^2}\sum_{j=0}^{2n-1}\sqrt{\frac{\tilde{t}_j}{\chi}}|j\rangle_{a_1}|0\rangle_{a_2}\otimes\boldsymbol U_j+\delta\sum_{j=0}^{2n-1}\zeta_j|j\rangle_{a_1}|1\rangle_{a_2}\otimes\boldsymbol U_j)\\
&=(\sqrt{1-\delta^2}\!\!\sum_{j=0}^{2n-1}\!\!\sqrt{\frac{\tilde{t}_j}{\chi}}\langle j|_{a_{1}}\langle0|_{a_{2}}\!\!+\delta\!\!\sum_{j=0}^{2n-1}\!\!\zeta_j\langle j|_{a_{1}}\langle1|_{a_{2}})(\sqrt{1-\delta^2}\!\!\sum_{j=0}^{2n-1}\!\!\sqrt{\frac{\tilde{t}_j}{\chi}}|j\rangle_{a_{1}}|0\rangle_{a_{2}}\otimes\!\boldsymbol U_j+\delta\!\!\sum_{j=0}^{2n-1}\!\!\zeta_j|j\rangle_{a_{1}}|1\rangle_{a_{2}}\otimes\!\boldsymbol U_j)\\
&=(1-\delta^2)\sum_{j=0}^{2n-1}\frac{\tilde{t}_j}{\chi}\boldsymbol U_j+\delta^2\sum_{j=0}^{2n-1}\zeta_j^2\boldsymbol U_j
\end{aligned}
\end{equation}
\end{widetext}
Then, computing the first term on the right hand side of the inequality,
\begin{equation}
\begin{aligned}
  &\|\boldsymbol M- \frac{\chi}{2} (\langle0|_a\boldsymbol U_{V^\ast}^\dagger\otimes\boldsymbol I_{s})(\textrm{select} \boldsymbol U\otimes \boldsymbol I_{a_2})(\boldsymbol U_{V}|0\rangle_a\otimes \boldsymbol I_{s})\|\\
  &=\|\frac{1}{2}\sum_{j=0}^{2n-1}\tilde{t}_j \boldsymbol U_j-\frac{1}{2}(1-\delta^2)\sum_{j=0}^{2n-1}\tilde{t}_j\boldsymbol U_j-\frac{1}{2}\chi\delta^2\sum_{j=0}^{2n-1}\zeta_j^2\boldsymbol U_j\|\\
  &=\|\frac{1}{2}\delta^2\sum_{j=0}^{2n-1}\tilde{t}_j\boldsymbol U_j-\frac{1}{2}\chi\delta^2\sum_{j=0}^{2n-1}\zeta_j^2\boldsymbol U_j\|\\
  &\leq\chi\delta^2
\end{aligned}
\end{equation}

In addition, since $\boldsymbol U_{\bar{V}}$ and $\boldsymbol U_{\bar{V}^\ast}$ are $\epsilon_p$-approximation of $\boldsymbol U_V$ and $\boldsymbol U_{V^\ast}$ respectively, the second term on the right hand side of the Eq.(\ref{noteq}) can be bounded by $\chi\epsilon_p$. Let $\chi\delta^2$ and $\chi\epsilon_p$ not larger than $\epsilon/2$, we can get the result (i) of theorem 2.

\subsection{QRAM data structure model}

For QRAM data structure model, the quantum state preparation operators can be implemented using the method of \cite{kerenidis2016quantum}. More specifically,

\begin{lemma}\label{datastructureT} (\cite{kerenidis2016quantum}) Suppose that  $\boldsymbol x\in \mathbb{C}^{n\times 1}$ is stored in a QRAM data structure, i.e., the entry $x_i$ is stored in $i$-th leaf of a binary tree, the internal node of the tree stores the sum of the modulus of elements in the subtree rooted at it.  Then, there is a quantum algorithm that can generate an $\epsilon_p$-approximation of $|x\rangle=\frac{1}{\sqrt{\|\boldsymbol x\|_1}}\sum_{i=0}^{n-1}\sqrt{x_i}|i\rangle$ with gate complexity $O(\textrm{polylog}(n/\epsilon_p))$.
\end{lemma}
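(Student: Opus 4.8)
The plan is to exploit the partial modulus-sums cached at the internal nodes of the binary tree to build $|x\rangle$ by a top-down cascade of controlled single-qubit rotations, following the Grover--Rudolph style state-preparation idea of \cite{kerenidis2016quantum} adapted to the $l_1$-normalisation. The key observation is that the squared amplitude $|x\rangle$ places on the basis state $|i\rangle$ is exactly $|x_i|/\|\boldsymbol x\|_1$, while its phase is $\tfrac12\arg x_i$; the former is governed entirely by the moduli stored in the tree, the latter only by the leaf entries. I would therefore first prepare the non-negative-amplitude state $\tfrac{1}{\sqrt{\|\boldsymbol x\|_1}}\sum_i\sqrt{|x_i|}\,|i\rangle$ and then correct the phases by a diagonal unitary.

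For the amplitude part, index each node at depth $\ell$ by a bit string $b_1\cdots b_\ell$ and let $s_{b_1\cdots b_\ell}$ denote the modulus-sum stored there, with the convention $s_{\emptyset}=\|\boldsymbol x\|_1$ at the root and $s_{b_1\cdots b_{\log n}}=|x_i|$ at the leaf $i=b_1\cdots b_{\log n}$. At level $\ell$, conditioned on the already-prepared register $|b_1\cdots b_{\ell-1}\rangle$, I would apply a $Y$-rotation through the angle $\theta$ fixed by $\cos(\theta/2)=\sqrt{s_{b_1\cdots b_{\ell-1}0}/s_{b_1\cdots b_{\ell-1}}}$. The amplitude accumulated on $|i\rangle$ after all $\log n$ levels then telescopes to
\begin{equation}
\prod_{\ell=1}^{\log n}\sqrt{\frac{s_{b_1\cdots b_\ell}}{s_{b_1\cdots b_{\ell-1}}}}=\sqrt{\frac{s_{b_1\cdots b_{\log n}}}{s_{\emptyset}}}=\sqrt{\frac{|x_i|}{\|\boldsymbol x\|_1}},
\end{equation}
as desired. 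Each rotation is realised in superposition by querying the relevant stored sums into an ancilla register, computing $\theta$ by reversible arithmetic, applying the controlled rotation, and uncomputing the ancilla; the binary-tree data structure is precisely what makes this superposed access cost only $O(\textrm{polylog}\,n)$ per level.

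To finish, I would query the (possibly complex) leaf entries $x_i$, apply the diagonal phase $e^{i\arg(x_i)/2}$ so that each $\sqrt{|x_i|}$ becomes the principal root $\sqrt{x_i}$, and then uncompute the query. For the error and complexity bookkeeping, the angles are computed to $O(\log(n/\epsilon_p))$ bits of precision; since a per-level rotation error of order $\epsilon_p/\log n$ accumulates at most additively across the $\log n$ levels, this precision keeps the total deviation of the prepared state from $|x\rangle$ within $\epsilon_p$. Counting the $\log n$ levels, each using $O(\textrm{polylog}(n/\epsilon_p))$ gates for the arithmetic, the controlled rotation, and the uncomputation, yields the claimed $O(\textrm{polylog}(n/\epsilon_p))$ gate complexity.

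The main obstacle I anticipate is not the algebra of the telescoping product but the careful propagation of the finite-precision angle errors through the cascade of superposed controlled rotations, together with confirming that the tree genuinely supports the required partial-sum queries in superposition at polylogarithmic cost. A secondary subtlety is the correct handling of the phase factor $e^{i\arg(x_i)/2}$ in the complex case, which must be synthesised to matching precision so as not to dominate the overall error budget.
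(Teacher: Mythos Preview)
Your proposal is correct and is in fact the standard Kerenidis--Prakash construction; the paper itself does not supply a proof of this lemma but simply cites \cite{kerenidis2016quantum}, so your sketch is more detailed than what the paper provides. The telescoping cascade of controlled rotations driven by the stored partial modulus-sums, followed by a leaf-level phase correction to obtain the principal square root, is exactly the argument behind the cited result, and your error and gate-count bookkeeping is adequate for the claimed $O(\mathrm{polylog}(n/\epsilon_p))$ bound.
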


Obviously, if $\{\tilde{t}_j\}_{j=0}^{n-1}$  and $\{\tilde{t}_j\}_{j=n}^{2n-1}$ are stored in such data structure respectively, there are two unitaries that generate the states:
\begin{equation}
  \boldsymbol U_1|0\rangle=\frac{1}{\sqrt{\chi_1}}\sum_{j=0}^{n-1}\sqrt{\tilde{t}_j}|j\rangle, \quad  \boldsymbol U_2|0\rangle=\frac{1}{\sqrt{\chi_2}}\sum_{j=0}^{n-1}\sqrt{\tilde{t}_{j+n}}|j\rangle,
\end{equation}
where $\chi_1=\sum_{j=0}^{n-1}|\tilde{t}_j|$, $\chi_2=\sum_{j=n}^{2n-1}|\tilde{t}_j|$.

Since $\chi_1$ and $\chi_2$ are known, which are stored in the root of the  binary trees, we can prepare  a state
\begin{equation*}
  \frac{\sqrt{\chi_1}}{\sqrt{\chi_{\boldsymbol T_{n}}}}|0\rangle|0\rangle+\frac{\sqrt{\chi_2}}{\sqrt{\chi_{\boldsymbol T_{n}}}}|1\rangle|0\rangle.
\end{equation*}
Then, performing a controlled unitary $|0\rangle\langle0|\otimes \boldsymbol U_1+|1\rangle\langle1|\otimes \boldsymbol U_2$, we can get the state $\frac{1}{\sqrt{\chi_{\boldsymbol T_{n}}}}\sum_{j=0}^{2n-1}\sqrt{\tilde{t}_j}|j\rangle $.
Thus, $\boldsymbol V_{(\nabla[\boldsymbol T_n])}$ can be implemented in QARM data structure model with complexity $O(\textrm{polylog}(n/\epsilon_p))$. Similarly, we can implement $\boldsymbol V_{(\nabla[\boldsymbol T_n]^{\ast})}$ with the same cost.

In addition, the $\textrm{select} \boldsymbol U_{T_n}$ can be implemented in the same way as in the black-box model. Taking into account the amplification of the error, we can implement the block-encoding with complexity $O(\textrm{polylog}(n\chi_{\boldsymbol T_n}/\epsilon))$ in QRAM data structure model. Besides, according to the constructed data structure, the memory cost in this data access model is $O(n)$.

\section{Proof of  corollary \ref{the3}}

\subsection{Black-box model}\label{blackboxTL}
For a Toeplitz-like matrix $\boldsymbol T_L$, given a black box  $O_{\boldsymbol T_L}$ that query the $k$-th non-zero element of the $i$-th row of $\boldsymbol T_L$,
\begin{equation*}
\mathcal{O}_{\boldsymbol T_L}|i,k\rangle|0\rangle=|i,k\rangle|\tau_{i,k}\rangle,
\end{equation*}
the following map can be performed by querying the black box $O_{\boldsymbol T_L}$ twice,
\begin{equation*}
\mathcal{O}_{ \tilde{\boldsymbol T}_L}|i,k\rangle|0\rangle=|i,k\rangle|\tau_{(i-1)\textrm{mod}n,k}-(-1)^{g(k)}\tau_{i,(k+1)\textrm{mod}n}\rangle.
\end{equation*}
$\mathcal{O}_{ \tilde{\boldsymbol T}_L}$ actually returns the $k$-th non-zero element of the $i$-th row of the the Sylvester displacements of Toeplitz-like matrices, i.e., $\tilde{\tau}_{i,k}$.

In addition, if a black box that computes the positions of the distinct elements (the element is different from the previous element on the same diagonal) on diagonals of the Toeplitz-like matrices is provided,  we can construct a black box that computes the positions of nonzero elements of the the Sylvester displacements of Toeplitz-like matrices, i.e.,
\begin{equation*}
  \mathcal{O}^p_{\tilde{\boldsymbol T}_L}|i,k\rangle=|i,f(i,k)\rangle.
\end{equation*}
where the function $f(i,k)$  gives the column index of the $k$-th nonzero element in row $i$ of $\nabla_{\boldsymbol Z_1, \boldsymbol Z_{-1}}[\boldsymbol T_L]$.

When implementing the state preparation operators, if the steerable black-box quantum state preparation is directly called, the query complexity should be $O(n)$. To overcome this obstacle, we first prepare a uniform superposition state that only represents the position of the non-zero elements of $\nabla_{\boldsymbol Z_1, \boldsymbol Z_{-1}}[\boldsymbol T_L]$. The specific process is as follows.

Prepare an initial state as:
\begin{equation}
 \frac{1}{\sqrt{2}}(|0\rangle_1+|1\rangle_1)|0\rangle_2|0\rangle_3.
\end{equation}

Apply Hadamard gates to Reg 2 and Reg 3 controlled by Reg 1:
\begin{equation}
\begin{aligned}
  &\frac{1}{\sqrt{2}}(|0\rangle_1+|1\rangle_1)|0\rangle_2|0\rangle_3\\
  &\xrightarrow{|0\rangle\langle0|_1\otimes \boldsymbol I_2\otimes\boldsymbol H_3^{\otimes \textrm{log}n}+|1\rangle\langle1|_1\otimes \boldsymbol H_2^{\otimes \textrm{log}n}\otimes\boldsymbol I_3^{\otimes \textrm{log}(n/d)}\otimes\boldsymbol H_3^{\otimes \textrm{log}d}}\\
  &\frac{1}{\sqrt{2n}}|0\rangle_1|0\rangle_2\sum_{k=0}^{n-1}|k\rangle_3+\frac{1}{\sqrt{2nd}}|1\rangle_1\sum_{i=0}^{n-1}|i\rangle_2\sum_{k=0}^{d-1}|k\rangle_3.
\end{aligned}
\end{equation}

Using a controlled-$\mathcal{O}^p_{ \tilde{\boldsymbol T}_L}$, i.e., $|0\rangle\langle0|_1\otimes \boldsymbol I_{2,3}+|1\rangle\langle1|_1\otimes \mathcal{O}^p_{ \tilde{\boldsymbol T}_L}$, we can prepare:
\begin{equation}
\frac{1}{\sqrt{2n}}|0\rangle_1|0\rangle_2\sum_{k=0}^{n-1}|k\rangle_3+\frac{1}{\sqrt{2nd}}|1\rangle_1\sum_{i=0}^{n-1}|i\rangle_2\!\!\!\sum_{\{f(i,k)|\tilde{m}_{i,f(i,k)}\neq0\}}\!\!\!\!\!\!\!\!\!\!\!\!\!\!\!|f(i,k)\rangle_3.
\end{equation}

Add an ancillary qubit and perform a controlled rotation:

\begin{equation}
\begin{aligned}
&\frac{1}{\sqrt{2n}}|0\rangle_1|0\rangle_2\sum_{k=0}^{n-1}|k\rangle_3(\frac{1}{\sqrt{d}}|0\rangle_4+\sqrt{1-\frac{1}{d}}|1\rangle_4)\\
&+\frac{1}{\sqrt{2nd}}|1\rangle_1\sum_{i=0}^{n-1}|i\rangle_2\!\!\!\sum_{\{f(i,k)|\tilde{m}_{i,f(i,k)}\neq0\}}\!\!\!\!\!\!\!\!\!\!\!\!\!\!\!|f(i,k)\rangle_3|0\rangle_4.
\end{aligned}
\end{equation}

Amplify the amplitude of $|0\rangle_4$. Since the amplitude is known,  it can be amplified to exactly 1 by using Long's amplitude amplification with zero theoretical failure rate \cite{long2001grover}. The obtained state is denoted as $|\Phi_{\textrm{intm}}\rangle$:
\begin{equation}
\begin{aligned}
&|\Phi_{\textrm{intm}}\rangle=\frac{1}{\sqrt{n(d+1)}}|0\rangle_1|0\rangle_2\sum_{k=0}^{n-1}|k\rangle_3|0\rangle_4\\
&+\frac{1}{\sqrt{n(d+1)}}|1\rangle_1\sum_{i=0}^{n-1}|i\rangle_2\!\!\!\sum_{\{f(i,k)|\tilde{m}_{i,f(i,k)}\neq0\}}\!\!\!\!\!\!\!\!\!\!\!\!\!\!\!\!|f(i,k)\rangle_3|0\rangle_4.
\end{aligned}
\end{equation}

We run Long's amplitude amplification again to get the quantum state $|\Phi_{\textrm{init}}\rangle$:

\begin{equation}
\begin{aligned}
&|\Phi_{\textrm{init}}\rangle=\frac{1}{\sqrt{n+(n-1)d}}|0\rangle_1|0\rangle_2\sum_{k=0}^{n-1}|k\rangle_3|0\rangle_4\\
&+\frac{1}{\sqrt{n+(n-1)d}}|1\rangle_1\sum_{i=1}^{n-1}|i\rangle_2\!\!\!\sum_{\{f(i,k)|\tilde{m}_{i,f(i,k)}\neq0\}}\!\!\!\!\!\!\!\!\!\!\!\!\!\!\!|f(i,k)\rangle_3|0\rangle_4.
\end{aligned}
\end{equation}

Note that the success probability of getting $|\Phi_{\textrm{intm}}\rangle$ is $\frac{d+1}{2d}\geq\frac{1}{2}$, and the success probability of getting $|\Phi_{\textrm{init}}\rangle$ is $\frac{(n-1)d+n}{n(d+1)}\geq\frac{1}{2}$, thus only a few iterations are required for the amplitude amplifications.

With the quantum state $|\Phi_{\textrm{init}}\rangle$ and the black box $O_{\tilde{\boldsymbol T}_L}$, we can approximatively implement $\boldsymbol V_{(\nabla[\boldsymbol T_L])}$ and $\boldsymbol V_{(\nabla[\boldsymbol T_L]^{\ast})}$by the steerable black-box quantum state preparation algorithm. The query complexity is  $O\Big(\frac{\sqrt{nd}\textrm{log}(1/\delta)}{\sqrt{\chi_{\boldsymbol T_L}}}\Big)$, and $O\Big(\frac{\sqrt{nd}\textrm{log}(1/\delta)}{\sqrt{\chi_{\boldsymbol T_L}}}\textrm{polylog}(\frac{nd}{\sqrt{\chi_{\boldsymbol T_L}}\epsilon_p})\Big)$ elementary gates are required.

To implement $\textrm{select} \boldsymbol U_{T_L}$, we first observe its action on the basis states. Notice that
\begin{equation}
 \textrm{select} \boldsymbol U_{T_L}|i\rangle|k\rangle|e\rangle=\left\{\begin{array}{cc}
|i\rangle|k\rangle|(i+e-k-1)\,\textrm{mod}\,n\rangle\qquad & \\ \qquad\qquad\textrm{where}\;0\leq e\leq k, \\
-|i\rangle|k\rangle|(i+e-k-1)\,\textrm{mod}\,n\rangle\qquad & \\ \qquad\qquad \textrm{where}\; k< e\leq n-1.\\
\end{array}\right.
\end{equation}

Then, on the one hand, let
\begin{equation}
 f_2(k,e)=\left\{\begin{array}{cc}
0  &\quad\quad 0\leq e\leq k, \\
1 & \quad k< e\leq n-1.\\
\end{array}\right.
\end{equation}
Similar to the calculation of $f_1$, we can construct a quantum circuit to implement
\begin{equation}
  \boldsymbol U_{f_2}|k\rangle|e\rangle\frac{|0\rangle-|1\rangle}{\sqrt{2}}=(-1)^{f_2(k,e)}|k\rangle|e\rangle\frac{|0\rangle-|1\rangle}{\sqrt{2}}.
\end{equation}

On the other hand, using quantum adders \cite{cuccaro2004new,li2020efficient}, which requires $O(\textrm{log}\,n)$  elementary gates, we can implement
\begin{equation}
  \boldsymbol U_{add2}|i\rangle|k\rangle|e\rangle=|i\rangle|k\rangle|(i+e-k-1)\,\textrm{mod}\,n\rangle.
\end{equation}
Therefore, $\textrm{select} \boldsymbol U_{T_L}$ can be implemented by $\boldsymbol U_{add2}$ and $\boldsymbol I\otimes\boldsymbol U_{f_2}$ in time $O(\textrm{poly}\textrm{log}\,n)$.

Based on the above conclusions and following the error analysis in \ref{blackboxT}, we can infer the result (i) of  corollary \ref{the3}.

\subsection{QRAM data structure model}
For QRAM data structure model, the quantum state preparation operators can be implemented as follows,

\begin{lemma}\label{datastructureTL} Let $\boldsymbol T_L\in \mathbb{C}^{n\times n}$, $\|\tilde{\tau}_{i,\cdot}\|_1$ be the 1-norm of the $i$-th row of $\nabla_{\boldsymbol Z_1, \boldsymbol Z_{-1}}[\boldsymbol T_L]$. Suppose that $\nabla_{\boldsymbol Z_1, \boldsymbol Z_{-1}}[\boldsymbol T_L]$ is stored in a QRAM data structure, more specifically, for the $i$-th row of $\nabla_{\boldsymbol Z_1, \boldsymbol Z_{-1}}[\boldsymbol T_L]$, the entry $\tilde{\tau}_{i,k}$ is stored in $k$-th leaf of a binary tree, the internal node of the tree stores the sum of the modulus of elements in the subtree rooted at it, and an additional binary tree of which the $i$-th leaf stores $\|\tilde{\tau}_{i,\cdot}\|_1$.  Then, there is a quantum algorithm that can perform the following maps with $\epsilon_p$-precision in time $O(\textrm{polylog}(n/\epsilon_p))$:
\begin{equation}
  \boldsymbol P:|i\rangle|0\rangle\mapsto\frac{\sum_{k=0}^{n-1}\sqrt{\tilde{\tau}_{i,k}}|i\rangle|k\rangle}{\sqrt{\|\tilde{\tau}_{i,\cdot}\|_1}},
\end{equation}
\begin{equation}
  \boldsymbol P':|i\rangle|0\rangle\mapsto\frac{\sum_{k=0}^{n-1}\sqrt{\tilde{\tau}_{i,k}^{\ast}}|i\rangle|k\rangle}{\sqrt{\|\tilde{\tau}_{i,\cdot}\|_1}},
\end{equation}
\begin{equation}
  \boldsymbol Q:|0\rangle|k\rangle\mapsto\frac{\sum_{i=0}^{n-1}\sqrt{\|\tilde{\tau}_{i,\cdot}\|_1}|i\rangle|k\rangle}{\sqrt{\chi_{\boldsymbol T_L}}}.
\end{equation}
\end{lemma}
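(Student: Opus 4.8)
The plan is to reduce all three maps to the single-vector preparation procedure of Lemma \ref{datastructureT} (the Kerenidis--Prakash result), invoked in three slightly different ways. The one feature of the QRAM data structure I will lean on throughout is that the values stored at the internal tree nodes can be queried \emph{in superposition} over an index register, so any rotation angle prescribed by the tree can be applied coherently across a superposition of indices.

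First, for $\boldsymbol P$ I would fix the row index $i$ and observe that the target $\frac{1}{\sqrt{\|\tilde\tau_{i,\cdot}\|_1}}\sum_k \sqrt{\tilde\tau_{i,k}}|k\rangle$ is exactly the single-vector state of Lemma \ref{datastructureT} associated to the $i$-th row, whose own binary tree stores the modulus sums at its internal nodes. That preparation runs as $\log n$ layers of controlled rotations, one per tree level, where the rotation on the $l$-th qubit is set by the ratio of the two child node values under the current node, followed by a single-qubit phase gate imprinting the argument of $\sqrt{\tilde\tau_{i,k}}$ (half the argument of $\tilde\tau_{i,k}$ on the main branch). Since every such node value is fetchable in superposition over $i$, I would condition the whole sequence on the row register $|i\rangle$; by linearity this yields $\boldsymbol P$ on an arbitrary superposition of $i$. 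The normalization $\sqrt{\|\tilde\tau_{i,\cdot}\|_1}$ is simply the square root of the root value of the $i$-th tree, so there is no extra normalization bookkeeping.

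Second, $\boldsymbol P'$ differs only in wanting $\sqrt{\tilde\tau_{i,k}^{\ast}}$. Because $|\tilde\tau_{i,k}^{\ast}| = |\tilde\tau_{i,k}|$, every stored modulus sum, every magnitude rotation, and the normalization $\|\tilde\tau_{i,\cdot}\|_1$ are unchanged; only the imprinted phase is negated to the conjugate half-argument. Hence $\boldsymbol P'$ is the same circuit as $\boldsymbol P$ with the phase gates conjugated, at identical cost, and one checks $\sqrt{\tilde\tau_{i,k}}(\sqrt{\tilde\tau_{i,k}^{\ast}})^{\ast}=\tilde\tau_{i,k}$ holds on the chosen branch. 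Third, $\boldsymbol Q$ is a direct instance of Lemma \ref{datastructureT} applied to the additional binary tree whose $i$-th leaf stores $\|\tilde\tau_{i,\cdot}\|_1$, preparing $\frac{1}{\sqrt{\chi_{\boldsymbol T_L}}}\sum_i \sqrt{\|\tilde\tau_{i,\cdot}\|_1}|i\rangle$ with $\chi_{\boldsymbol T_L}=\sum_i\|\tilde\tau_{i,\cdot}\|_1$ read from the root; the register $|k\rangle$ is a spectator and is left untouched.

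For complexity and error, each layer is a controlled rotation plus phase implemented to precision $\epsilon_r$ with $O(\textrm{polylog}(1/\epsilon_r))$ gates, and there are $\log n$ layers, so choosing $\epsilon_r=\textrm{poly}(\epsilon_p/n)$ keeps the accumulated error at most $\epsilon_p$ in operator norm and gives $O(\textrm{polylog}(n/\epsilon_p))$ for each of $\boldsymbol P,\boldsymbol P',\boldsymbol Q$. The hard part will not be any single step but verifying that the $\epsilon_p$ bound holds \emph{uniformly over the superposition of row indices}: Lemma \ref{datastructureT} guarantees accuracy per vector, whereas the controlled version must be accurate for all rows simultaneously, so I would bound the worst-case per-row rotation error and argue it dominates the global operator-norm error. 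A secondary care point is fixing the main branch of the complex square root consistently, which is precisely what makes the phases of $\boldsymbol P$ and $\boldsymbol P'$ exact conjugates and lets later products reconstruct $\tilde\tau_{i,k}$ rather than $|\tilde\tau_{i,k}|$.
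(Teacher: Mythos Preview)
Your proposal is correct and follows the same approach as the paper, which simply states that the conclusion ``can be directly derived from the results in \cite{kerenidis2016quantum}'' without further elaboration. Your write-up actually supplies the details the paper omits---the controlled invocation of the Kerenidis--Prakash preparation over the row register for $\boldsymbol P$ and $\boldsymbol P'$, and the direct application to the norm tree for $\boldsymbol Q$---so it is more thorough than, but entirely in line with, the paper's own argument.
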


This conclusion can be directly derived from the results in \cite{kerenidis2016quantum}. Obviously,

\begin{equation}
\boldsymbol V_{(\nabla[\boldsymbol T_L])}|0\rangle|0\rangle=\boldsymbol P\boldsymbol Q|0\rangle|0\rangle=\frac{1}{\sqrt{\chi_{\boldsymbol T_L}}}\sum_{i=0}^{n-1}\sum_{k=0}^{n-1}\sqrt{\tilde{\tau}_{i,k}}|i\rangle|k\rangle.
\end{equation}
Similarly, we can efficiently implement $\boldsymbol V_{(\nabla[\boldsymbol T_L]^{\ast})}$ with $\boldsymbol P'$ and $\boldsymbol Q$.

Note that the $\textrm{select} \boldsymbol U_{T_L}$ can be implemented in the same way as in \ref{blackboxTL}. Taking into account the amplification of the error, we can implement the block-encoding with complexity $O(\textrm{polylog}(n\chi_{\boldsymbol T_L}/\epsilon))$ in QRAM data structure model. Moreover, the memory cost in this data access model is easy to calculate as $O(dn\textrm{log}\,n)$.

\bibliography{block_based_re}

\begin{thebibliography}{59}%
\makeatletter
\providecommand \@ifxundefined [1]{%
 \@ifx{#1\undefined}
}%
\providecommand \@ifnum [1]{%
 \ifnum #1\expandafter \@firstoftwo
 \else \expandafter \@secondoftwo
 \fi
}%
\providecommand \@ifx [1]{%
 \ifx #1\expandafter \@firstoftwo
 \else \expandafter \@secondoftwo
 \fi
}%
\providecommand \natexlab [1]{#1}%
\providecommand \enquote  [1]{``#1''}%
\providecommand \bibnamefont  [1]{#1}%
\providecommand \bibfnamefont [1]{#1}%
\providecommand \citenamefont [1]{#1}%
\providecommand \href@noop [0]{\@secondoftwo}%
\providecommand \href [0]{\begingroup \@sanitize@url \@href}%
\providecommand \@href[1]{\@@startlink{#1}\@@href}%
\providecommand \@@href[1]{\endgroup#1\@@endlink}%
\providecommand \@sanitize@url [0]{\catcode `\\12\catcode `\$12\catcode
  `\&12\catcode `\#12\catcode `\^12\catcode `\_12\catcode `\%12\relax}%
\providecommand \@@startlink[1]{}%
\providecommand \@@endlink[0]{}%
\providecommand \url  [0]{\begingroup\@sanitize@url \@url }%
\providecommand \@url [1]{\endgroup\@href {#1}{\urlprefix }}%
\providecommand \urlprefix  [0]{URL }%
\providecommand \Eprint [0]{\href }%
\providecommand \doibase [0]{http://dx.doi.org/}%
\providecommand \selectlanguage [0]{\@gobble}%
\providecommand \bibinfo  [0]{\@secondoftwo}%
\providecommand \bibfield  [0]{\@secondoftwo}%
\providecommand \translation [1]{[#1]}%
\providecommand \BibitemOpen [0]{}%
\providecommand \bibitemStop [0]{}%
\providecommand \bibitemNoStop [0]{.\EOS\space}%
\providecommand \EOS [0]{\spacefactor3000\relax}%
\providecommand \BibitemShut  [1]{\csname bibitem#1\endcsname}%
\let\auto@bib@innerbib\@empty
\bibitem [{\citenamefont {Bennett}\ and\ \citenamefont
  {Brassard}(2014)}]{bennett2014quantum}%
  \BibitemOpen
  \bibfield  {author} {\bibinfo {author} {\bibfnamefont {C.~H.}\ \bibnamefont
  {Bennett}}\ and\ \bibinfo {author} {\bibfnamefont {G.}~\bibnamefont
  {Brassard}},\ }\href@noop {} {\bibfield  {journal} {\bibinfo  {journal}
  {Theor. Comput. Sci.}\ }\textbf {\bibinfo {volume} {560}},\ \bibinfo {pages}
  {7} (\bibinfo {year} {2014})}\BibitemShut {NoStop}%
\bibitem [{\citenamefont {Wei}\ \emph {et~al.}(2020)\citenamefont {Wei},
  \citenamefont {Cai}, \citenamefont {Wang}, \citenamefont {Qin}, \citenamefont
  {Gao},\ and\ \citenamefont {Wen}}]{wei2020error}%
  \BibitemOpen
  \bibfield  {author} {\bibinfo {author} {\bibfnamefont {C.-Y.}\ \bibnamefont
  {Wei}}, \bibinfo {author} {\bibfnamefont {X.-Q.}\ \bibnamefont {Cai}},
  \bibinfo {author} {\bibfnamefont {T.-Y.}\ \bibnamefont {Wang}}, \bibinfo
  {author} {\bibfnamefont {S.-J.}\ \bibnamefont {Qin}}, \bibinfo {author}
  {\bibfnamefont {F.}~\bibnamefont {Gao}}, \ and\ \bibinfo {author}
  {\bibfnamefont {Q.-Y.}\ \bibnamefont {Wen}},\ }\href@noop {} {\bibfield
  {journal} {\bibinfo  {journal} {IEEE Journal on Selected Areas in
  Communications}\ }\textbf {\bibinfo {volume} {38}},\ \bibinfo {pages} {517}
  (\bibinfo {year} {2020})}\BibitemShut {NoStop}%
\bibitem [{\citenamefont {Xu}\ and\ \citenamefont {Jiang}(2021)}]{xu2021novel}%
  \BibitemOpen
  \bibfield  {author} {\bibinfo {author} {\bibfnamefont {G.-B.}\ \bibnamefont
  {Xu}}\ and\ \bibinfo {author} {\bibfnamefont {D.-H.}\ \bibnamefont {Jiang}},\
  }\href@noop {} {\bibfield  {journal} {\bibinfo  {journal} {Quantum
  Information Processing}\ }\textbf {\bibinfo {volume} {20}},\ \bibinfo {pages}
  {128} (\bibinfo {year} {2021})}\BibitemShut {NoStop}%
\bibitem [{\citenamefont {Gao}\ \emph {et~al.}(2019)\citenamefont {Gao},
  \citenamefont {Qin}, \citenamefont {Huang},\ and\ \citenamefont
  {Wen}}]{gao2019quantum}%
  \BibitemOpen
  \bibfield  {author} {\bibinfo {author} {\bibfnamefont {F.}~\bibnamefont
  {Gao}}, \bibinfo {author} {\bibfnamefont {S.}~\bibnamefont {Qin}}, \bibinfo
  {author} {\bibfnamefont {W.}~\bibnamefont {Huang}}, \ and\ \bibinfo {author}
  {\bibfnamefont {Q.}~\bibnamefont {Wen}},\ }\href@noop {} {\bibfield
  {journal} {\bibinfo  {journal} {SCIENCE CHINA Physics, Mechanics \&
  Astronomy}\ }\textbf {\bibinfo {volume} {62}},\ \bibinfo {pages} {70301}
  (\bibinfo {year} {2019})}\BibitemShut {NoStop}%
\bibitem [{\citenamefont {Fitzsimons}(2017)}]{fitzsimons2017private}%
  \BibitemOpen
  \bibfield  {author} {\bibinfo {author} {\bibfnamefont {J.~F.}\ \bibnamefont
  {Fitzsimons}},\ }\href@noop {} {\bibfield  {journal} {\bibinfo  {journal}
  {npj Quantum Information}\ }\textbf {\bibinfo {volume} {3}},\ \bibinfo
  {pages} {1} (\bibinfo {year} {2017})}\BibitemShut {NoStop}%
\bibitem [{\citenamefont {He}\ \emph {et~al.}(2018)\citenamefont {He},
  \citenamefont {Sun}, \citenamefont {Yang},\ and\ \citenamefont
  {Yuan}}]{he2018exact}%
  \BibitemOpen
  \bibfield  {author} {\bibinfo {author} {\bibfnamefont {X.}~\bibnamefont
  {He}}, \bibinfo {author} {\bibfnamefont {X.}~\bibnamefont {Sun}}, \bibinfo
  {author} {\bibfnamefont {G.}~\bibnamefont {Yang}}, \ and\ \bibinfo {author}
  {\bibfnamefont {P.}~\bibnamefont {Yuan}},\ }\href@noop {} {\bibfield
  {journal} {\bibinfo  {journal} {arXiv preprint arXiv:1801.05717}\ } (\bibinfo
  {year} {2018})}\BibitemShut {NoStop}%
\bibitem [{\citenamefont {Chen}\ \emph {et~al.}(2020)\citenamefont {Chen},
  \citenamefont {Ye},\ and\ \citenamefont {Li}}]{chen2020characterization}%
  \BibitemOpen
  \bibfield  {author} {\bibinfo {author} {\bibfnamefont {W.}~\bibnamefont
  {Chen}}, \bibinfo {author} {\bibfnamefont {Z.}~\bibnamefont {Ye}}, \ and\
  \bibinfo {author} {\bibfnamefont {L.}~\bibnamefont {Li}},\ }\href@noop {}
  {\bibfield  {journal} {\bibinfo  {journal} {Physical Review A}\ }\textbf
  {\bibinfo {volume} {101}},\ \bibinfo {pages} {022325} (\bibinfo {year}
  {2020})}\BibitemShut {NoStop}%
\bibitem [{\citenamefont {Harrow}\ \emph {et~al.}(2009)\citenamefont {Harrow},
  \citenamefont {Hassidim},\ and\ \citenamefont {Lloyd}}]{eq1}%
  \BibitemOpen
  \bibfield  {author} {\bibinfo {author} {\bibfnamefont {A.~W.}\ \bibnamefont
  {Harrow}}, \bibinfo {author} {\bibfnamefont {A.}~\bibnamefont {Hassidim}}, \
  and\ \bibinfo {author} {\bibfnamefont {S.}~\bibnamefont {Lloyd}},\
  }\href@noop {} {\bibfield  {journal} {\bibinfo  {journal} {Physical review
  letters}\ }\textbf {\bibinfo {volume} {103}},\ \bibinfo {pages} {150502}
  (\bibinfo {year} {2009})}\BibitemShut {NoStop}%
\bibitem [{\citenamefont {Zhao}\ \emph {et~al.}(2021)\citenamefont {Zhao},
  \citenamefont {Zhao}, \citenamefont {Rebentrost},\ and\ \citenamefont
  {Fitzsimons}}]{zhao2021compiling}%
  \BibitemOpen
  \bibfield  {author} {\bibinfo {author} {\bibfnamefont {L.}~\bibnamefont
  {Zhao}}, \bibinfo {author} {\bibfnamefont {Z.}~\bibnamefont {Zhao}}, \bibinfo
  {author} {\bibfnamefont {P.}~\bibnamefont {Rebentrost}}, \ and\ \bibinfo
  {author} {\bibfnamefont {J.}~\bibnamefont {Fitzsimons}},\ }\href@noop {}
  {\bibfield  {journal} {\bibinfo  {journal} {Quantum Machine Intelligence}\
  }\textbf {\bibinfo {volume} {3}},\ \bibinfo {pages} {1} (\bibinfo {year}
  {2021})}\BibitemShut {NoStop}%
\bibitem [{\citenamefont {Gily{\'e}n}\ \emph {et~al.}(2019)\citenamefont
  {Gily{\'e}n}, \citenamefont {Su}, \citenamefont {Low},\ and\ \citenamefont
  {Wiebe}}]{gilyen2019quantum}%
  \BibitemOpen
  \bibfield  {author} {\bibinfo {author} {\bibfnamefont {A.}~\bibnamefont
  {Gily{\'e}n}}, \bibinfo {author} {\bibfnamefont {Y.}~\bibnamefont {Su}},
  \bibinfo {author} {\bibfnamefont {G.~H.}\ \bibnamefont {Low}}, \ and\
  \bibinfo {author} {\bibfnamefont {N.}~\bibnamefont {Wiebe}},\ }in\ \href@noop
  {} {\emph {\bibinfo {booktitle} {Proceedings of the 51st Annual ACM SIGACT
  Symposium on Theory of Computing}}}\ (\bibinfo {organization} {ACM},\
  \bibinfo {year} {2019})\ pp.\ \bibinfo {pages} {193--204}\BibitemShut
  {NoStop}%
\bibitem [{\citenamefont {Yu}\ \emph {et~al.}(2016)\citenamefont {Yu},
  \citenamefont {Gao}, \citenamefont {Wang},\ and\ \citenamefont
  {Wen}}]{yu2016quantum}%
  \BibitemOpen
  \bibfield  {author} {\bibinfo {author} {\bibfnamefont {C.-H.}\ \bibnamefont
  {Yu}}, \bibinfo {author} {\bibfnamefont {F.}~\bibnamefont {Gao}}, \bibinfo
  {author} {\bibfnamefont {Q.-L.}\ \bibnamefont {Wang}}, \ and\ \bibinfo
  {author} {\bibfnamefont {Q.-Y.}\ \bibnamefont {Wen}},\ }\href@noop {}
  {\bibfield  {journal} {\bibinfo  {journal} {Physical Review A}\ }\textbf
  {\bibinfo {volume} {94}},\ \bibinfo {pages} {042311} (\bibinfo {year}
  {2016})}\BibitemShut {NoStop}%
\bibitem [{\citenamefont {Yu}\ \emph {et~al.}(2019{\natexlab{a}})\citenamefont
  {Yu}, \citenamefont {Gao}, \citenamefont {Lin},\ and\ \citenamefont
  {Wang}}]{yu2019quantumprincipal}%
  \BibitemOpen
  \bibfield  {author} {\bibinfo {author} {\bibfnamefont {C.-H.}\ \bibnamefont
  {Yu}}, \bibinfo {author} {\bibfnamefont {F.}~\bibnamefont {Gao}}, \bibinfo
  {author} {\bibfnamefont {S.}~\bibnamefont {Lin}}, \ and\ \bibinfo {author}
  {\bibfnamefont {J.}~\bibnamefont {Wang}},\ }\href@noop {} {\bibfield
  {journal} {\bibinfo  {journal} {Quantum Information Processing}\ }\textbf
  {\bibinfo {volume} {18}},\ \bibinfo {pages} {249} (\bibinfo {year}
  {2019}{\natexlab{a}})}\BibitemShut {NoStop}%
\bibitem [{\citenamefont {Yu}\ \emph {et~al.}(2019{\natexlab{b}})\citenamefont
  {Yu}, \citenamefont {Gao},\ and\ \citenamefont {Wen}}]{yu2019improved}%
  \BibitemOpen
  \bibfield  {author} {\bibinfo {author} {\bibfnamefont {C.-H.}\ \bibnamefont
  {Yu}}, \bibinfo {author} {\bibfnamefont {F.}~\bibnamefont {Gao}}, \ and\
  \bibinfo {author} {\bibfnamefont {Q.}~\bibnamefont {Wen}},\ }\href@noop {}
  {\bibfield  {journal} {\bibinfo  {journal} {IEEE Transactions on Knowledge
  and Data Engineering}\ } (\bibinfo {year} {2019}{\natexlab{b}})}\BibitemShut
  {NoStop}%
\bibitem [{\citenamefont {Pan}\ \emph {et~al.}(2020)\citenamefont {Pan},
  \citenamefont {Wan}, \citenamefont {Liu}, \citenamefont {Wang}, \citenamefont
  {Qin}, \citenamefont {Wen},\ and\ \citenamefont {Gao}}]{pan2020improved}%
  \BibitemOpen
  \bibfield  {author} {\bibinfo {author} {\bibfnamefont {S.-J.}\ \bibnamefont
  {Pan}}, \bibinfo {author} {\bibfnamefont {L.-C.}\ \bibnamefont {Wan}},
  \bibinfo {author} {\bibfnamefont {H.-L.}\ \bibnamefont {Liu}}, \bibinfo
  {author} {\bibfnamefont {Q.-L.}\ \bibnamefont {Wang}}, \bibinfo {author}
  {\bibfnamefont {S.-J.}\ \bibnamefont {Qin}}, \bibinfo {author} {\bibfnamefont
  {Q.-Y.}\ \bibnamefont {Wen}}, \ and\ \bibinfo {author} {\bibfnamefont
  {F.}~\bibnamefont {Gao}},\ }\href@noop {} {\bibfield  {journal} {\bibinfo
  {journal} {Physical Review A}\ }\textbf {\bibinfo {volume} {102}},\ \bibinfo
  {pages} {052402} (\bibinfo {year} {2020})}\BibitemShut {NoStop}%
\bibitem [{\citenamefont {Montanaro}(2016)}]{alg}%
  \BibitemOpen
  \bibfield  {author} {\bibinfo {author} {\bibfnamefont {A.}~\bibnamefont
  {Montanaro}},\ }\href@noop {} {\bibfield  {journal} {\bibinfo  {journal} {npj
  Quantum Information}\ }\textbf {\bibinfo {volume} {2}},\ \bibinfo {pages}
  {15023} (\bibinfo {year} {2016})}\BibitemShut {NoStop}%
\bibitem [{\citenamefont {Kailath}\ and\ \citenamefont
  {Sayed}(1999)}]{kailath1999fast}%
  \BibitemOpen
  \bibfield  {author} {\bibinfo {author} {\bibfnamefont {T.}~\bibnamefont
  {Kailath}}\ and\ \bibinfo {author} {\bibfnamefont {A.~H.}\ \bibnamefont
  {Sayed}},\ }\href@noop {} {\emph {\bibinfo {title} {Fast reliable algorithms
  for matrices with structure}}}\ (\bibinfo  {publisher} {SIAM},\ \bibinfo
  {year} {1999})\BibitemShut {NoStop}%
\bibitem [{\citenamefont {Pan}(2001)}]{pan2001structured}%
  \BibitemOpen
  \bibfield  {author} {\bibinfo {author} {\bibfnamefont {V.}~\bibnamefont
  {Pan}},\ }\href@noop {} {\emph {\bibinfo {title} {Structured Matrices and
  Polynomials: Unified Superfast Algorithms}}}\ (\bibinfo  {publisher}
  {Springer Science \& Business Media},\ \bibinfo {year} {2001})\BibitemShut
  {NoStop}%
\bibitem [{\citenamefont {Peller}(2012)}]{peller2012hankel}%
  \BibitemOpen
  \bibfield  {author} {\bibinfo {author} {\bibfnamefont {V.}~\bibnamefont
  {Peller}},\ }\href@noop {} {\emph {\bibinfo {title} {Hankel operators and
  their applications}}}\ (\bibinfo  {publisher} {Springer Science \& Business
  Media},\ \bibinfo {year} {2012})\BibitemShut {NoStop}%
\bibitem [{\citenamefont {Cao}\ \emph {et~al.}(2013)\citenamefont {Cao},
  \citenamefont {Papageorgiou}, \citenamefont {Petras}, \citenamefont {Traub},\
  and\ \citenamefont {Kais}}]{pos}%
  \BibitemOpen
  \bibfield  {author} {\bibinfo {author} {\bibfnamefont {Y.}~\bibnamefont
  {Cao}}, \bibinfo {author} {\bibfnamefont {A.}~\bibnamefont {Papageorgiou}},
  \bibinfo {author} {\bibfnamefont {I.}~\bibnamefont {Petras}}, \bibinfo
  {author} {\bibfnamefont {J.}~\bibnamefont {Traub}}, \ and\ \bibinfo {author}
  {\bibfnamefont {S.}~\bibnamefont {Kais}},\ }\href@noop {} {\bibfield
  {journal} {\bibinfo  {journal} {New Journal of Physics}\ }\textbf {\bibinfo
  {volume} {15}},\ \bibinfo {pages} {013021} (\bibinfo {year}
  {2013})}\BibitemShut {NoStop}%
\bibitem [{\citenamefont {Mahasinghe}\ and\ \citenamefont {Wang}(2016)}]{Tb}%
  \BibitemOpen
  \bibfield  {author} {\bibinfo {author} {\bibfnamefont {A.}~\bibnamefont
  {Mahasinghe}}\ and\ \bibinfo {author} {\bibfnamefont {J.~B.}\ \bibnamefont
  {Wang}},\ }\href@noop {} {\bibfield  {journal} {\bibinfo  {journal} {Journal
  of Physics A: Mathematical and Theoretical}\ }\textbf {\bibinfo {volume}
  {49}},\ \bibinfo {pages} {275301} (\bibinfo {year} {2016})}\BibitemShut
  {NoStop}%
\bibitem [{\citenamefont {Zhou}\ and\ \citenamefont
  {Wang}(2017)}]{zhou2017efficient}%
  \BibitemOpen
  \bibfield  {author} {\bibinfo {author} {\bibfnamefont {S.~S.}\ \bibnamefont
  {Zhou}}\ and\ \bibinfo {author} {\bibfnamefont {J.~B.}\ \bibnamefont
  {Wang}},\ }\href@noop {} {\bibfield  {journal} {\bibinfo  {journal} {Royal
  Society Open Science}\ }\textbf {\bibinfo {volume} {4}},\ \bibinfo {pages}
  {160906} (\bibinfo {year} {2017})}\BibitemShut {NoStop}%
\bibitem [{\citenamefont {Wan}\ \emph {et~al.}(2018)\citenamefont {Wan},
  \citenamefont {Yu}, \citenamefont {Pan}, \citenamefont {Gao}, \citenamefont
  {Wen},\ and\ \citenamefont {Qin}}]{wan2018asymptotic}%
  \BibitemOpen
  \bibfield  {author} {\bibinfo {author} {\bibfnamefont {L.-C.}\ \bibnamefont
  {Wan}}, \bibinfo {author} {\bibfnamefont {C.-H.}\ \bibnamefont {Yu}},
  \bibinfo {author} {\bibfnamefont {S.-J.}\ \bibnamefont {Pan}}, \bibinfo
  {author} {\bibfnamefont {F.}~\bibnamefont {Gao}}, \bibinfo {author}
  {\bibfnamefont {Q.-Y.}\ \bibnamefont {Wen}}, \ and\ \bibinfo {author}
  {\bibfnamefont {S.-J.}\ \bibnamefont {Qin}},\ }\href@noop {} {\bibfield
  {journal} {\bibinfo  {journal} {Physical Review A}\ }\textbf {\bibinfo
  {volume} {97}},\ \bibinfo {pages} {062322} (\bibinfo {year}
  {2018})}\BibitemShut {NoStop}%
\bibitem [{\citenamefont {Low}\ and\ \citenamefont
  {Chuang}(2019)}]{low2019hamiltonian}%
  \BibitemOpen
  \bibfield  {author} {\bibinfo {author} {\bibfnamefont {G.~H.}\ \bibnamefont
  {Low}}\ and\ \bibinfo {author} {\bibfnamefont {I.~L.}\ \bibnamefont
  {Chuang}},\ }\href@noop {} {\bibfield  {journal} {\bibinfo  {journal}
  {Quantum}\ }\textbf {\bibinfo {volume} {3}},\ \bibinfo {pages} {163}
  (\bibinfo {year} {2019})}\BibitemShut {NoStop}%
\bibitem [{\citenamefont {Chakraborty}\ \emph {et~al.}(2019)\citenamefont
  {Chakraborty}, \citenamefont {Gily{\'e}n},\ and\ \citenamefont
  {Jeffery}}]{chakraborty2019power}%
  \BibitemOpen
  \bibfield  {author} {\bibinfo {author} {\bibfnamefont {S.}~\bibnamefont
  {Chakraborty}}, \bibinfo {author} {\bibfnamefont {A.}~\bibnamefont
  {Gily{\'e}n}}, \ and\ \bibinfo {author} {\bibfnamefont {S.}~\bibnamefont
  {Jeffery}},\ }in\ \href@noop {} {\emph {\bibinfo {booktitle} {46th
  International Colloquium on Automata, Languages, and Programming (ICALP
  2019)}}}\ (\bibinfo {organization} {Schloss Dagstuhl-Leibniz-Zentrum fuer
  Informatik},\ \bibinfo {year} {2019})\BibitemShut {NoStop}%
\bibitem [{\citenamefont {van Apeldoorn}\ and\ \citenamefont
  {Gily{\'e}n}(2019)}]{van2019improvements}%
  \BibitemOpen
  \bibfield  {author} {\bibinfo {author} {\bibfnamefont {J.}~\bibnamefont {van
  Apeldoorn}}\ and\ \bibinfo {author} {\bibfnamefont {A.}~\bibnamefont
  {Gily{\'e}n}},\ }in\ \href@noop {} {\emph {\bibinfo {booktitle} {46th
  International Colloquium on Automata, Languages, and Programming (ICALP
  2019)}}}\ (\bibinfo {organization} {Schloss Dagstuhl-Leibniz-Zentrum fuer
  Informatik},\ \bibinfo {year} {2019})\BibitemShut {NoStop}%
\bibitem [{\citenamefont {Kerenidis}\ and\ \citenamefont
  {Prakash}(2020{\natexlab{a}})}]{kerenidis2020quantum}%
  \BibitemOpen
  \bibfield  {author} {\bibinfo {author} {\bibfnamefont {I.}~\bibnamefont
  {Kerenidis}}\ and\ \bibinfo {author} {\bibfnamefont {A.}~\bibnamefont
  {Prakash}},\ }\href@noop {} {\bibfield  {journal} {\bibinfo  {journal}
  {Physical Review A}\ }\textbf {\bibinfo {volume} {101}},\ \bibinfo {pages}
  {022316} (\bibinfo {year} {2020}{\natexlab{a}})}\BibitemShut {NoStop}%
\bibitem [{\citenamefont {Childs}\ \emph {et~al.}(2017)\citenamefont {Childs},
  \citenamefont {Kothari},\ and\ \citenamefont {Somma}}]{Childs2017Quantum}%
  \BibitemOpen
  \bibfield  {author} {\bibinfo {author} {\bibfnamefont {A.~M.}\ \bibnamefont
  {Childs}}, \bibinfo {author} {\bibfnamefont {R.}~\bibnamefont {Kothari}}, \
  and\ \bibinfo {author} {\bibfnamefont {R.~D.}\ \bibnamefont {Somma}},\
  }\href@noop {} {\bibfield  {journal} {\bibinfo  {journal} {Siam Journal on
  Computing}\ }\textbf {\bibinfo {volume} {46}} (\bibinfo {year}
  {2017})}\BibitemShut {NoStop}%
\bibitem [{\citenamefont {Gray}\ \emph {et~al.}(2006)\citenamefont {Gray} \emph
  {et~al.}}]{Toe}%
  \BibitemOpen
  \bibfield  {author} {\bibinfo {author} {\bibfnamefont {R.~M.}\ \bibnamefont
  {Gray}} \emph {et~al.},\ }\href@noop {} {\bibfield  {journal} {\bibinfo
  {journal} {Foundations and Trends{\textregistered} in Communications and
  Information Theory}\ }\textbf {\bibinfo {volume} {2}},\ \bibinfo {pages}
  {155} (\bibinfo {year} {2006})}\BibitemShut {NoStop}%
\bibitem [{\citenamefont {Henriques}\ \emph {et~al.}(2014)\citenamefont
  {Henriques}, \citenamefont {Caseiro}, \citenamefont {Martins},\ and\
  \citenamefont {Batista}}]{henriques2014high}%
  \BibitemOpen
  \bibfield  {author} {\bibinfo {author} {\bibfnamefont {J.~F.}\ \bibnamefont
  {Henriques}}, \bibinfo {author} {\bibfnamefont {R.}~\bibnamefont {Caseiro}},
  \bibinfo {author} {\bibfnamefont {P.}~\bibnamefont {Martins}}, \ and\
  \bibinfo {author} {\bibfnamefont {J.}~\bibnamefont {Batista}},\ }\href@noop
  {} {\bibfield  {journal} {\bibinfo  {journal} {IEEE transactions on pattern
  analysis and machine intelligence}\ }\textbf {\bibinfo {volume} {37}},\
  \bibinfo {pages} {583} (\bibinfo {year} {2014})}\BibitemShut {NoStop}%
\bibitem [{\citenamefont {Smith}\ \emph {et~al.}(1985)\citenamefont {Smith},
  \citenamefont {Smith},\ and\ \citenamefont {Smith}}]{smith1985numerical}%
  \BibitemOpen
  \bibfield  {author} {\bibinfo {author} {\bibfnamefont {G.~D.}\ \bibnamefont
  {Smith}}, \bibinfo {author} {\bibfnamefont {G.~D.}\ \bibnamefont {Smith}}, \
  and\ \bibinfo {author} {\bibfnamefont {G.~D.~S.}\ \bibnamefont {Smith}},\
  }\href@noop {} {\emph {\bibinfo {title} {Numerical solution of partial
  differential equations: finite difference methods}}}\ (\bibinfo  {publisher}
  {Oxford university press},\ \bibinfo {year} {1985})\BibitemShut {NoStop}%
\bibitem [{\citenamefont {Chan}\ and\ \citenamefont {Ng}(1996)}]{CGT}%
  \BibitemOpen
  \bibfield  {author} {\bibinfo {author} {\bibfnamefont {R.~H.}\ \bibnamefont
  {Chan}}\ and\ \bibinfo {author} {\bibfnamefont {M.~K.}\ \bibnamefont {Ng}},\
  }\href@noop {} {\bibfield  {journal} {\bibinfo  {journal} {SIAM review}\
  }\textbf {\bibinfo {volume} {38}},\ \bibinfo {pages} {427} (\bibinfo {year}
  {1996})}\BibitemShut {NoStop}%
\bibitem [{\citenamefont {Ye}\ \emph {et~al.}(2018)\citenamefont {Ye},
  \citenamefont {Han},\ and\ \citenamefont {Cha}}]{ye2018deep}%
  \BibitemOpen
  \bibfield  {author} {\bibinfo {author} {\bibfnamefont {J.~C.}\ \bibnamefont
  {Ye}}, \bibinfo {author} {\bibfnamefont {Y.}~\bibnamefont {Han}}, \ and\
  \bibinfo {author} {\bibfnamefont {E.}~\bibnamefont {Cha}},\ }\href@noop {}
  {\bibfield  {journal} {\bibinfo  {journal} {SIAM Journal on Imaging
  Sciences}\ }\textbf {\bibinfo {volume} {11}},\ \bibinfo {pages} {991}
  (\bibinfo {year} {2018})}\BibitemShut {NoStop}%
\bibitem [{\citenamefont {Haupt}\ \emph {et~al.}(2010)\citenamefont {Haupt},
  \citenamefont {Bajwa}, \citenamefont {Raz},\ and\ \citenamefont
  {Nowak}}]{haupt2010toeplitz}%
  \BibitemOpen
  \bibfield  {author} {\bibinfo {author} {\bibfnamefont {J.}~\bibnamefont
  {Haupt}}, \bibinfo {author} {\bibfnamefont {W.~U.}\ \bibnamefont {Bajwa}},
  \bibinfo {author} {\bibfnamefont {G.}~\bibnamefont {Raz}}, \ and\ \bibinfo
  {author} {\bibfnamefont {R.}~\bibnamefont {Nowak}},\ }\href@noop {}
  {\bibfield  {journal} {\bibinfo  {journal} {IEEE transactions on information
  theory}\ }\textbf {\bibinfo {volume} {56}},\ \bibinfo {pages} {5862}
  (\bibinfo {year} {2010})}\BibitemShut {NoStop}%
\bibitem [{\citenamefont {Van~Apeldoorn}\ \emph {et~al.}(2017)\citenamefont
  {Van~Apeldoorn}, \citenamefont {Gily{\'e}n}, \citenamefont {Gribling},\ and\
  \citenamefont {de~Wolf}}]{van2017quantum}%
  \BibitemOpen
  \bibfield  {author} {\bibinfo {author} {\bibfnamefont {J.}~\bibnamefont
  {Van~Apeldoorn}}, \bibinfo {author} {\bibfnamefont {A.}~\bibnamefont
  {Gily{\'e}n}}, \bibinfo {author} {\bibfnamefont {S.}~\bibnamefont
  {Gribling}}, \ and\ \bibinfo {author} {\bibfnamefont {R.}~\bibnamefont
  {de~Wolf}},\ }in\ \href@noop {} {\emph {\bibinfo {booktitle} {Foundations of
  Computer Science (FOCS), 2017 IEEE 58th Annual Symposium on}}}\ (\bibinfo
  {organization} {IEEE},\ \bibinfo {year} {2017})\ pp.\ \bibinfo {pages}
  {403--414}\BibitemShut {NoStop}%
\bibitem [{\citenamefont {Kerenidis}\ \emph {et~al.}(2019)\citenamefont
  {Kerenidis}, \citenamefont {Landman}, \citenamefont {Luongo},\ and\
  \citenamefont {Prakash}}]{kerenidis2019q}%
  \BibitemOpen
  \bibfield  {author} {\bibinfo {author} {\bibfnamefont {I.}~\bibnamefont
  {Kerenidis}}, \bibinfo {author} {\bibfnamefont {J.}~\bibnamefont {Landman}},
  \bibinfo {author} {\bibfnamefont {A.}~\bibnamefont {Luongo}}, \ and\ \bibinfo
  {author} {\bibfnamefont {A.}~\bibnamefont {Prakash}},\ }in\ \href@noop {}
  {\emph {\bibinfo {booktitle} {Advances in Neural Information Processing
  Systems}}}\ (\bibinfo {year} {2019})\ pp.\ \bibinfo {pages}
  {4136--4146}\BibitemShut {NoStop}%
\bibitem [{\citenamefont {Shao}(2019)}]{shao2019quantum}%
  \BibitemOpen
  \bibfield  {author} {\bibinfo {author} {\bibfnamefont {C.}~\bibnamefont
  {Shao}},\ }\href@noop {} {\bibfield  {journal} {\bibinfo  {journal} {Journal
  of Physics A: Mathematical and Theoretical}\ } (\bibinfo {year}
  {2019})}\BibitemShut {NoStop}%
\bibitem [{\citenamefont {Kerenidis}\ and\ \citenamefont
  {Prakash}(2020{\natexlab{b}})}]{kerenidis2020quantumACM}%
  \BibitemOpen
  \bibfield  {author} {\bibinfo {author} {\bibfnamefont {I.}~\bibnamefont
  {Kerenidis}}\ and\ \bibinfo {author} {\bibfnamefont {A.}~\bibnamefont
  {Prakash}},\ }\href@noop {} {\bibfield  {journal} {\bibinfo  {journal} {ACM
  Transactions on Quantum Computing}\ }\textbf {\bibinfo {volume} {1}},\
  \bibinfo {pages} {1} (\bibinfo {year} {2020}{\natexlab{b}})}\BibitemShut
  {NoStop}%
\bibitem [{\citenamefont {Grover}\ and\ \citenamefont
  {Rudolph}(2002)}]{prepare}%
  \BibitemOpen
  \bibfield  {author} {\bibinfo {author} {\bibfnamefont {L.}~\bibnamefont
  {Grover}}\ and\ \bibinfo {author} {\bibfnamefont {T.}~\bibnamefont
  {Rudolph}},\ }\href@noop {} {\bibfield  {journal} {\bibinfo  {journal} {arXiv
  preprint quant-ph/0208112}\ } (\bibinfo {year} {2002})}\BibitemShut {NoStop}%
\bibitem [{\citenamefont {Soklakov}\ and\ \citenamefont
  {Schack}(2006)}]{prepare1}%
  \BibitemOpen
  \bibfield  {author} {\bibinfo {author} {\bibfnamefont {A.~N.}\ \bibnamefont
  {Soklakov}}\ and\ \bibinfo {author} {\bibfnamefont {R.}~\bibnamefont
  {Schack}},\ }\href@noop {} {\bibfield  {journal} {\bibinfo  {journal}
  {Physical Review A}\ }\textbf {\bibinfo {volume} {73}},\ \bibinfo {pages}
  {012307} (\bibinfo {year} {2006})}\BibitemShut {NoStop}%
\bibitem [{\citenamefont {Berry}\ and\ \citenamefont
  {Childs}(2012)}]{berry2012black}%
  \BibitemOpen
  \bibfield  {author} {\bibinfo {author} {\bibfnamefont {D.~W.}\ \bibnamefont
  {Berry}}\ and\ \bibinfo {author} {\bibfnamefont {A.~M.}\ \bibnamefont
  {Childs}},\ }\href@noop {} {\bibfield  {journal} {\bibinfo  {journal}
  {Quantum Information \& Computation}\ }\textbf {\bibinfo {volume} {12}},\
  \bibinfo {pages} {29} (\bibinfo {year} {2012})}\BibitemShut {NoStop}%
\bibitem [{\citenamefont {Sanders}\ \emph {et~al.}(2019)\citenamefont
  {Sanders}, \citenamefont {Low}, \citenamefont {Scherer},\ and\ \citenamefont
  {Berry}}]{sanders2019black}%
  \BibitemOpen
  \bibfield  {author} {\bibinfo {author} {\bibfnamefont {Y.~R.}\ \bibnamefont
  {Sanders}}, \bibinfo {author} {\bibfnamefont {G.~H.}\ \bibnamefont {Low}},
  \bibinfo {author} {\bibfnamefont {A.}~\bibnamefont {Scherer}}, \ and\
  \bibinfo {author} {\bibfnamefont {D.~W.}\ \bibnamefont {Berry}},\ }\href@noop
  {} {\bibfield  {journal} {\bibinfo  {journal} {Physical review letters}\
  }\textbf {\bibinfo {volume} {122}},\ \bibinfo {pages} {020502} (\bibinfo
  {year} {2019})}\BibitemShut {NoStop}%
\bibitem [{\citenamefont {Kerenidis}\ and\ \citenamefont
  {Prakash}()}]{kerenidis2016quantum}%
  \BibitemOpen
  \bibfield  {author} {\bibinfo {author} {\bibfnamefont {I.}~\bibnamefont
  {Kerenidis}}\ and\ \bibinfo {author} {\bibfnamefont {A.}~\bibnamefont
  {Prakash}},\ }in\ \href {\doibase 10.4230/LIPIcs.ITCS.2017.49} {\emph
  {\bibinfo {booktitle} {8th Innovations in Theoretical Computer Science
  Conference (ITCS 2017)}}},\ Vol.~\bibinfo {volume} {67},\ pp.\ \bibinfo
  {pages} {49:1--49:21}\BibitemShut {NoStop}%
\bibitem [{\citenamefont {Grover}(2000)}]{grover2000synthesis}%
  \BibitemOpen
  \bibfield  {author} {\bibinfo {author} {\bibfnamefont {L.~K.}\ \bibnamefont
  {Grover}},\ }\href@noop {} {\bibfield  {journal} {\bibinfo  {journal}
  {Physical review letters}\ }\textbf {\bibinfo {volume} {85}},\ \bibinfo
  {pages} {1334} (\bibinfo {year} {2000})}\BibitemShut {NoStop}%
\bibitem [{\citenamefont {Yoder}\ \emph {et~al.}(2014)\citenamefont {Yoder},
  \citenamefont {Low},\ and\ \citenamefont {Chuang}}]{yoder2014fixed}%
  \BibitemOpen
  \bibfield  {author} {\bibinfo {author} {\bibfnamefont {T.~J.}\ \bibnamefont
  {Yoder}}, \bibinfo {author} {\bibfnamefont {G.~H.}\ \bibnamefont {Low}}, \
  and\ \bibinfo {author} {\bibfnamefont {I.~L.}\ \bibnamefont {Chuang}},\
  }\href@noop {} {\bibfield  {journal} {\bibinfo  {journal} {Physical review
  letters}\ }\textbf {\bibinfo {volume} {113}},\ \bibinfo {pages} {210501}
  (\bibinfo {year} {2014})}\BibitemShut {NoStop}%
\bibitem [{\citenamefont {Berry}\ \emph {et~al.}(2015)\citenamefont {Berry},
  \citenamefont {Childs}, \citenamefont {Cleve}, \citenamefont {Kothari},\ and\
  \citenamefont {Somma}}]{berry2015simulating}%
  \BibitemOpen
  \bibfield  {author} {\bibinfo {author} {\bibfnamefont {D.~W.}\ \bibnamefont
  {Berry}}, \bibinfo {author} {\bibfnamefont {A.~M.}\ \bibnamefont {Childs}},
  \bibinfo {author} {\bibfnamefont {R.}~\bibnamefont {Cleve}}, \bibinfo
  {author} {\bibfnamefont {R.}~\bibnamefont {Kothari}}, \ and\ \bibinfo
  {author} {\bibfnamefont {R.~D.}\ \bibnamefont {Somma}},\ }\href@noop {}
  {\bibfield  {journal} {\bibinfo  {journal} {Physical review letters}\
  }\textbf {\bibinfo {volume} {114}},\ \bibinfo {pages} {090502} (\bibinfo
  {year} {2015})}\BibitemShut {NoStop}%
\bibitem [{\citenamefont {Gui-Lu}\ \emph {et~al.}(2009)\citenamefont {Gui-Lu},
  \citenamefont {Yang},\ and\ \citenamefont {Chuan}}]{gui2009allowable}%
  \BibitemOpen
  \bibfield  {author} {\bibinfo {author} {\bibfnamefont {L.}~\bibnamefont
  {Gui-Lu}}, \bibinfo {author} {\bibfnamefont {L.}~\bibnamefont {Yang}}, \ and\
  \bibinfo {author} {\bibfnamefont {W.}~\bibnamefont {Chuan}},\ }\href@noop {}
  {\bibfield  {journal} {\bibinfo  {journal} {Communications in Theoretical
  Physics}\ }\textbf {\bibinfo {volume} {51}},\ \bibinfo {pages} {65} (\bibinfo
  {year} {2009})}\BibitemShut {NoStop}%
\bibitem [{\citenamefont {Yu}\ \emph {et~al.}(2019{\natexlab{c}})\citenamefont
  {Yu}, \citenamefont {Gao}, \citenamefont {Liu}, \citenamefont {Huynh},
  \citenamefont {Reynolds},\ and\ \citenamefont {Wang}}]{yu2019quantum}%
  \BibitemOpen
  \bibfield  {author} {\bibinfo {author} {\bibfnamefont {C.-H.}\ \bibnamefont
  {Yu}}, \bibinfo {author} {\bibfnamefont {F.}~\bibnamefont {Gao}}, \bibinfo
  {author} {\bibfnamefont {C.}~\bibnamefont {Liu}}, \bibinfo {author}
  {\bibfnamefont {D.}~\bibnamefont {Huynh}}, \bibinfo {author} {\bibfnamefont
  {M.}~\bibnamefont {Reynolds}}, \ and\ \bibinfo {author} {\bibfnamefont
  {J.}~\bibnamefont {Wang}},\ }\href@noop {} {\bibfield  {journal} {\bibinfo
  {journal} {Physical Review A}\ }\textbf {\bibinfo {volume} {99}},\ \bibinfo
  {pages} {022301} (\bibinfo {year} {2019}{\natexlab{c}})}\BibitemShut
  {NoStop}%
\bibitem [{\citenamefont {Ambainis}(2012)}]{ambainis2012variable}%
  \BibitemOpen
  \bibfield  {author} {\bibinfo {author} {\bibfnamefont {A.}~\bibnamefont
  {Ambainis}},\ }in\ \href@noop {} {\emph {\bibinfo {booktitle} {STACS'12 (29th
  Symposium on Theoretical Aspects of Computer Science)}}},\ Vol.~\bibinfo
  {volume} {14}\ (\bibinfo {organization} {LIPIcs},\ \bibinfo {year} {2012})\
  pp.\ \bibinfo {pages} {636--647}\BibitemShut {NoStop}%
\bibitem [{\citenamefont {Habermann}(2018)}]{habermann2018explicit}%
  \BibitemOpen
  \bibfield  {author} {\bibinfo {author} {\bibfnamefont {K.}~\bibnamefont
  {Habermann}},\ }\href@noop {} {\bibfield  {journal} {\bibinfo  {journal}
  {arXiv preprint arXiv:1808.02880}\ } (\bibinfo {year} {2018})}\BibitemShut
  {NoStop}%
\bibitem [{\citenamefont {Widom}(1966)}]{widom1966hankel}%
  \BibitemOpen
  \bibfield  {author} {\bibinfo {author} {\bibfnamefont {H.}~\bibnamefont
  {Widom}},\ }\href@noop {} {\bibfield  {journal} {\bibinfo  {journal}
  {Transactions of the American Mathematical Society}\ }\textbf {\bibinfo
  {volume} {121}},\ \bibinfo {pages} {1} (\bibinfo {year} {1966})}\BibitemShut
  {NoStop}%
\bibitem [{\citenamefont {Sch{\"o}nhage}(1985)}]{schonhage1985quasi}%
  \BibitemOpen
  \bibfield  {author} {\bibinfo {author} {\bibfnamefont {A.}~\bibnamefont
  {Sch{\"o}nhage}},\ }\href@noop {} {\bibfield  {journal} {\bibinfo  {journal}
  {Journal of Complexity}\ }\textbf {\bibinfo {volume} {1}},\ \bibinfo {pages}
  {118} (\bibinfo {year} {1985})}\BibitemShut {NoStop}%
\bibitem [{\citenamefont {Haykin}(2005)}]{haykin2005adaptive}%
  \BibitemOpen
  \bibfield  {author} {\bibinfo {author} {\bibfnamefont {S.~S.}\ \bibnamefont
  {Haykin}},\ }\href@noop {} {\emph {\bibinfo {title} {Adaptive filter
  theory}}}\ (\bibinfo  {publisher} {Pearson Education India},\ \bibinfo {year}
  {2005})\BibitemShut {NoStop}%
\bibitem [{\citenamefont {Aharonov}\ \emph {et~al.}(2009)\citenamefont
  {Aharonov}, \citenamefont {Jones},\ and\ \citenamefont
  {Landau}}]{aharonov2009polynomial}%
  \BibitemOpen
  \bibfield  {author} {\bibinfo {author} {\bibfnamefont {D.}~\bibnamefont
  {Aharonov}}, \bibinfo {author} {\bibfnamefont {V.}~\bibnamefont {Jones}}, \
  and\ \bibinfo {author} {\bibfnamefont {Z.}~\bibnamefont {Landau}},\
  }\href@noop {} {\bibfield  {journal} {\bibinfo  {journal} {Algorithmica}\
  }\textbf {\bibinfo {volume} {55}},\ \bibinfo {pages} {395} (\bibinfo {year}
  {2009})}\BibitemShut {NoStop}%
\bibitem [{\citenamefont {Wu}\ \emph {et~al.}(2021)\citenamefont {Wu},
  \citenamefont {Ray}, \citenamefont {Zhao}, \citenamefont {Sun},\ and\
  \citenamefont {Rebentrost}}]{wu2021quantum}%
  \BibitemOpen
  \bibfield  {author} {\bibinfo {author} {\bibfnamefont {B.}~\bibnamefont
  {Wu}}, \bibinfo {author} {\bibfnamefont {M.}~\bibnamefont {Ray}}, \bibinfo
  {author} {\bibfnamefont {L.}~\bibnamefont {Zhao}}, \bibinfo {author}
  {\bibfnamefont {X.}~\bibnamefont {Sun}}, \ and\ \bibinfo {author}
  {\bibfnamefont {P.}~\bibnamefont {Rebentrost}},\ }\href@noop {} {\bibfield
  {journal} {\bibinfo  {journal} {Physical Review A}\ }\textbf {\bibinfo
  {volume} {103}},\ \bibinfo {pages} {042422} (\bibinfo {year}
  {2021})}\BibitemShut {NoStop}%
\bibitem [{\citenamefont {Shende}\ \emph {et~al.}(2006)\citenamefont {Shende},
  \citenamefont {Bullock},\ and\ \citenamefont {Markov}}]{shende2006synthesis}%
  \BibitemOpen
  \bibfield  {author} {\bibinfo {author} {\bibfnamefont {V.~V.}\ \bibnamefont
  {Shende}}, \bibinfo {author} {\bibfnamefont {S.~S.}\ \bibnamefont {Bullock}},
  \ and\ \bibinfo {author} {\bibfnamefont {I.~L.}\ \bibnamefont {Markov}},\
  }\href@noop {} {\bibfield  {journal} {\bibinfo  {journal} {IEEE Transactions
  on Computer-Aided Design of Integrated Circuits and Systems}\ }\textbf
  {\bibinfo {volume} {25}},\ \bibinfo {pages} {1000} (\bibinfo {year}
  {2006})}\BibitemShut {NoStop}%
\bibitem [{\citenamefont {Tang}(2019)}]{tang2019quantum}%
  \BibitemOpen
  \bibfield  {author} {\bibinfo {author} {\bibfnamefont {E.}~\bibnamefont
  {Tang}},\ }in\ \href@noop {} {\emph {\bibinfo {booktitle} {Proceedings of the
  51st Annual ACM SIGACT Symposium on Theory of Computing}}}\ (\bibinfo {year}
  {2019})\ pp.\ \bibinfo {pages} {217--228}\BibitemShut {NoStop}%
\bibitem [{\citenamefont {Cuccaro}\ \emph {et~al.}(2004)\citenamefont
  {Cuccaro}, \citenamefont {Draper}, \citenamefont {Kutin},\ and\ \citenamefont
  {Moulton}}]{cuccaro2004new}%
  \BibitemOpen
  \bibfield  {author} {\bibinfo {author} {\bibfnamefont {S.~A.}\ \bibnamefont
  {Cuccaro}}, \bibinfo {author} {\bibfnamefont {T.~G.}\ \bibnamefont {Draper}},
  \bibinfo {author} {\bibfnamefont {S.~A.}\ \bibnamefont {Kutin}}, \ and\
  \bibinfo {author} {\bibfnamefont {D.~P.}\ \bibnamefont {Moulton}},\
  }\href@noop {} {\bibfield  {journal} {\bibinfo  {journal} {arXiv preprint
  quant-ph/0410184}\ } (\bibinfo {year} {2004})}\BibitemShut {NoStop}%
\bibitem [{\citenamefont {Li}\ \emph {et~al.}(2020)\citenamefont {Li},
  \citenamefont {Fan}, \citenamefont {Xia}, \citenamefont {Peng},\ and\
  \citenamefont {Long}}]{li2020efficient}%
  \BibitemOpen
  \bibfield  {author} {\bibinfo {author} {\bibfnamefont {H.-S.}\ \bibnamefont
  {Li}}, \bibinfo {author} {\bibfnamefont {P.}~\bibnamefont {Fan}}, \bibinfo
  {author} {\bibfnamefont {H.}~\bibnamefont {Xia}}, \bibinfo {author}
  {\bibfnamefont {H.}~\bibnamefont {Peng}}, \ and\ \bibinfo {author}
  {\bibfnamefont {G.-L.}\ \bibnamefont {Long}},\ }\href@noop {} {\bibfield
  {journal} {\bibinfo  {journal} {SCIENCE CHINA Physics, Mechanics \&
  Astronomy}\ }\textbf {\bibinfo {volume} {63}},\ \bibinfo {pages} {1}
  (\bibinfo {year} {2020})}\BibitemShut {NoStop}%
\bibitem [{\citenamefont {Long}(2001)}]{long2001grover}%
  \BibitemOpen
  \bibfield  {author} {\bibinfo {author} {\bibfnamefont {G.-L.}\ \bibnamefont
  {Long}},\ }\href@noop {} {\bibfield  {journal} {\bibinfo  {journal} {Physical
  Review A}\ }\textbf {\bibinfo {volume} {64}},\ \bibinfo {pages} {022307}
  (\bibinfo {year} {2001})}\BibitemShut {NoStop}%
\end{thebibliography}%

\end{document}